\newtheorem{definition}{Definition}
\newtheorem{example}{Example}
\newcommand{\calX}{\mathcal{X}}
\newcommand{\choice}{\mathcal{C}}
\title{Capacity Constraints Make Admissions Processes Less Predictable}
\author {
    Evan Dong\textsuperscript{\rm 1},
    Nikhil Garg\textsuperscript{\rm 1},
    Sarah Dean\textsuperscript{\rm 1}
}
\begin{document}

\maketitle

\begin{abstract}
Machine learning models are often used to make predictions about admissions process outcomes, such as for colleges or jobs.  However, such decision processes differ substantially from the conventional machine learning paradigm. Because admissions decisions are \textit{capacity-constrained}, whether a student is admitted depends on the other applicants who apply. We show how this dependence affects predictive performance even in otherwise ideal settings. Theoretically, we introduce two concepts that characterize the relationship between admission function properties, machine learning representation, and generalization to applicant pool distribution shifts: \textit{instability}, which measures how many existing decisions can change when a single new applicant is introduced; and \textit{variability}, which measures the number of unique students whose decisions can change. Empirically, we illustrate our theory on individual-level admissions data from the New York City high school matching system, showing that machine learning performance degrades as the applicant pool increasingly differs from the training data. Furthermore, there are larger performance drops for schools using decision rules that are more unstable and variable. Our work raises questions about the reliability of predicting individual admissions probabilities.
\end{abstract}

\section{Introduction}

It is common to use statistical models based on historical data to make predictions about the outcomes of applications to selective programs.
Such models are often used by decision makers, as in the case of recruiters using resume filters or college admissions officers ranking candidates for review.
They are also used by applicants to understand their own admissions chances, influencing application decisions.
The use of machine learning (ML) in these settings is subject to substantial scrutiny, stemming from a combination of ethical concerns and from the limitations of ML, such as regarding bias, data limitations, and lack of transparency.
Admissions decisions are high stakes, so errors have significant consequences; on the other hand, computational tools may be useful for both applicants and decision makers.

In this paper, we identify and analyze a new source of trouble for machine learning as classically used in the context of admissions.
Admissions are inherently capacity constrained---there are only so many seats in class.
As a result, the decision to admit one applicant inherently affects the chances of others in the applicant pool. 
This phenomenon of \emph{cohort-dependence} clashes with the independence assumptions and formulation of machine learning.
We show that the extent of this dissonance relates to properties of the decision process, and may be worse for programs that seek to admit diverse or balanced cohorts of applicants. In short, historical application decisions -- that compose the training set -- were functions of historical applicant pools, in ways that are not explicitly modeled in traditional machine learning. As we show, then, even small applicant pool changes can change true outcomes -- applicants who would have truly been accepted historically may be rejected in the future, and vice versa. Our contributions are as follows. 

\textbf{(1)}
We introduce formal properties of admission decisions which affect the difficulty of generalization (\Cref{sec:theory-definitions});
\textbf{(2)} we present a characterization which bridges the theory of choice functions with machine learning (\Cref{sec:theory-results});
\textbf{(3)} We illustrate our theory using admissions data from the New York City high school matching system (\Cref{sec:empirics}).
Before diving in, we begin by reviewing related work (\Cref{sec:related-work}) and introducing the problem setting, notation, and background (\Cref{sec:setting}). We conclude with a discussion of implications in \Cref{sec:discussion}.

\section{Related Work} \label{sec:related-work}

\paragraph{Machine Learning for Admissions}

Many works apply statistical models and machine learning to undergraduate admissions \citep{bruggink1996statistical,lux2016applications,neda2022feasibility,kiaghadi2023university,lee2023augmenting}, and similar lines of work for graduate admissions \citep{moore1998expert,waters2014grade,gupta2016will,staudaher2020predicting} and medical residency \citep{rees2023machine}. For example, see \citet{lee2024ending} and the references therein.
Many of these works primarily demonstrate the possibility of modeling admissions outcomes with machine learning.
Another motivation is auditing for bias, where statistical models of admission probability have played an important role, for example in the Supreme Court case \emph{Students for Fair Admissions v. Harvard} 
\citep{sffaamicus}.

In other cases, statistical models are geared towards deployment, though generally not to automate the entire decision process.
\citet{lee2023evaluating} describe a highly-selective college in the United States that uses a predictive model to provide coarse estimates that are used to group and rank the order of students' applications for human review.
Models are also deployed to give guidance to the applicants, for example, to universities \citep{sirolly2024impact,collegevine}. 
 
In the context of our empirical illustration, the New York City Department of Education deployed a predictive model in the 2024-2025 application year to assist students in assessing their odds of admission to high schools \cite{ShenBerro2024HighSchoolAdmissions}.
It is important to note that the model -- which displays coarse outcome predictions (low, medium, high) -- does \textit{not} use machine learning, but rather directly simulates the (known) choice functions in use by schools. Importantly, this approach does \textit{not} face the challenges we characterize in this work, since simulating the choice functions does take into account other applicants, for various assumptions on what the applicant pool would be. Such an approach is one path forward for prediction in admissions settings as opposed to standard machine learning approaches directly, though may not be feasible in settings where the choice functions are not directly known and can be simulated -- for example, in this context, some schools use covariates unavailable to researchers, and so some machine learning component may be necessary. Our work evaluates the challenges of such a potential approach.

\paragraph{Pitfalls of Social Prediction}

Prediction of social outcomes, of which admissions is one example, can be fraught.
In the context of education, \citet{perdomo2025difficult} argue that personalized predictions of high school graduation in Wisconsin do not outperform high school-level predictions and interventions.
\citet{liu2023reimagining} discuss the mismatch between supervised machine learning problem formulation and actual educator needs. \citet{raji2022fallacy} argue that AI models are often erroneously assumed to be effective, when in reality they simply do not work as advertised. \citet{wang2024against} provide a list of reasons why predictive machine learning modeling may fail within their normative argument against `predictive optimization' -- for example, distribution shifts induce fundamental limits to prediction. To this literature, we contribute a novel \textit{explanatory mechanism} driving poor predictive performance in admissions settings: cohort-dependence induced distribution shifts. 

A large literature benchmarks (poor) performance of machine learning algorithms under distribution shift, and aims to develop distributionally robust algorithms \citep{gardner2023benchmarking,koh2021wilds,yao2022wild}. Such approaches often (implicitly or explicitly) specify a model of how distributions can shift at test time \cite{kaurmodeling}. To our knowledge, these approaches stay within the ML paradigm of modeling decisions as independent; our results further suggest that developing robust predictors in ML settings requires modeling how decisions depend on entire applicant pools.

Other recent works also study independence in decision-making in machine learning model. Most related, \citet{dongdiscretization2025} challenges independent decision-making -- in the context of discretizing continuous scores for demographic imputation (e.g., race prediction); both works suggest that decisions for an individual data point should depend on the entire cohort (sample), but the rationales differ: in this work, the \textit{true} decisions for an applicant depend on inference-time application pools; in \citet{dongdiscretization2025}, the true labels (e.g., the demographics of an individual) are fixed, but independent decision-making biases the imputed label distribution and downstream tasks.

Finally, we note that a recent line of work studies the pitfalls (and potential benefits) of algorithmic monoculture, in which multiple machine learning models make correlated decisions \textit{for the same data point} \cite{kleinberg2021algorithmic,bommasani2022pickingpersondoesalgorithmic, creel2022algorithmic, toups2023ecosystem, jain2023algorithmic,peng2024monoculture,peng2024wisdomfoolishnessnoisymatching, jain2024scarce,kim2025correlated}; in contrast, this work studies the dependence of true labels \textit{across} inference-time datapoints.  

\paragraph{Formal Models of Choice and Admission}
The formalization of admissions processes ties into a long and broad economic literature on choice functions, preferences, and behavior \citep{arrow1959rational,sen1971choice,moulin1985choice,kalai2002rationalizing}. The basic structure of choice functions have formed the basis for voting theory, utility theory, matching markets, and more.
Much of the seminal choice function literature focuses on narrowing the vast space of possible choice functions by suggesting desirable properties and proving relations between them.

A related literature on affirmative action studies choice functions subject to diversity constraints---often assuming a given ordering of preferences over students \citep{echenique2015control,celebi2023diversity,arnosti2024explainable}. Much of this literature focuses on improving the definition of constraints or designing better mechanisms for implementing preferences subject to the constraints. 
In comparison, our work treats the choice function as a given part of the data generating process, and we explore questions of representation via machine learning. We relate representation ability to classic properties in economic models, such as substitutability. 

\section{Model and Setting} \label{sec:setting}

\subsection{Machine Learning}

Data from admissions processes are often used for supervised machine learning as follows:  each applicant is a data point and admissions decisions are labels. 
Applicants are represented by the features $x\in\mathcal X$ on their application, and labels $y\in\{0,1\}$ are binary (accept or reject).
Then the relevant task is binary classification, i.e., generating a model
${f}: \calX \to \{0, 1\}$ using data $X = \{x_1, x_2, ..., x_n\}$ and $Y = \{y_1, y_2, ..., y_n\}$ from past admissions processes.

Training a machine learning model means selecting a $f$ from a set of possible models, i.e., a hypothesis class.
Generally, this is done by finding a model which achieves low error on available data (a train set).
It is typical to use continuous optimization algorithms to first train a continuous \emph{score} function $s:\mathcal X\to [0,1]$ by minimizing the error. 
Then the classification model is defined as 
\begin{align}\label{eq:ml-independent}
    f(x) = \mathbf 1\{s(x)\geq t\}
\end{align} based on a fixed threshold value $t$, often $t=0.5$. 
Notice that models of the form~\eqref{eq:ml-independent} predict outcomes independently for each applicant.
At test time inference, they do not consider the overall applicant pool -- they implicitly model the training time pool, through its effect on the training labels.

\subsection{Choice Functions}

Many admissions processes have capacity constraints (either perceived or actual). This fact gives rise to what we call \textit{cohort-dependence}: an applicant's acceptance decision depends on not only their own characteristics, but also the number and characteristics of other applicants.

\textit{Choice functions} formalize admission decision processes and allow modeling such dependence. 
A choice function describes the behavior of ``choosing'' elements from a set of available options, and is commonly used in economic models. Formally, a choice function over the universe $\calX$ is a function $\choice: 2^\calX \to 2^\calX$ that maps all subsets $X$ of $\calX$ to the accepted set $\choice(X)$, where $\choice(X) \subseteq X$. As our motivation focuses on real applicant datasets, we consider finite subsets. 

Recalling admissions data, the binary admissions label $y$ for student $x$ is 1 if $x \in \choice(X)$ and 0 otherwise.
To aid in our discussion, we will slightly abuse notation to relate pairs $(x_i,y_i)$ to the choice function $\mathcal C(X)$ where $x_i\in X$. 
Define $\mathcal C(X)_i$ as 1 if $x_i\in \mathcal C(X)$ and 0 otherwise; this allows us to write the label $y_i=\mathcal C(X)_i$.
Notice that the dependence of labels on the applicant pool introduces a particular form of distribution shift called ``concept drift,'' where the conditional distribution $y | x$ changes over time.\footnote{This contrasts with ``covariate shift'', where new datasets contain shift $x$ distributions, but maintain the same relationship $y | x$.} This occurs even when the choice function itself is held constant.

A basic property of choice functions we consider is that they are capacity constrained to accept at most $q$ applicants.

\begin{restatable}[$q$-Acceptance]{definition}{qacceptance} 
\label{def:q-acceptance}
Choice function $\choice$ is $q$-acceptant if $|\choice(X)| = \min\{q, |X|\}$ for all input sets $X$.
\end{restatable}
\noindent $q$-acceptant choice functions accept as many applicants as possible, up to $q$ elements.
Some (but not all) choice functions can be characterized as a total order. 

\begin{definition}[Total order] \label{def:total_order}
    A \emph{total order} is a transitive, asymmetric, and complete binary relation $\succ$ over elements of $\calX$; that is, $x_1 \succ x_2$ implies $x_2 \not\succ x_1$, $x_1 \succ x_2$ and $x_2 \succ x_3$ implies $x_1 \succ x_3$, and all $x_1, x_2$ are comparable.
\end{definition}

Total orders are frequently used in modeling choice mechanisms; they are often called \textit{preference orderings} or \textit{priority orders} in economic models. 
Intuitively, total orders induce a ranking over candidates and could result from, e.g., ordering applicants based on their GPA.
We will say that a $q$-acceptant choice function $\mathcal C$ is \textit{characterized by a total order} if there exists a total order $\succ$ such that for any input $X$, the accepted applicants are the top-$q$ elements of $X$ ordered by $\succ$.
In other words,  $x_1 \succ x_2$ for all $x_1 \in \choice(X), x_2 \not\in\choice(X)$. 
We will refer to the process of sorting elements according to a total order $\succ$ and selecting the top $q$ as a \textit{queue}. 
Not all choice functions are characterized by total orders; consider a program which admits the $\frac{q}{2}$ applicants with the highest English scores and $\frac{q}{2}$ of the rest with the highest Math scores.

\section{Theoretical Analysis}\label{sec:theory}

We now present a theoretical framework for understanding why admission decisions are hard to predict with ML models, due to capacity constraint-induced dependence. We define two choice function properties: \emph{instability} (the true labels of multiple data points can change when a new data point is added) and \emph{variability} (\textit{which} labels change can depend on the new data point), and characterize choice functions according to these properties. Proofs are in \Cref{sec:proofs-appendix}.

\subsection{Instability and Variability} \label{sec:theory-definitions}

To characterize the instability of a choice function, we will measure the number of decisions which change as the applicant pool expands. 
To that end, we first define a distance between sets of applicants induced by the choice function.

\begin{restatable}[Choice Distance]{definition}{distance} \label{def:distance}
For a choice function $\choice$ and sets $X_1 \subseteq X_2$, we define the choice distance as:
\begin{equation}
r_\choice(X_1, X_2) := |X_1 \cap \choice(X_2) \setminus \choice(X_1)| + |\choice(X_1) \setminus \choice(X_2)|
\end{equation}
\end{restatable}

This distance\footnote{We use the term ``distance'' colloquially here; this definition does not fulfill the criteria to be a proper distance metric.} measures the number of decisions that \textit{change} when expanding the candidate pool from $X_1$ to $X_2$. The first term counts rejections that become acceptances; the second counts acceptances that become rejections.

\begin{restatable}[$d$-Instability]{definition}{defstability} 
\label{def:d-instability}
Choice function $\choice$ is $d$-unstable if for all $X$, and $X' = X \cup \{x'\}$, we have $r_\choice(X, X') \leq d$ for any nonnegative integer $d$. $\choice$ is called tightly $d$-unstable if it is $d$-unstable but not $(d-1)$-unstable.
\end{restatable}

Intuitively similar to a Lipschitz bound, note that because $r_\choice$ obeys the triangle inequality (\Cref{thm:triangle-ineq}), $d$-instability also guarantees at most $dk$ changes when adding $k$ elements.
This definition focuses on changes only to \textit{existing applicants'} decisions. Unlike in conventional statistical learning, this quantity does not relate to performance on new data points, but rather, isolates the effect of new data points on \textit{existing} data points---an effect not traditionally considered.

While instability characterizes how many chosen elements change when perturbing the input set, we also consider \textit{which} elements change.
Even for a 1-unstable choice function, there is still substantial diversity in \textit{which} decision might change. 
For example, in a school that simply selects the $q$ students with the highest GPA, the same ``borderline'' applicant is displaced regardless of who the new (higher GPA) applicant is. 
In contrast, a school that selects half of their students based on an English test score and the other half based on Math would reject different students depending whether the new applicant excels at English or at Math.

We formalize\footnote{For a more general definition of variability (beyond 1-unstable and $q$-acceptant), see \Cref{sec:variability-appendix}.} this intuition with the following definition.

\begin{restatable}[Variability]{definition}{defvariability} 
\label{def:variability}
A $1$-unstable, $q$-acceptant choice function $\choice$ has variability $m$ where
\begin{equation}
m \coloneqq \max_{X\subset \calX} \left|\bigcup\limits_{x'\in\calX} \choice(X) \setminus \choice(X \cup \{x'\})\right|
\end{equation}
\end{restatable}
\noindent Variability bounds how many \textit{different} currently accepted candidates could be displaced by adding any single new candidate. As we empirically demonstrate, these notions capture how sensitive decisions are due to applicant pool shifts. 

\subsection{Main Theoretical Results} \label{sec:theory-results}

We first relate the concepts of instability and variability to the practice of machine learning.
We focus on the representation capabilities of ML models: when is it possible for an ML model to faithfully represent an admissions process?
Formally, we say that an ML model can represent a choice function $\mathcal C$ if there exists a function $f$ such that for all applicant sets $X$, $f(x_i) = \mathcal C(X)_i$.
{Note that by focusing on representation, our theory is distribution agnostic, and applies to settings where applicant distributions are affected considerations such as strategic behavior and matching algorithms.}

Before stating this result, consider a straightforward way to adapt a model~\eqref{eq:ml-independent} to the capacity-constrained setting.
Given that the ML model is based on continuous scores, we can adjust the decisions to account for a known capacity constraint if we have available the set of all test-time applications.
In particular, we can rank the applicants by their scores and then select the top $q$.
This process results in decisions determined by a cohort dependent threshold $t_q(X)$ which is the score of the $q$th ranked applicant:
\begin{align}\label{eq:ml-rank}
    f(x; X) = \mathbf 1\{s(x) \geq t_q(X)\}.
\end{align}

\begin{restatable}[ML Representation]{proposition}{mlrepresentation} 
\label{prop:ml-representation}
A model of the form~\eqref{eq:ml-independent} which makes independent predictions can only represent 0-unstable choice functions.
A model of the form~\eqref{eq:ml-rank} which ranks applicants can represent a 1-variable, 1-unstable choice function. No such model can represent a choice function with variability or instability greater than one.
\end{restatable}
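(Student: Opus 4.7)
The plan is to prove each of the three sentences in turn; all three follow by analyzing what induced choice functions the two model forms can realize.

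For the first sentence, suppose a model $f$ of form~\eqref{eq:ml-independent} represents $\mathcal{C}$, meaning $\mathcal{C}(X)_i = f(x_i)$ for every $X$ and every $x_i\in X$. Since $f(x_i)$ has no dependence on $X$, this forces $\mathcal{C}(X_1)_i = \mathcal{C}(X_2)_i$ for every $X_1\subseteq X_2$ and every $x_i\in X_1$. Plugging into Definition~\ref{def:distance} gives $r_{\mathcal{C}}(X_1,X_2)=0$, so $\mathcal{C}$ is $0$-unstable.

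For the second sentence I would exhibit a single example. Take any $q$-acceptant choice function $\mathcal{C}$ characterized by a total order $\succ$ (the queue from Section~\ref{sec:setting}), and pick any $s:\mathcal{X}\to[0,1]$ that respects $\succ$ (e.g. a rank-based score). Then the top-$q$ elements of $X$ under $\succ$ coincide with $\{x\in X: s(x)\geq t_q(X)\}$, so form~\eqref{eq:ml-rank} represents $\mathcal{C}$. A short check verifies $\mathcal{C}$ is $1$-unstable (adding $x'$ either changes no existing decisions or swaps the $\succ$-minimum of $\mathcal{C}(X)$ for $x'$) and $1$-variable (the displaced element is always that same $\succ$-minimum, independent of $x'$).

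The third sentence is the substantive one and where the main effort will go. I would show that every $f$ of form~\eqref{eq:ml-rank} induces a choice function $\widetilde{\mathcal{C}}(X):=\{x\in X: s(x)\geq t_q(X)\}$ that is \emph{necessarily} at most $1$-unstable and at most $1$-variable; then any $\mathcal{C}$ exceeding either bound cannot be represented. Fix $X$ and consider $X'=X\cup\{x'\}$. If $s(x')<t_q(X)$, the $q$-th largest score is unchanged and the top-$q$ of $X'$ agrees with the top-$q$ of $X$ on $X$, so no existing decision flips. If $s(x')\geq t_q(X)$, then $x'$ enters the new top-$q$ and exactly the previous $q$-th-ranked element of $X$ drops out, for a single decision change; hence $\widetilde{\mathcal{C}}$ is $1$-unstable. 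Moreover, the identity of the displaced element is determined by $s$ and $X$ alone, independent of which $x'$ was added, so the union in Definition~\ref{def:variability} has cardinality at most $1$ and $\widetilde{\mathcal{C}}$ is $1$-variable.

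The main obstacle will be handling score ties at the threshold and the corner case $|X|<q$. Ties I would dispatch by fixing a deterministic tie-breaking rule (representation only requires existence of some such $s$ and convention), and when $|X|<q$ the ranking model accepts all of $X$ so no existing decision can flip regardless of $x'$, leaving both bounds intact.
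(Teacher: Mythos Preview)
Your proposal is correct. The paper's proof follows the same high-level structure---observing that form~\eqref{eq:ml-independent} yields an independent choice function and form~\eqref{eq:ml-rank} yields a $q$-representative one---but then dispatches the instability and variability claims by citing two results established elsewhere in the appendix: Corollary~\ref{thm:independent-zero-unstable} (independent $\Leftrightarrow$ $0$-unstable) for the first sentence, and Theorem~\ref{thm:ranking-variability} ($q$-representative $\Leftrightarrow$ $q$-acceptant, $1$-unstable, $1$-variable) for the second and third. You instead verify the needed implications directly, via the case split on whether $s(x')$ crosses the threshold $t_q(X)$. Your route is more self-contained and uses only the easy directions of those equivalences (independent $\Rightarrow$ $0$-unstable; $q$-representative $\Rightarrow$ $1$-unstable and $1$-variable), which is all the proposition actually requires. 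The paper's route is terser on the page but leans on the full bidirectional Theorem~\ref{thm:ranking-variability}, whose converse direction has a substantial proof and is more than is strictly needed here.
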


This result shows that standard ML practice coherently applies to admissions data only in limited settings.
The intuition for this result hinges on the fact that all 1-unstable and 1-variable choice functions correspond to total preference orderings over applicants (formalized in Theorem~\ref{thm:variability}).
Then the proposition follows from noticing that total preference orderings correspond to  embedding applicants in  $\mathbb{R}$  (by their scores).
How restrictive are the settings where naive ML applies? We next characterize the instability of choice functions.
First, we introduce a choice function property which plays an important role in the school admission setting.

\begin{restatable}[Substitutability]{definition}{substitutability} 
\label{def:substitutability}
Choice function $\choice$ is substitutable if $X_1 \subseteq X_2$ implies $X_1 \cap \choice(X_2) \subseteq \choice(X_1)$.
\end{restatable}

Substitutability means that removing other applicants from the pool cannot hurt an accepted applicant. Substitutability is well-established property in the choice literature \citep{chernoff1954rational,moulin1985choice,deng2017complexity}. It is a necessary condition for the existence of a stable matching \citep{roth1984stability}. Many common optimization objectives are substitutable; selecting applicants by a linear ranking,  linear assignment problems, and selection problems optimally solved by greedy algorithms \citep{yokoi2019matroidal} can all be represented as substitutable choice functions.

\begin{restatable}{theorem}{thminstability}
\label{thm:instability}
A $q$-acceptant choice function 
\begin{enumerate}
    \item cannot be $0$-unstable,
    \item  is exactly $1$-unstable if and only if it is substitutable,
    \item can be tightly $d$-unstable for every $1\leq d \leq 2q$.
\end{enumerate}

\end{restatable}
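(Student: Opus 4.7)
I would prove the three parts separately, with Part 3 requiring the most care.

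\textbf{Part 1.} By $q$-acceptance, each singleton satisfies $\mathcal C(\{x_0\}) = \{x_0\}$. Under $0$-instability, inducting on set size (adding one element at a time) shows that once an element is accepted, it remains accepted in every superset; thus $x_0 \in \mathcal C(X)$ for every $X \ni x_0$. Applied to each of $q+1$ distinct elements of a fixed $(q+1)$-element $X \subseteq \calX$, this would force $X \subseteq \mathcal C(X)$, contradicting $|\mathcal C(X)| = q$. So $0$-instability is impossible whenever $|\calX| \geq q+1$.

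\textbf{Part 2.} For $(\Leftarrow)$, substitutability immediately zeros the ``newly accepted'' term $|X \cap \mathcal C(X \cup \{x'\}) \setminus \mathcal C(X)|$ of $r_\mathcal C$. A short case split on $|X|$ versus $q$ bounds the remaining ``newly rejected'' term by $1$: if $|X| < q$, both sides accept everything so $r = 0$; if $|X| \geq q$, both $\mathcal C$'s have size $q$, so their symmetric difference as subsets of $X \cup \{x'\}$ (with at most one element outside $X$) is at most $2$, and combined with substitutability the displaced existing element is unique. For $(\Rightarrow)$, argue by contrapositive: pick witnesses $X_1 \subseteq X_2$ with $x \in X_1 \cap \mathcal C(X_2) \setminus \mathcal C(X_1)$, then expand $X_1$ to $X_2$ one element at a time. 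Some step $Y \to Y \cup \{y\}$ must flip $x$ from rejected to accepted; since $x \in Y \setminus \mathcal C(Y)$, $q$-acceptance forces $|Y| > q$, so $|\mathcal C(Y)| = |\mathcal C(Y \cup \{y\})| = q$. A direct cardinality count splitting on whether $y \in \mathcal C(Y \cup \{y\})$ then shows that the ``newly accepted'' term being $\geq 1$ propagates to the ``newly rejected'' term being $\geq 1$ as well, so $r(Y, Y \cup \{y\}) \geq 2$, defeating $1$-instability.

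\textbf{Part 3.} For each target $d$, I would construct an explicit choice function. For $d = 1$, a priority queue on a total order is substitutable (hence $1$-unstable by Part 2) and not $0$-unstable by Part 1. For $d = 2q$, fix two total orders $\succ_A, \succ_B$ on a $2q$-element reference set whose top-$q$'s are disjoint (e.g., reverse orders on $\{1, \ldots, 2q\}$), and define $\mathcal C(X)$ to take the top $q$ by $\succ_A$ when $|X|$ is even and by $\succ_B$ when odd; on the reference set itself any single addition flips parity and swaps the entire admitted set, yielding $r = 2q$, while a general symmetric-difference bound prevents $r$ from exceeding $2q$. For intermediate $d$, interpolate by anchoring components to fixed identities (not to ranks within $X$) to avoid boundary artifacts: reserve $q - \lfloor d/2 \rfloor$ slots for a fixed priority queue and $\lfloor d/2 \rfloor$ slots for a parity-triggered swap on a tuned reference bloc, and if $d$ is odd, add one asymmetric ``tie-breaker'' element producing one additional displacement (recall $r$ is odd exactly when the newly added $x'$ itself is admitted).

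\textbf{Main obstacle.} The hardest step is verifying tightness in Part 3: one must bound $r$ \emph{uniformly} over all $(X, x')$. A naive construction (e.g., selecting the parity-controlled bloc as the smallest-ranked elements of $X$) breaks because an added element can straddle the boundary between blocs and cause cascading reallocation, inflating $r$ above the target. Anchoring the parity-sensitive elements by identity rather than by rank makes the priority and swap components truly independent, so any single addition affects at most one component and the case analysis over where $x'$ lands closes cleanly.
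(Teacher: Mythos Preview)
Your Parts 1 and 2 are correct and essentially match the paper's argument: Part~1 via the observation that $0$-instability forces every element to be permanently accepted (the paper phrases this through the equivalence of $0$-instability with monotonicity plus substitutability, then invokes incompatibility of monotonicity with $q$-acceptance, but the content is the same); Part~2 via the contrapositive with a one-element-at-a-time expansion and a cardinality count, which is exactly the paper's route through its Lemma on non-substitutable functions.

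Part 3 is where you diverge from the paper, and your sketch has a genuine gap for intermediate \emph{even} $d=2k<2q$. You want the priority and swap components to be independent so that ``any single addition affects at most one component.'' But this forces the parity trigger to depend only on the bloc (e.g., on $|X\cap B|$), since an $|X|$-based trigger would flip the swap component on \emph{every} addition, including additions that also displace a priority-queue element, pushing $r$ to $2k+1$. With a bloc-count trigger and a bloc $B$ of size $2k$, however, you cannot reach $r=2k$: to flip parity you must add some $b\in B$, so before the addition $|X\cap B|=2k-1$, and the two top-$k$ sets (under $\succ_A$ on $X\cap B$ and $\succ_B$ on $B$) cannot be disjoint with $b$ in neither---together they would have to cover all $2k$ elements of $B$ including $b$, yet $b\notin X\cap B$ rules it out of the first and $b\notin \mathcal C(X')$ rules it out of the second. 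The best you get is $r=2k-1$ (when the added bloc element lands in the new top-$k$) or $r=2k-2$. This is fixable---e.g., replace parity by a designated trigger element that is never admitted, or enlarge the bloc to $2k+1$---but the proposal as written does not close it.

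For comparison, the paper's construction is different and sidesteps this issue. It starts from a single total order and designates the bottom $n=\lfloor d/2\rfloor$ elements as a ``team'' $X^*$: whenever $X^*\subseteq X$, all of $X^*$ is admitted (displacing $n$ others), and otherwise $\mathcal C$ is just the top-$q$ queue. Adding the final missing member of $X^*$ then yields $r=2n-1$ (odd, since the new element is itself admitted), and tightness follows from a three-case analysis on whether $X^*\subseteq X$ and $X^*\subseteq X'$. For even $d=2n$, the paper introduces a single spoiler element whose presence \emph{dissolves} the team without itself being admitted---this is exactly the inconsistency that the paper shows is equivalent to the existence of a positive even $r_{\mathcal C}$. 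Your parity device and the paper's spoiler are morally the same trick (an unadmitted element that reshuffles the chosen set), but the team-complementarity framing localizes the disruption to $X^*$ and makes the uniform upper bound on $r$ transparent.
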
%

First, capacity constraints directly imply nonzero instability; a school will become more selective as the applicant pool widens.
As a result, recalling \Cref{prop:ml-representation}, naive ML of the form~\eqref{eq:ml-independent} cannot represent such decision processes.
Capacity constraints force decisions to be cohort-dependent, violating traditional independence assumptions. 

Skipping to the third point, we show that there exist choice functions which are much more unstable.
One intuitive class of highly unstable (large $d$) mechanisms is team formation. For example, a music department may want a balanced makeup of instrumentalists accepted to a program; the music director prefers equal numbers of violinists and cellists, but otherwise chooses the most talented musicians. Due to a lack of cellists, after admitting all auditioning cellists, there remain several violinists rejected, and additional slots are filled with, e.g., bassists. Adding an additional cellist to the applicant pool, however, leads to rejecting two bassists in favor of a previously rejected violinist and the new cellist---a $3$-unstable function. 
If admissions instead prioritized string trios (a violin, viola, and cello), there might be \textit{three} bassists rejected to form a trio ($5$-unstable), while forming a \textit{quartet}, (a viola, cello, and two violins), is $7$-unstable. We show in \Cref{lem:unstable-calc} that complementary groups of size $n$ can define functions that are tightly $2n-1$ unstable.
We also construct examples of instability $2n$ when additional applicants alter decisions without being accepted themselves in \Cref{sec:inconsistent-proof}.

Going back to the second point of Theorem~\ref{thm:instability}, we show that substitutability is equivalent to $1$-instability under $q$-acceptance---violations of substitutability \textit{must} lead to larger changes in admissions decisions, and vice versa. In other words, a broad class of functions are as unstable as possible, given that they are subject to capacity constraints. 

Whether ML models of the form~\eqref{eq:ml-rank} -- which rank continuous scores -- can represent $1$-unstable choice functions depends on their variability.
Before turning to this characterization, we return to the idea of orderings and queues.
Not all choice functions are characterized by a total order. 
However, a larger class of substitutable and $q$-acceptant choice functions can be defined as a combination of \textit{multiple} queues. 
The simplest way of constructing this is by sequentially composing several choice functions together.
\begin{restatable}[Sequential Composition]{definition}{seqcomposition} \label{def:seq-composition}
A choice function $\choice$ is a \textit{sequential composition} of functions $\choice_n, \ldots, \choice_1$ if:
$$\choice(X) = \choice_1(X) \cup \choice_{2}(X_{2}) \cup \cdots \cup \choice_n(X_n)$$
where $X_{i+1}= X_{i} \setminus \choice_{i}(X_{i})$ and $X_1 = X$.
\end{restatable}

For example, a school may first accept $q/2$ students with the highest English scores, and then $q/2$ students with the highest Math scores (who were not already admitted).

\begin{restatable}{theorem}{thmvariability} 
\label{thm:variability}
Consider a $q$-acceptant, 1-unstable choice function $\mathcal C$ which
can be represented as the sequential composition of $n$ choice functions, each characterized by a total order.
Then $\mathcal C$ has a variability  $m$, where $1 \leq m \leq n$. Furthermore, $\mathcal C$ has variability $1$ if and only if it can be characterized by a single total order, $n=1$.
\end{restatable}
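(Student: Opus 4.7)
The plan is to establish the bounds $1 \le m \le n$ via a cascade analysis through the sequential composition, and to prove the iff by showing that variability one forces $\mathcal{C}$ to be representable by a single total order.

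For the upper bound $m \le n$, fix a pool $X$ and write $\mathcal{C}(X) = A_1 \cup \dots \cup A_n$ with $A_j$ the applicants chosen in round $j$; let $e_j$ be the lowest-ranked element of $A_j$ under the round-$j$ total order $\succ_j$. I would then trace, round by round, the effect of adding an element $x'$. Let $k$ be the first round that would accept $x'$; then $e_k$ is bumped out and enters round $k+1$ as a ``pending'' element, where at each subsequent round $j$ either (Case A) the pending element is accepted and displaces $e_j$, which becomes the new pending element, or (Case B) the pending element passes through unaccepted. By $1$-instability, exactly one element of $\mathcal{C}(X)$ is ultimately removed, and the cascade forces that element to be some $e_j \in \{e_1,\dots,e_n\}$. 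Hence the union over all $x'$ of possibly-displaced elements has size at most $n$, and maximizing over $X$ gives $m \le n$.

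For $m \ge 1$ in the nondegenerate setting $|\mathcal{X}| > q$, it suffices to exhibit one displacing pair: take $X$ with $|X| = q$ so $\mathcal{C}(X) = X$, and any $x' \notin X$ whose addition results in $x' \in \mathcal{C}(X \cup \{x'\})$. By $q$-acceptance, some element of $X$ must then be displaced, giving $m \ge 1$. For the $(\Leftarrow)$ direction of the iff, a single total order $\succ$ displaces only the $q$-th ranked element of $X$ regardless of the added $x'$, yielding $m = 1$.

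The substantive part is the $(\Rightarrow)$ direction. Assuming $m = 1$, every pool $X$ with $|X| > q$ has a unique ``borderline'' element $\operatorname{bot}(X) \in \mathcal{C}(X)$ that any accepting $x'$ must displace. I would construct a total order $\succ$ on $\mathcal{X}$ from these borderlines by declaring $y \succ z$ whenever some pool $X$ witnesses $y \in \mathcal{C}(X)$ and $z = \operatorname{bot}(X)$, and use substitutability---equivalent to $1$-instability by \Cref{thm:instability}---to show this relation is well-defined, asymmetric, transitive, and total. A final check confirms that $\mathcal{C}(X)$ coincides with the top $q$ elements of $X$ under $\succ$ for every $X$, so $\mathcal{C}$ is representable by a single total order and $n = 1$.

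The main obstacle is the $(\Rightarrow)$ direction. Verifying transitivity and global consistency of the candidate $\succ$---in particular, ruling out configurations in which different pools $X_1, X_2$ respectively witness $a \succ b$ and $b \succ a$, and showing that $\operatorname{bot}(\cdot)$ behaves coherently under pool enlargement---requires carefully combining variability one with substitutability, most plausibly via induction on $|X|$ or a reduction to three-element comparisons. This is the technical crux of the proof.
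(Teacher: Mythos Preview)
Your plan matches the paper's approach closely. The upper bound $m \le n$ is the paper's \Cref{thm:additive-variability} specialized to queues; your cascade argument is a correct, concrete instantiation of its inductive proof (each round's $1$-instability ensures the ``pending'' element stays a singleton, so the displaced applicant is always some $e_j$). The equivalence $m=1 \Leftrightarrow$ single total order is the paper's \Cref{thm:ranking-variability}, and you correctly flag the $(\Rightarrow)$ direction as the crux.

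One caveat on your construction of $\succ$: by declaring $y \succ z$ only when $z = \operatorname{bot}(X) \in \mathcal{C}(X)$, your relation never compares an accepted element against a \emph{rejected} one. In particular, a globally minimal element---one that is never in $\mathcal{C}(X)$ for any $|X|>q$---is never $\operatorname{bot}(X)$ and hence incomparable to everything under your definition, so totality fails outright. The paper instead sets $y \succ z$ whenever $y \in \mathcal{C}(X)$ and $z \in X \setminus \mathcal{C}(X)$ for some common pool $X$, which handles these pairs directly; its asymmetry proof then proceeds by an explicit set-sequence construction (swapping elements of two witnessing pools one at a time and tracking $V_{\mathcal C}$ via consistency), and totality is shown to fail only among the at-most-$q$ ``always accepted'' elements, where any linear extension induces the same choice behavior. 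Your definition is salvageable but needs this amendment before the transitivity/totality checks can go through.
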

In the appendix, we discuss when this upper bound is tight. We also extend this result to choice functions which are not explicitly defined as a fixed sequential composition, such as linear assignment problems.
\Cref{thm:lap-variability} addresses the context-dependent sequential queues which result from such optimization based admissions. 
Importantly, Theorem~\ref{thm:variability} means that a choice function can be faithfully represented by a machine learning model if and \emph{only if} it is characterized by a total ordering over $\calX$.

\section{Application: NYC High School Admissions}\label{sec:empirics}

We illustrate our theory on data from the NYC high school matching system. Using admissions and applicant data from the 2021-2022 and 2022-2023 admissions cycles,\footnote{We received the data from the New York City Board of Education through a research data use agreement process, to study application behavior and behavioral interventions in the application process. This data is available to researchers with a sponsor inside the Department of Education. Due to a non-disclosure agreement and to protect private student data, we cannot release the data. The research was deemed exempt by our university's Institutional Review Board.} we extract applicant pools and admissions for each individual program.
As we cannot release the data, we further generate synthetic data to validate our results, and release our code. The code is available at \url{https://github.com/evan-dong/admissions-prediction}, and the replication results with synthetic data are in the appendix. 

The motivation for this empirical demonstration is twofold. First, we seek to understand the challenges that may arise when designing interventions to help applicants make decisions; for example, New York City now provides (coarse) personalized admission likelihoods for each program to applicants \cite{ShenBerro2024HighSchoolAdmissions}. As noted above, their approach avoids the challenges discussed here by directly modeling the choice functions. However, such a simulation approach may not be feasible in settings where the choice function is not precisely known, such as in holistic college admissions or hiring, or for schools that use features other than numeric ones, such as essays or auditions. Thus, second, this demonstration serves as an illustration of the challenges of prediction in such settings broadly -- in a setting where, due to the ability to simulate outcomes, we can control for other reasons that such prediction may be difficult.

In this context, each student can apply to multiple programs and rank them in order of preference; in our data, they were limited to listing twelve programs. They are then matched to a single program according to the deferred acceptance algorithm. Programs admit students according to various criteria; for example, some schools admit students according to a musical audition or essays. The vast majority of programs use explicit and publicly available decision rules, for which we can construct counterfactual decisions according to alternative applicant pools;\footnote{The DOE publishes these criteria online \citep{nycdoe2022hsdata}.} we use a ``simulator'' developed by \citet{peng2025undermatching}, which implements the choice funtions used by a large set of programs. As in their work, we restrict our analysis to the most common categories of program admissions choice functions  that we can replicate accurately based on the data. 

In this section, we first apply our theoretical framework to choice functions used by NYC programs in practice, characterizing their instability and variability. Then, we show that these characteristics correspond to machine learning performance in predicting admissions outcomes over time. 

\subsection{Applying the Theoretical Framework}

We first describe the most common categories of admissions functions used by NYC High School programs and then characterize them based on our instability and variability definitions. We analyze three types of programs: ``Ed. Opt'', ``Screened,'' and ``Open.'' Programs may also participate in the Diversity In Admissions (DIA) initiative. In the language of our model, ``Screened'' and ``Open'' programs use the same choice function class. Thus, we have four possible categories of functions: ``Ed. Opt'', ``Screened/Open,'' ``Ed. Opt with DIA'', ``Screened/Open with DIA.''

At a high level, each function is a sequential composition of queues.
Within each queue, applicants may be ranked according to a queue-specific score function, Ed. Opt. category, DIA-qualifying status, borough (neighborhood) of middle school attendance and residence, (discretized) grade tier, continuing student status, and a lottery (tiebreaker) number. Each queue also has a capacity (which we take to be fixed).
Thus each queue corresponds to a choice function that is characterized by a total ordering.
The queues themselves are ordered, and so admissions follows a sequential composition of these choice functions (Definition~\ref{def:seq-composition}).

In a \textbf{Screened/Open} program, there is a single queue (ranking), with students ordered based on a score\footnote{Open programs differ from Screened programs by defining a score which does not depend on grades.} determined by a combination of their grades, neighborhood and continuing-student priorities, and the lottery number. In a \textbf{Screened/Open with DIA} program, there are two queues; in one, all students with DIA-qualifying status are ranked above the remaining students (and then by the score function); in the other queue, all students are ranked by the score function. In an \textbf{Ed. Opt.} program, there are three queues, respectively corresponding to whether a student had \textit{low}, \textit{medium}, or \textit{high} grades in middle school\footnote{The intention is to induce diversity in terms of academic levels.} -- students with these grades are ranked at the top in their respective queues (and otherwise by a score determined by their lottery number and neighborhood and continuing-student priorities). Finally, in \textbf{Ed. Opt with DIA}, there are six queues, each corresponding to a Ed Opt. grade tier plus DIA-qualifying status. We can thus apply our theoretical framework as follows.

\begin{restatable}[]{proposition}{empiricalmethods} 
\label{prop:program-variability}
All considered choice functions ({Ed. Opt}, {Screened/Open}, {Ed. Opt with DIA}, and {Screened/Open with DIA}) are $1$-unstable. Furthermore, their variability is (tightly) equal to their number of queues: 

\begin{itemize}[noitemsep,topsep=0pt,parsep=0pt,partopsep=0pt]
    \item Screened/Open programs are $1$-variable
    \item Screened/Open with DIA programs are $2$-variable
    \item Ed. Opt. programs are $3$-variable
    \item Ed. Opt with DIA programs are $6$-variable
\end{itemize}

\end{restatable}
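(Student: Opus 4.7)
My plan is to decompose the proof into three steps that invoke the general machinery of \Cref{sec:theory}. The discussion preceding the proposition already presents each of the four program types as an explicit sequential composition of $n \in \{1, 2, 3, 6\}$ queues, with each queue characterized by a strict total order (the priority tier, queue-specific score, and lottery tiebreaker together induce a ranking). So the work splits as: (i) verify $1$-instability, (ii) invoke the variability upper bound from \Cref{thm:variability}, and (iii) construct witness pools demonstrating tightness.

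For step (i), I would give a direct cascade argument. Adding a single applicant $x'$ to $X$ either fails to be admitted by any queue---in which case each queue's top-ranked set is unchanged and $\choice(X \cup \{x'\}) = \choice(X)$---or is admitted by some queue $i$, displacing that queue's marginal $a_i$. Since $a_i$ was previously admitted overall, it adds one element to the queue-$(i{+}1)$ input pool, which by an identical argument either admits $a_i$ (displacing the queue-$(i{+}1)$ marginal and continuing the cascade) or leaves its output unchanged. Because each queue's output size is fixed whenever the pool is large enough to fill it, all intermediate displaced applicants in the cascade remain admitted overall, just via a different queue; only the terminal applicant of the cascade changes overall status. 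Hence $r_\choice(X, X\cup\{x'\}) \leq 1$, proving $1$-instability. Step (ii) then follows immediately: each composition fits the hypothesis of \Cref{thm:variability} with $n$ total-order queues, so variability $m \leq n$.

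The main obstacle is step (iii), the tightness claim $m \geq n$. For each program type I would construct an applicant pool $X$ with a distinct marginal admit $a_i$ for each queue $i$, such that each $a_i$ is uniquely susceptible to being permanently displaced by a queue-$i$-targeted newcomer $x'_i$. The key design constraint is to block cascades: after $x'_i$ bumps $a_i$ out of queue $i$, $a_i$ must fail to be re-admitted by any later queue, which I would enforce by giving $a_i$ a profile (grade tier, DIA status, score) that is dominated in queues $i+1, \ldots, n$ by existing pool members already occupying those queues' margins. For Screened/Open ($n=1$) this is trivial. For Screened/Open with DIA ($n=2$), $X$ would contain a weak DIA admit (bumped by a stronger DIA newcomer and not competitive in the score-only queue) together with a marginal non-DIA admit (bumped by a non-DIA newcomer just above the queue-2 cutoff). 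For Ed.~Opt.~($n=3$) I would use one marginal per grade tier, each having a grade tier that leaves it below-priority in the other two queues' orderings. For Ed.~Opt.~with DIA ($n=6$), I would use one marginal per (grade tier, DIA status) pair, again with lottery and score values tuned so no bumped marginal is picked up by a later queue. The constructions are finite and routine once the marginal profiles are specified, but the bookkeeping---ensuring each $x'_i$ activates only queue $i$, and that each bumped $a_i$ fails cleanly---is where care is required.
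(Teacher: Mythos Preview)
Your proposal is correct and tracks the paper's argument closely: both invoke \Cref{thm:variability} for the variability upper bound and construct a witness pool with one marginal admit per queue-category to show tightness. The one genuine difference is step~(i): you give a direct cascade argument for $1$-instability, whereas the paper leaves $1$-instability implicit, relying on the chain that each queue is substitutable (being characterized by a total order), sequential composition preserves substitutability (\Cref{thm:composition-substitutable}), and substitutability plus $q$-acceptance is equivalent to $1$-instability (\Cref{thm:instability}). Your direct argument is more self-contained and perhaps more transparent for this concrete setting; the paper's route is shorter once the general machinery is in place. Your tightness constructions in step~(iii) are also more detailed than the paper's, which simply asserts that adding a newcomer of each queue's category (DIA status, Ed.~Opt.~tier, or their cross-product) displaces a distinct marginal without spelling out the cascade-blocking condition you identify.
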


\subsection{Empirical Methodology}

Our goal is to illustrate that capacity constraints---and shifts in the applicant pools over time---contribute to the difficulty of predicting application outcomes. For this, we require a dataset in which we have outcomes for applicants from the same choice functions but with different applicant pools. It is also important to isolate the effect of shifting applicant compositions from other reasons that application outcomes may be hard to predict.
These include changes to the decision process and choice function complexity more broadly (e.g. nonlinearity or high dimensionality). Finally, we want a range of choice functions, including those  with different instability and variability. In other words, we require, for each school program: a consistent \textit{choice function} used on different \textit{application pools} over time, with \textit{admissions outcomes for each applicants} given the pool. Then, we can train \textit{machine learning models} on outcomes from one application pool, and evaluate their performance on outcomes from different pools. We note that such data is in general difficult to obtain. We now describe each component.

At a high level, our empirical approach is designed to make prediction as \textit{easy} as possible---to remove all reasons that prediction may be difficult, other than the applicant pool composition effect that is the focus of this work; as we illustrate, prediction is still difficult over time for choice functions with poor instability or variability characteristics.

\paragraph{Programs, applicants, and applicant pools} 
We use data from the General Education match in the 2021 and 2022 New York City High School match process, comprising of 58,500 students in 2021 and 57,331 students in 2022. We consider their applications to 199 unique programs: 99 that are {Ed. Opt}, 75 that are {Screened/Open}, 5 that are {Ed. Opt with DIA}, and 20 that are {Screened/Open with DIA}.
\footnote{We start with $n=587$ programs that are either Ed. Opt $398$, Screened $129$, or Open ($60$) and existed in both 2022 and 2021. From there, we further restrict our dataset to programs that matched  with at least one student in each year ($n=573$) with a nontrivial offer rate (i.e., rejected at least one student in each year) ($n=199$).} 

The matching mechanism uses the ``deferred acceptance'' algorithm: students submit rankings over programs and programs' choices are determined according to the above defined policies. In short, the algorithm iteratively selects a student who is not tentatively accepted to a program, and that student ``applies'' to the top program in their list to which they have not yet been rejected. Given its current set (its current tentatively accepted students and the new applicant), the program may tentatively accept or (permanently) reject the new applicant, and may (permanently) reject a current tentatively accepted applicant. This algorithm proceeds until all students are tentatively accepted somewhere or have exhausted their submitted preferences. This algorithm means that a student truly ``applied'' to and was accepted by the program they were matched to, ``applied'' to and was rejected by any program they ranked above their match, and have indeterminate status at the remaining programs, since they never actually ``applied'' in the deferred acceptance sense. 

For our analysis, we do not consider the deferred acceptance algorithm. Rather, we consider each program separately, as making admissions decisions according to a choice function and given an applicant pool. For each program, we consider as the potential pool those students who truly applied (\textit{in the deferred acceptance sense}) to that program in either 2021 or 2022.\footnote{This can roughly be viewed as understanding the decisions within one step of the deferred acceptance algorithm, in which an applicant ``proposes'' to a school, which either rejects them immediately or tentatively adds them to their acceptance pool, given the current acceptance pool.} Based on the above process, we have a total of 107,128 and 104,501 applications for 2021 and 2022 respectively from the chosen students to the given programs.

\paragraph{Choice functions and outcomes}
We now describe how we construct outcomes for each applicant to a program given a pool. We adapt a \textit{simulator} developed by \citet{peng2025undermatching} to generate admissions decisions: for each program in each year, the rule-based simulator implements the admissions function defined by publicly available policies.
Then, given an application pool and a fixed capacity  (which we infer from actual outcomes), we can calculate outcomes for each student. This simulator is implemented using 2022 decision policies and accurately\footnote{Accuracy is not 100\% because there are additional criteria that apply to a small number of students and for which we do not have data (e.g., children of teachers or siblings of current students).} reflects actual admission outcomes when applied to the application pools and outcomes from both 2022 (98.79\% accuracy) and 2021 (91.47\% accuracy).

Why don't we simply use the actual admissions outcomes of each applicant to a program in 2021 and 2022, since we use the \textit{real} applicant pool as data and have access to these outcomes? First, as described above, real-world admissions functions may change from year to year, and our interest here is in studying inherent difficulties of learning admissions functions even when they remain the same. Second, we wish to also display the effect of small changes to the applicant pool (such as substituting a few students), instead of the full changes between years. Third, this approach allows evaluating the predictability of new choice functions, not currently used by any programs (including those that are not 1-unstable), under real applicant pool shifts over time. Similarly, we can apply the same choice functions to all programs, thus controlling for features (such as overall admission rates) that correlate with both prediction difficulty and the choice function a program actually used.\footnote{For example, if program A originally admitted students based on a screened admissions method, we can simulate the admissions outcomes had program A instead been an Ed. Opt. or Open program, with or without DIA-reserved seats. This allows us to deconfound choice function predictability from other correlations between the applicant pool and the method---for example, many of the most selective programs are screened programs with DIA.} 

In other words, a simulator allows us to generate counterfactual decisions---we can modify the applicant pool or the underlying choice function. It further allows us to control for other factors (aside from our focus on shifts in the  applicant pool composition) that may make predicting admissions outcomes difficult, cf. \citet{wang2024against}. By generating every combination of admissions method with every program (for a total of six; three admissions methods, with or without 50\% of seats being reserved for the DIA initiative), we can examine the effect of different admissions methods. We also isolate the impact of instability- and variability-related differences in methods in \Cref{sec:nopriority-appendix} by simulating admissions without borough or continuing student priority.
We always use the real applicant pools to each program in each year.
This allows us to capture the effects of real shifts in applicant pool composition on model accuracy. 

\textit{Synthetic Admissions Methods.} 
Notably, all choice functions from the programs we analyze are $1$-unstable. To further test the theory, we design $0$- and $5$-unstable methods that we implement in our simulator. We create two different $0$-unstable admissions methods not subject to capacity constraints; the first simply admits every student with a tiebreaker number below fixed threshold (calibrated so that approximately half the overall applicants will be admitted).The second similarly thresholds tiebreaker number, but with three different values depending on an applicant's Ed. Opt. category. Our $5$-unstable admission method  is detailed in \Cref{sec:synthetics-appendix}.

\subsubsection{Machine learning predictions of admissions outcomes}
The above processing results in a set of programs, real applicant pools in each year for each program, and admissions outcomes under each choice functions and given each applicant pool. We now describe our ML training and evaluation.

\textit{Training and test sets.} We aim to replicate real-world ML performance and training: when a model is trained on admissions outcomes from one time period, and then used to predict outcomes in a future time period. Crucially, all admissions outcomes in the training data are based on the train-time applicant pool, while the machine learning model will be used to predict outcomes for students given a future applicant pool. For example, we may train a model on 2021 applicant outcomes, to give advice to applicants as part of the 2022 process. In each experiment, the training data labels are always defined by the training set applicant pools, while the test labels are defined by test applicant pools. For robustness, we provide results when flipping the training and test years in \Cref{sec:rev-appendix}. To control for the exacerbating effects of distribution shift, we show in-distribution, out-of-sample performance using our synthetic data generator in \Cref{sec:id-appendix}.

\textit{Training and evaluation.} We train a separate machine learning model for each program and admissions function using the actual training set applicant pools and outcomes.
We tailor the feature space  (removing irrelevant features and defining interaction terms) to each admission function to simplify prediction and enable near-perfect learning at train time. We train logistic regression models with $L_2$ regularization. 
We define prediction by discretizing the continuous $[0,1]$ scores. 
We rank applicants according to their score and predict 1 for the top $q$, as defined in~\eqref{eq:ml-rank}.
Empirically, we find that this improves average performance compared with using a fixed threshold as in~\eqref{eq:ml-independent}, which is not surprising in light of Proposition~\ref{prop:ml-representation}.
Results with fixed thresholds and monotonic gradient boosting are in \Cref{sec:argmax-appendix} and \Cref{sec:models-appendix}. 

\paragraph{Experiments} Our main experiments involve evaluating the performance of models trained on data from 2021 on a series of slowly shifting applicant pools.
We generate mixture pools from the 2021 and 2022 applicant pools for each school where each such pool is made of a randomly sampled $\gamma$ fraction of the applicants from 2021 and $1-\gamma$ of the applicants from 2022, for $\gamma = \{0, .1, .2, ..., 1\}$.
By construction, there are both in-distribution (in-sample) and out-of-distribution applicants in these mixtures. 
This allows us to evaluate how the addition of out-of-distribution applicants affects even in-sample performance. 
Due to stochasticity in model performance from sampling the interpolated datasets, we average performance metrics across 10 random trials. We hold admission rate constant across interpolated datasets, adjusting capacity. Results where all datasets are held at an exact fixed size are in \Cref{sec:fixed-appendix}.

\subsection{Empirical Results}

\begin{figure}[tb]
\centering
\includegraphics[width=.47\textwidth]{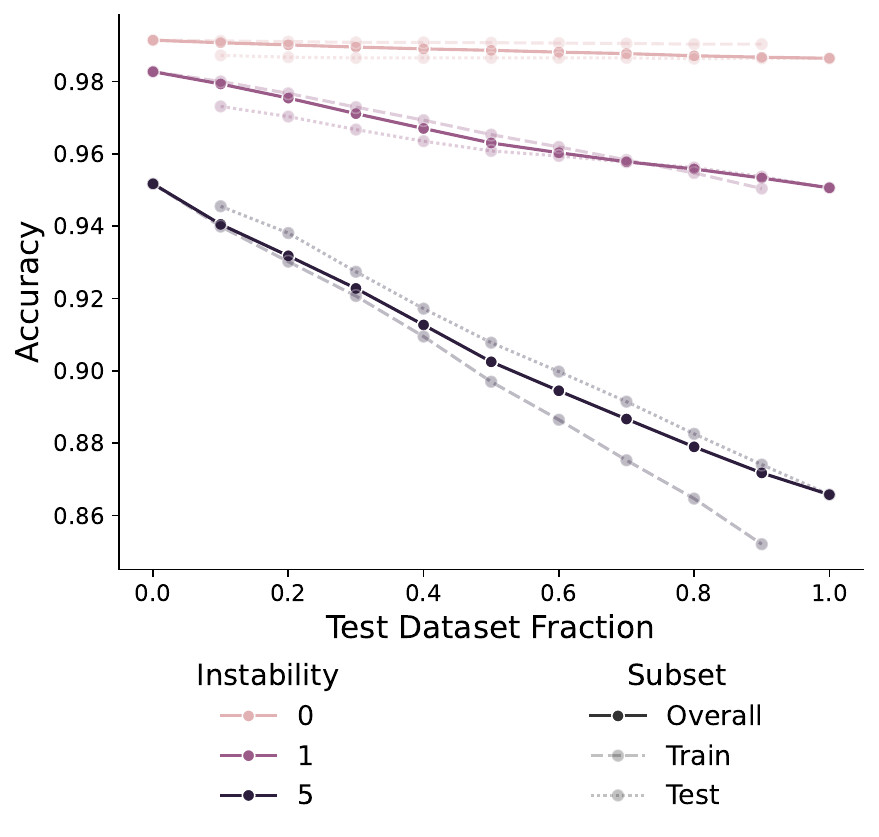}
\caption{Model accuracy as a function of choice function instability, under increasing levels of distribution shift.}
\label{fig:instability}
\end{figure}

\begin{figure}[tb]
\centering
\includegraphics[width=.47\textwidth]{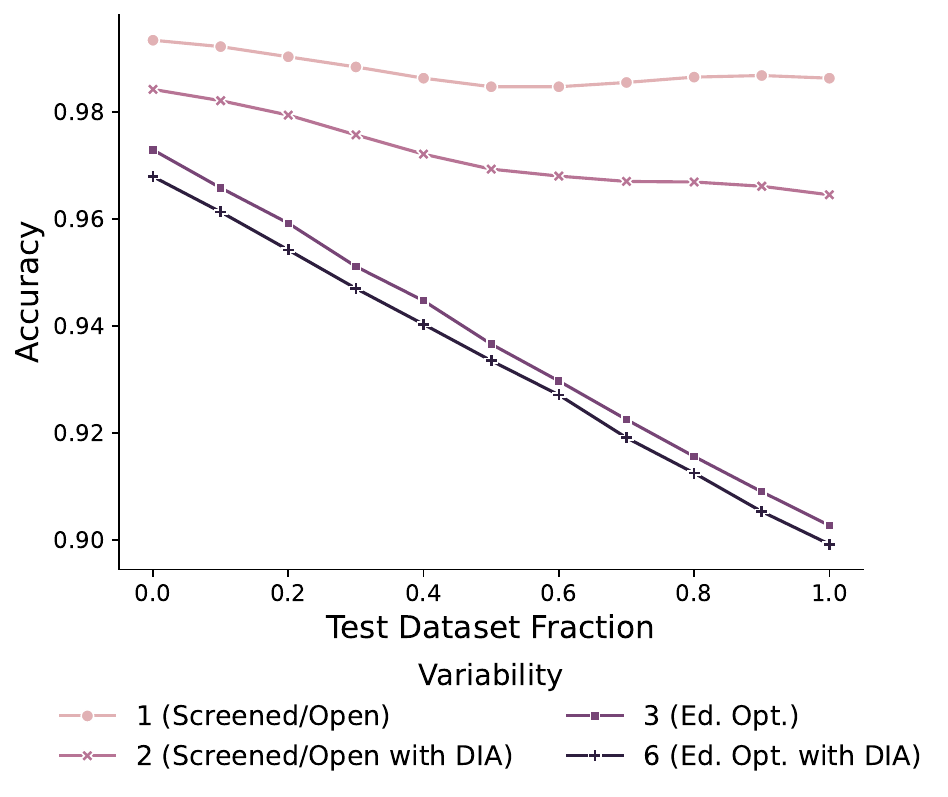}
\caption{Model accuracy as a function of choice function variability, under increasing levels of distribution shift.}
\label{fig:variability}
\end{figure}

\subsubsection{Instability}

\Cref{fig:instability} illustrates model accuracy for each $\gamma$-mixture applicant pool, averaged over all programs and choice functions with the same instability value. 
Model performance decays more steeply with out-of-distribution test data, with more stable choice functions. Furthermore, the decay does not occur just on out-of-distribution points---as would occur in standard distribution shift settings with independent outcomes.
\Cref{fig:instability} also plots the accuracy on the train and test subsets of the mixture pool separately.
In conventional machine learning models of distribution shift, performance on the in- and out-of-distribution subsets remains fixed.
Indeed, for $0$-unstable functions, performance for both subsets remains relatively constant.
In contrast, we observe that \textit{both} in- and out-of-distribution performance degrades for $d=1$ and $d=5$, with the change being more drastic for a less unstable function. At the same time, performance in the \textit{testing} subset likewise drops \textit{even as it takes up a larger portion of the overall dataset}---because the model was trained using training set application set outcomes. 

\subsubsection{Variability}

We now focus on $1$-unstable admissions functions, corresponding to actual decision processes used in NYC high school programs.
\Cref{fig:variability} shows averaged performance for admissions method grouped by variability. 
Programs with higher variability exhibit larger decays in performance.
Notably, the $1$-variable programs barely decay, in line with Proposition~\ref{prop:ml-representation}.
Larger values of variability have larger performance drops, though to a diminishing extent; the gap between variability $3$ and $6$ is small.
Our definition of variability is ``worst case'' over applicant sets, so we hypothesize that this observation is due to the fact that actual applicant sets are not adversarial, even as they exhibit distribution shift.

A more detailed breakdown of performance by admissions method can be found in \Cref{sec:per-method-appendix}.

\section{Conclusion} \label{sec:discussion}

We provide one \textit{explanatory mechanism} for why prediction admissions outcomes is hard: outcomes depend on the applicant pool, which shift over time, inducing concept drift---this dependence cannot be represented by the standard ML paradigm. As we empirically show, this dependence degrades machine learning performance, even when other challenges in this context (strategic behavior, data limitations, other forms of distribution shift) are removed.

One natural question for future work is: how do we adapt ML models to be robust to this effect? In contexts where the choice functions are algorithmic and exactly known, one can avoid training ML models and simulate the choice functions directly---in fact, this approach is the one taken for NYC High School admissions predictions, for programs in which the choice function can be simulated under alternative application pools. However, for other contexts where we cannot simulate the choice function -- such as hiring and admissions by humans -- it may be useful to develop approaches to quantify uncertainty to applicant pool changes or otherwise produce predictions robust to it, rather than naively applying conventional predictive approaches.

Finally, we note that similar concerns -- the effects of capacity constraints on predictive model performance -- may also be present in other contexts such as healthcare, e.g., hospital admissions. 

\section*{Acknowledgments}
ED is supported by National Science Foundation grant DGE-2139899. NG's work is supported by NSF CAREER IIS-2339427, NASA, the Sloan Foundation, and Cornell Tech Urban Tech Hub, Google, Meta, and Amazon research awards. SD is funded by NSF CCF 2312774, NSF OAC-2311521, NSF IIS-2442137, a gift to the LinkedIn-Cornell Bowers CIS Strategic Partnership, the AI2050 Early Career Fellowship program at Schmidt Sciences, and a PCCW Affinito-Stewart Award.

\FloatBarrier

\bibliography{bib}

\appendix

\onecolumn

\counterwithin{theorem}{section}
\counterwithin{lemma}{section} 
\counterwithin{corollary}{section} 
\counterwithin{definition}{section} 
\counterwithin{example}{section} 

\section{Extended Theoretical Characterization and Proofs} \label{sec:proofs-appendix}
We provide proofs for supplemental results and intermediary theorems used to derive or expand on our main theorems. \Cref{sec:properties-appendix} introduces some additional properties and terms used in the literature that we utilize in our full proofs; \Cref{sec:property-lemmas} through \Cref{sec:d-unstable-appendix} elaborate on instability, culminating in \Cref{sec:instability-proof} where we explicitly connect the proof of \Cref{thm:instability} to the contents of the prior subsections. Similarly, \Cref{sec:variability-appendix} through \Cref{sec:high-variability-appendix} focus on variability, and \Cref{sec:variability-proof} proves \Cref{thm:variability} using the the previously established theorems. Lastly, we conclude this appendix section with \Cref{sec:ml-prop-proof}, where we prove \Cref{prop:ml-representation}, and \Cref{sec:program-prop-proof}, where we prove \Cref{prop:program-variability}.

\subsection{Additional, Existing Choice Function Properties}\label{sec:properties-appendix}
In addition to \Cref{def:q-acceptance} and \Cref{def:substitutability}, which we restate here, we introduce several additional properties, and provide plain English intuition for all of them:

\qacceptance*
\textit{Intuition}: Capacity of $q$.

\substitutability*
\textit{Intuition}: Students selected from a larger pool remain selected when the pool shrinks.    

\begin{restatable}[Monotonicity]{definition}{monotonicity} 
\label{def:monotonicity}
Choice function $\choice$ is \textit{monotonic} if $X_1 \subseteq X_2$ implies $\choice(X_1) \subseteq \choice(X_2)$.
\end{restatable}
\textit{Intuition}: Students selected from a smaller pool remain selected when the pool expands.    

\textbf{Note:} Monotonicity is much less studied than substitutability, but is conceptually and mathematically useful to understand as an ``inverse'' of substitutability.
    
\begin{restatable}[Consistency]{definition}{consistency} 
\label{def:consistency}
Choice function $\choice$ is \textit{consistent} if $\choice(X_2) \subseteq X_1 \subseteq X_2$ implies $\choice(X_2) = \choice(X_1)$.
\end{restatable}
\textit{Intuition}: Removing rejected candidates from the input does not change the output. 

\textbf{Note:} This is equivalent to the Independence of Irrelevant Alternatives, a common assumption in preference models, over elements. Note that this is distinct from substitutability; there exist inconsistent substitutable functions, and non-substitutable consistent functions. %

\begin{restatable}[$q$-representativeness]{definition}{qrepresentativeness} 
\label{def:q-representative}
$q$-acceptant choice function $\choice$ is $q$-representative if there exists a strict total ordering $\succ$ over all $x \in \calX$ and $\choice(X)$ is the top $q$ elements of $X$ for all $X \subseteq \calX$.
\end{restatable}

\textit{Intuition}: Students are selected according to an exact ranking. 

\textbf{Note:} We refer to this property in the main body as being ``characterized by a total order'', or a ``queue'', but this language --- and definition as a property in its own right --- is seen in, e.g., \citet{echenique2015control}.

\subsection{Property Lemmas}\label{sec:property-lemmas}

We introduce some quick lemmas about the above properties that simplify the proofs of our main results.

\paragraph{Fundamental Incompatibility}

A key structural constraint emerges immediately between monotonicity and $q$-acceptance.

\begin{restatable}[Incompatibility of Monotonicity and $q$-Acceptance]{lemma}{incompatibility}
\label{thm:mono-accept-mutex}
For $|\calX| > q \geq 1$, no choice function can be both monotonic and $q$-acceptant.
\end{restatable}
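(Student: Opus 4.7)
The plan is to derive a contradiction by constructing a small explicit configuration of sets where monotonicity forces too many elements into a choice set, violating the capacity bound from $q$-acceptance. The construction only needs the assumption $|\mathcal{X}| > q \geq 1$, which guarantees the existence of at least $q+1$ distinct elements in the universe.

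First I would pick any $q+1$ distinct elements $x_1, \ldots, x_{q+1} \in \mathcal{X}$, and consider the two $q$-element subsets $X_1 = \{x_1, \ldots, x_q\}$ and $X_1' = \{x_2, \ldots, x_{q+1}\}$. By $q$-acceptance, $|\mathcal{C}(X_1)| = |\mathcal{C}(X_1')| = q$, and since $\mathcal{C}(X) \subseteq X$ this forces $\mathcal{C}(X_1) = X_1$ and $\mathcal{C}(X_1') = X_1'$. Now let $X_2 = X_1 \cup X_1' = \{x_1, \ldots, x_{q+1}\}$, which has size $q+1$. Since $X_1 \subseteq X_2$ and $X_1' \subseteq X_2$, monotonicity yields $\mathcal{C}(X_2) \supseteq \mathcal{C}(X_1) \cup \mathcal{C}(X_1') = X_1 \cup X_1' = X_2$, hence $|\mathcal{C}(X_2)| \geq q+1$. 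This contradicts $q$-acceptance, which demands $|\mathcal{C}(X_2)| = \min\{q, q+1\} = q$.

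There is essentially no obstacle here; the only subtlety is making sure the universe contains enough elements to build the two overlapping sets, which is exactly what the hypothesis $|\mathcal{X}| > q$ supplies (and the side condition $q \geq 1$ ensures the sets are nonempty so that $q$-acceptance has content). If one prefers a slightly slicker phrasing, one can skip the two-set construction and simply note that any $X$ of size $q$ satisfies $\mathcal{C}(X) = X$ by $q$-acceptance, then observe that for any $x' \notin X$ monotonicity forces $X \subseteq \mathcal{C}(X \cup \{x'\})$; combining this with the analogous inclusion obtained from a second $q$-element set sharing $q-1$ elements with $X$ produces the same contradiction. I would use the two-set version since it makes the collision between the two axioms most transparent.
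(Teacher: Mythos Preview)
Your proof is correct and follows essentially the same approach as the paper: pick two $q$-element sets, use $q$-acceptance to force $\mathcal{C}(X_i) = X_i$, then apply monotonicity to the union to exceed capacity. Your overlapping-sets construction is in fact slightly tighter than the paper's formal argument, which takes the two sets to be \emph{disjoint} and thus tacitly needs $|\mathcal{X}| \geq 2q$ rather than the stated hypothesis $|\mathcal{X}| > q$.
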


\textit{Intuition}: %
For a choice function to be monotonic, it must never reject a previously accepted applicant as the applicant pool grows. This implies that the capacity must be unbounded, which contradicts the strict capacity limit imposed by $q$-acceptance.%

\begin{proof}
Consider two different sets each of $q$ elements. If $\choice$ is $q$-acceptant, every element must be chosen in each. Then consider their union. If $\choice$ is monotonic, every element in their union would also be accepted, but that would contradict the $q$-acceptant capacity constraint. 

Consider two disjoint sets $X_1, X_2 \subseteq \calX$ with $|X_1| = |X_2| = q$. By $q$-acceptance, $\choice(X_1) = X_1$ and $\choice(X_2) = X_2$. By monotonicity, both $\choice(X_1)$ and $\choice(X_2)$ must be subsets of $\choice(X_1 \cup X_2)$. But then $|\choice(X_1 \cup X_2)| \geq |X_1| + |X_2| = 2q > q$, violating $q$-acceptance.
\end{proof}

\paragraph{Non-Substitutable Functions}
\begin{restatable}{lemma}{unsubstitutable}
\label{lem:unsubstitutable}
If $\choice$ is not substitutable, there exists some $x^*$, $x'$, $X$ and $X' = X \cup \{x'\}$ where $x^* \in X$ and $x^* \in \choice(X')$ but $x^* \not\in \choice(X)$.
\end{restatable}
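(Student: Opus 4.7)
The plan is to reduce a general substitutability violation to one where the two sets differ by exactly one element, via an incremental chain argument. From the failure of Definition~\ref{def:substitutability}, I start with witnesses $X_1 \subseteq X_2$ and some $x^* \in X_1 \cap \choice(X_2)$ with $x^* \notin \choice(X_1)$. The goal is to trim this down so that the larger set is obtained from the smaller by adding a single element $x'$.

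First, I would enumerate the difference $X_2 \setminus X_1 = \{y_1, y_2, \ldots, y_k\}$ in some arbitrary order and build the chain $Y_0 = X_1 \subsetneq Y_1 \subsetneq \cdots \subsetneq Y_k = X_2$ by setting $Y_i = Y_{i-1} \cup \{y_i\}$. Note that $x^* \in X_1 \subseteq Y_i$ for every $i$, since $x^*$ comes from $X_1$ and nothing is ever removed along the chain.

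Next, I would track the membership of $x^*$ in $\choice(Y_i)$ along the chain. We have $x^* \notin \choice(Y_0) = \choice(X_1)$ and $x^* \in \choice(Y_k) = \choice(X_2)$, so there must exist a smallest index $i^* \in \{1, \ldots, k\}$ such that $x^* \in \choice(Y_{i^*})$. By the minimality of $i^*$, $x^* \notin \choice(Y_{i^*-1})$. Setting $X = Y_{i^*-1}$, $x' = y_{i^*}$, and $X' = Y_{i^*} = X \cup \{x'\}$ yields the desired configuration: $x^* \in X$, $x^* \notin \choice(X)$, and $x^* \in \choice(X')$.

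There is no real obstacle here; the argument is a routine ``discrete intermediate value'' style reduction along the chain, and the only thing to be careful about is that $x^*$ remains in every $Y_i$ (which is immediate because $Y_i \supseteq X_1 \ni x^*$) so the statement ``$x^* \in X$'' in the conclusion is automatically satisfied.
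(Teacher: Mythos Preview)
Your proposal is correct and follows essentially the same approach as the paper: both start from a generic substitutability violation $X_1\subseteq X_2$, build an increasing chain by adding the elements of $X_2\setminus X_1$ one at a time, and then pick the first index along the chain at which $x^*$ enters the chosen set. Your write-up is in fact slightly more careful than the paper's in explicitly noting that $x^*\in Y_i$ for all $i$.
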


\textit{Intuition}:
If a choice function is not substitutable, then there is at least one situation where a decision ``flips'' from rejection to acceptance.
We argue that there has to be some single element $x'$ that causes this ``flip'' when added.

\begin{proof}
$\choice$ is not substitutable if there exists some  $x^*$ and $X_0 \subseteq X_k$ where $x^* \in X_0$ and $x^* \in \choice(X_k)$ but $x \not\in \choice(X_0)$.
Let $k$ be such that $k = |X_k \setminus X_0|$.%

We want to show the existence of $x'$. To do this, we identify the ``closest'' sets where this occurs. For each $x \in X_k \setminus X_0$, add elements to $X_0$, constructing $X_0 \subseteq X_1 \subseteq ... \subseteq X_j \subseteq ... \subseteq X_k$, where $X_1 = X_0 \cup \{x_1\}$ and so on. Let $X_j$ be the first in this sequence of sets such that $x^* \in \choice(X_j)$ where $x^* \in \choice(X_j)$ but $x^* \not\in \choice(X_{j-1})$. Then let $x' = x_j$, $X = X_{j-1}$, and $X' = X_j$. 
\end{proof}

\subsection{Choice Distance} \label{sec:dist-appendix}

We provide some lemmas and a theorem about $r_\choice$,  defined in \Cref{def:distance}.

\begin{restatable}[Substitutability Term]{lemma}{sub-dist} 
\label{lem:substitutability-term}
Choice function $\choice$ is substitutable if and only if $|X_1 \cap \choice(X_2) \setminus \choice(X_1)| = 0$ for all $X_1 \subseteq X_2$. 
\end{restatable}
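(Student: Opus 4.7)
The plan is to observe that this lemma is essentially a restatement of substitutability via elementary set algebra, so the proof will be a short direct chain of equivalences rather than requiring any substantive argument.

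First I would recall the definition: $\choice$ is substitutable when, for all $X_1 \subseteq X_2$, the inclusion $X_1 \cap \choice(X_2) \subseteq \choice(X_1)$ holds. The only set-theoretic fact I need is that for any sets $A$ and $B$, we have $A \setminus B = \emptyset$ if and only if $A \subseteq B$, and that a finite set has cardinality zero iff it is empty.

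Applying this with $A = X_1 \cap \choice(X_2)$ and $B = \choice(X_1)$ gives the chain
\begin{equation*}
|X_1 \cap \choice(X_2) \setminus \choice(X_1)| = 0 \iff X_1 \cap \choice(X_2) \setminus \choice(X_1) = \emptyset \iff X_1 \cap \choice(X_2) \subseteq \choice(X_1).
\end{equation*}
Quantifying over all $X_1 \subseteq X_2$ on both sides yields the desired biconditional: the right-hand condition holding universally is exactly the definition of substitutability.

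There is no real obstacle here; the lemma is a bookkeeping statement whose purpose is to justify rewriting the first summand of the choice distance $r_\choice$ in \Cref{def:distance} as a quantitative witness of substitutability failure. I would keep the proof to two or three lines so that it can be cited cleanly in the proof of \Cref{thm:instability}, where the second summand $|\choice(X_1) \setminus \choice(X_2)|$ will play the analogous role connecting instability to monotonicity-type considerations under the $q$-acceptance constraint.
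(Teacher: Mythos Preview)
Your proposal is correct and follows essentially the same approach as the paper: both reduce the statement to the elementary fact that $|A \setminus B| = 0 \iff A \setminus B = \emptyset \iff A \subseteq B$, applied with $A = X_1 \cap \choice(X_2)$ and $B = \choice(X_1)$, which recovers the definition of substitutability verbatim.
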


\textit{Intuition}:
For a substitutable choice function, previously rejected candidates are never accepted after the addition of more applicants. Thus labels cannot flip from negative to positive, which is what the first term counts.

\begin{proof}
For any sets $A, B$, $|A \setminus B| = 0$ if and only if $A \setminus B = \emptyset$, and furthermore $A \setminus B = \emptyset$ if and only if $A \subseteq B$. So $|X_1 \cap \choice(X_2) \setminus \choice(X_1)| = 0$ if and only if $X_1 \cap \choice(X_2) \subseteq \choice(X_1)$, which is the definition of substitutability.
\end{proof}

\begin{restatable}[Monotonicity Term]{lemma}{mono-dist} 
\label{lem:monotonicity-term}
Choice function $\choice$ is monotonic if and only if $|\choice(X_1) \setminus \choice(X_2)| = 0$ for all $X_1 \subseteq X_2$.
\end{restatable}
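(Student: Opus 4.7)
The plan is to mirror the proof structure of the preceding \Cref{lem:substitutability-term} almost verbatim, since the statement is just a set-theoretic rephrasing of \Cref{def:monotonicity}. The only content is translating ``$|A \setminus B| = 0$'' into ``$A \subseteq B$'' and then noting this is exactly what monotonicity asserts when we take $A = \choice(X_1)$ and $B = \choice(X_2)$.

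First I would prove both directions using the basic set-theoretic identity: for any sets $A, B$, $|A \setminus B| = 0$ holds iff $A \setminus B = \emptyset$ holds iff $A \subseteq B$. (The first equivalence is because the cardinality of a finite set is zero exactly when the set is empty; the second is immediate from the definition of set difference.) Applying this with $A = \choice(X_1)$ and $B = \choice(X_2)$ yields $|\choice(X_1) \setminus \choice(X_2)| = 0 \iff \choice(X_1) \subseteq \choice(X_2)$. Quantifying over all pairs $X_1 \subseteq X_2$, the right-hand side is precisely \Cref{def:monotonicity}, completing both implications simultaneously.

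There is no real obstacle — this lemma is a definitional restatement, analogous to how \Cref{lem:substitutability-term} restated substitutability. Its purpose is presumably to let later proofs (e.g.\ the characterization of $d$-instability in \Cref{thm:instability}) directly bound the two terms of $r_\choice(X_1, X_2)$ by identifying each with a failure of substitutability or monotonicity respectively. I would therefore keep the proof to a single short paragraph and resist adding unnecessary machinery, so that the symmetry with \Cref{lem:substitutability-term} is visually obvious to the reader.
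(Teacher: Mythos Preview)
Your proposal is correct and essentially identical to the paper's own proof, which also reduces the statement to the set-theoretic fact that $|A \setminus B| = 0 \iff A \subseteq B$ (invoking the same identity used for \Cref{lem:substitutability-term}) and then observes that $\choice(X_1) \subseteq \choice(X_2)$ for all $X_1 \subseteq X_2$ is precisely \Cref{def:monotonicity}. Your instinct to keep it short and parallel to \Cref{lem:substitutability-term} matches exactly what the paper does.
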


\textit{Intuition}:
For a monotonic choice function, previously accepted candidates are never rejected after the addition of more applicants. Thus labels cannot flip from positive to negative, which is what the second term counts.
    
\begin{proof}
Use the same set properties as per \Cref{lem:substitutability-term},  $|\choice(X_1) \setminus \choice(X_2)| = 0$ if and only if $\choice(X_1) \subseteq \choice(X_2)$, which is the definition of monotonicity.
\end{proof}

\begin{restatable}[Triangle Inequality]{theorem}{triangle} 
\label{thm:triangle-ineq}
For any choice function $\choice$ and sets $X_1 \subseteq X_2 \subseteq X_3$, $r_\choice$ obeys the triangle inequality; 
$$r_\choice(X_1, X_3) \leq r_\choice(X_1, X_2) + r_\choice(X_2, X_3)$$
\end{restatable}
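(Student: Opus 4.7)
The plan is to reinterpret $r_\choice(X_1, X_2)$ as a count of label flips restricted to the elements of $X_1$, and then argue the triangle inequality pointwise before summing. Concretely, for any $X_1 \subseteq X_2$ and any $x \in X_1$, define an indicator $\phi(x; X_1, X_2)$ that equals $1$ when $x$'s membership status in $\choice(\cdot)$ differs between input $X_1$ and input $X_2$, and $0$ otherwise. The first step is to show
\[
r_\choice(X_1, X_2) \;=\; \sum_{x \in X_1} \phi(x; X_1, X_2).
\]
This follows by splitting the indicator into its two directions of flipping and observing, using $\choice(X_1) \subseteq X_1$, that the ``$0\!\to\!1$'' flips on $X_1$ are counted exactly by $|X_1 \cap \choice(X_2) \setminus \choice(X_1)|$ while the ``$1\!\to\!0$'' flips on $X_1$ are counted exactly by $|\choice(X_1) \setminus \choice(X_2)|$. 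This identity is the key bookkeeping lemma.

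Next I would establish a pointwise triangle inequality: for every $x \in X_1$,
\[
\phi(x; X_1, X_3) \;\leq\; \phi(x; X_1, X_2) + \phi(x; X_2, X_3).
\]
This is immediate by case analysis on whether $x$'s label at $X_2$ agrees with its label at $X_1$: if a flip occurs between $X_1$ and $X_3$, then either the label already changed between $X_1$ and $X_2$, or else it changed between $X_2$ and $X_3$ (and possibly both).

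Summing the pointwise bound over $x \in X_1$ and using the bookkeeping identity gives
\[
r_\choice(X_1, X_3) \;\leq\; r_\choice(X_1, X_2) + \sum_{x \in X_1} \phi(x; X_2, X_3).
\]
Since $X_1 \subseteq X_2$ and the indicators are nonnegative, the residual sum is at most $\sum_{x \in X_2} \phi(x; X_2, X_3) = r_\choice(X_2, X_3)$, completing the proof.

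The only subtle step is the first one: translating the two set-difference cardinalities in \Cref{def:distance} into a sum of per-element flip indicators. Once that reformulation is in hand, the triangle inequality is essentially the observation that the symmetric-difference pseudo-metric on $\{0,1\}$-valued labels obeys the triangle inequality coordinatewise, with a harmless enlargement of the index set from $X_1$ to $X_2$ in the final step.
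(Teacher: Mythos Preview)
Your proof is correct and takes a genuinely different route from the paper's. The paper proceeds by drawing a Venn diagram of the three sets $X_1 \subseteq X_2 \subseteq X_3$ together with $\choice(X_1), \choice(X_2), \choice(X_3)$, labeling the resulting atomic regions with letters, and then expressing each $r_\choice(\cdot,\cdot)$ as a union of these labeled pieces; the inequality then follows by checking that each atom appearing on the left-hand side is contained in some term on the right and invoking $|A \cup B| \leq |A| + |B|$. Your argument instead recognizes $r_\choice(X_1, X_2)$ as a Hamming-type count $\sum_{x \in X_1} \phi(x; X_1, X_2)$ of label flips restricted to $X_1$, proves the triangle inequality pointwise for $\phi$, and finishes by enlarging the summation domain from $X_1$ to $X_2$ in the cross term. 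Your approach is more conceptual: it makes transparent \emph{why} the triangle inequality holds (it is inherited from the discrete metric on $\{0,1\}$) and would generalize immediately to longer chains $X_1 \subseteq \cdots \subseteq X_k$ without redrawing diagrams. The paper's approach, by contrast, is more concrete and self-contained for a reader who prefers explicit set bookkeeping, but it is essentially a case analysis whose correctness relies on correctly identifying all relevant regions of the diagram.
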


Note that, in general, equality does not hold for all $X_1 \subseteq X_2 \subseteq X_3$; it is easy to construct a case where $r_\choice(X_1, X_2) + r_\choice(X_2, X_3) \neq r_\choice(X_1, X_3)$. For example:

\begin{align*}
    X_3 = \{a, b, c, d, e\}, \quad X_2 = \{a, b, c, d\}, \quad X_1 = \{a, b, c\} \\
    \choice(X_3) = \{a, e\}, \quad \choice(X_2) = \{a, d\},\quad \choice(X_1) = \{a, b\}
\end{align*}

and $r_\choice(X_1, X_2) = 1, r_\choice(X_2, X_3) = 1, r_\choice(X_1, X_3) = 1$.

\begin{proof}
See \Cref{fig:choice-sets} for notation about the partition of $X_1, X_2, X_3, \choice(X_1), \choice(X_2), \choice(X_3)$. 
\begin{figure}[h]
\centering
\begin{tikzpicture}
  \definecolor{colorone}{RGB}{204,121,167}
  \definecolor{colortwo}{RGB}{230,159,0}
  \definecolor{colorthree}{RGB}{0,114,178}
  
  \draw[colorthree, thick] (0,0.5) rectangle (10,8);
  \node[colorthree] at (0.3,5.5) {$X_3$};
  
  \draw[colortwo, thick] (1,1.75) rectangle (9.5,7.5);
  \node[colortwo] at (1.3,5) {$X_2$};
  
  \draw[colorone, thick] (2,2.25) rectangle (7,7);
  \node[colorone] at (2.3,4.5) {$X_1$};
  
  \draw[colorone, thick] (5,4.5) circle (1.75);
  \node[colorone] at (4,5.25) {$\choice(X_1)$};
  
  \draw[colortwo, thick] (7,4.5) circle (1.75);
  \node[colortwo] at (7.75,5.25) {$\choice(X_2)$};
  
  \draw[colorthree, thick] (6,2.5) circle (1.75);
  \node[colorthree] at (5.75, 1.1) {$\choice(X_3)$};
  
  \node at (4.5,4.5) {$A$};     %
  \node at (7.5,4.5) {$B$};     %
  \node at (6.5,1.5) {$C$};     %
  \node at (6,5) {$D$};     %
  \node at (5.15,3.25) {$E$};     %
  \node at (6,3.75) {$F$};     %
  \node at (6.6,3.25) {$G$};     %
  \node at (7.25,3.25) {$H$};       %
  \node at (6,2.5) {$J$};       %
  \node at (5.5,2) {$K$};     %
  \node at (6.6,5.75) {$L$};       %
  
\end{tikzpicture}
\caption{A set diagram with subsets labeled, for the proof of \Cref{thm:triangle-ineq}.} 
\label{fig:choice-sets}
\end{figure}
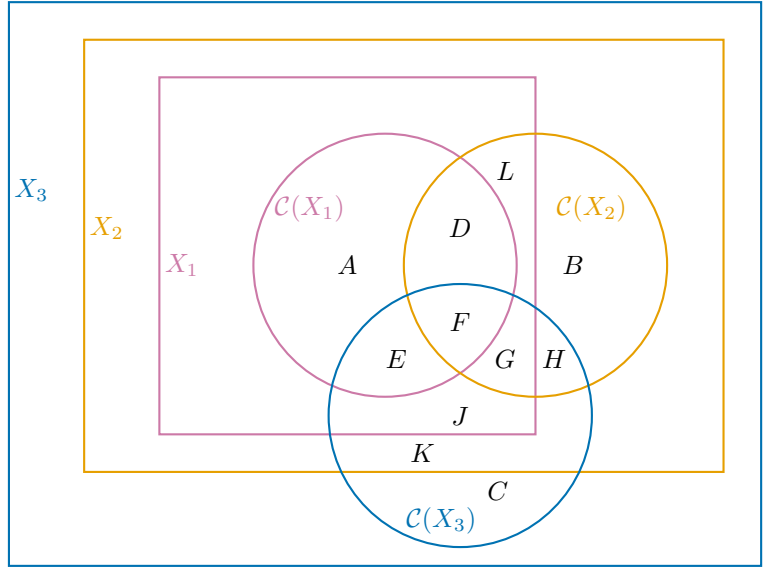
Plugging these sets into the $r_\choice$ formula, we have

\begin{align*}
    r_\choice(X_1, X_3) & = |X_1 \cap \choice(X_3) \setminus \choice(X_1)| + |\choice(X_1) \setminus \choice(X_3)| & = |J \cup G| + |A \cup D|  \\
    r_\choice(X_1, X_2) & = |X_1 \cap \choice(X_2) \setminus \choice(X_1)| + |\choice(X_1) \setminus \choice(X_2)| & = |L \cup G| + |A \cup E|  \\
    r_\choice(X_2, X_3) & = |X_2 \cap \choice(X_3) \setminus \choice(X_2)| + |\choice(X_2) \setminus \choice(X_3)| & = |E \cup J \cup K| + |L \cup D \cup B| \\
\end{align*}

Recall that set union is such that %
for any finite sets $X, Y$, $|X| \leq |X \cup Y| \leq |X| + |Y|$.
Then it becomes clear that 
\begin{align*}
    |J \cup G| + |A \cup D| & \leq |J| + |G| + |A| + |D| \\
    & \leq |E \cup J \cup K| + |L \cup G| + |A \cup E| + |L \cup D \cup B| \\
    r_\choice(X_1, X_3) & \leq r_\choice(X_1, X_2) + r_\choice(X_2, X_3)
\end{align*}
\end{proof}

\subsection{0-Instability}\label{sec:0-unstable-appendix}

\begin{restatable}[$0$-Instability]{definition}{zerostability} 
\label{def:zero-instability}
Choice function $\choice$ is 0-unstable if and only if $r_\choice(X_1, X_2) = 0$ for all $X_1 \subseteq X_2$.
\end{restatable}

This describes realizable machine learning settings, where labels do not change as a result of shifts in the applicant pool.

\begin{restatable}{corollary}{Zero-Distance} 
\label{thm:sub-mon-zero}
Choice function $\choice$ is 0-unstable if and only if it is both substitutable and monotonic.
\end{restatable}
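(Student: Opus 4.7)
The plan is to observe that the corollary reduces almost immediately to the two termwise lemmas (\Cref{lem:substitutability-term} and \Cref{lem:monotonicity-term}) once we unpack the definition of $r_\choice$. By \Cref{def:distance}, $r_\choice(X_1, X_2)$ is a sum of two cardinalities, hence two non-negative integers. Therefore, for a fixed pair $X_1 \subseteq X_2$, we have $r_\choice(X_1, X_2) = 0$ if and only if both $|X_1 \cap \choice(X_2) \setminus \choice(X_1)| = 0$ and $|\choice(X_1) \setminus \choice(X_2)| = 0$.

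Next I would quantify over all $X_1 \subseteq X_2$ using \Cref{def:zero-instability}. This turns the previous equivalence into: $\choice$ is 0-unstable iff for every $X_1 \subseteq X_2$ both of those cardinalities vanish. Applying \Cref{lem:substitutability-term} to the first family of equalities yields substitutability, and applying \Cref{lem:monotonicity-term} to the second yields monotonicity. Conversely, if $\choice$ is both substitutable and monotonic, the two lemmas supply the termwise zeros for every $X_1 \subseteq X_2$, so their sum $r_\choice(X_1, X_2)$ is zero throughout, giving 0-instability.

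The one small subtlety worth being careful about is that \Cref{def:d-instability} originally quantifies $d$-instability only over single-element extensions $X' = X \cup \{x'\}$, whereas \Cref{def:zero-instability} phrases 0-instability over arbitrary $X_1 \subseteq X_2$. I would note that these agree for $d = 0$: one direction is immediate, and the other follows by iterating single-element additions and invoking \Cref{thm:triangle-ineq}, so that a chain of zero single-step distances forces the cumulative distance to be zero as well. Given this, the argument has no real obstacle; it is essentially a bookkeeping proof that combines the two termwise characterizations, and the main thing to get right is simply citing the correct lemmas and keeping the two directions of the biconditional clean.
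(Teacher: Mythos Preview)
Your proposal is correct and follows essentially the same route as the paper, which dispatches the corollary in one sentence by observing that the two summands of $r_\choice$ correspond to substitutability and monotonicity via \Cref{lem:substitutability-term} and \Cref{lem:monotonicity-term}. Your extra paragraph reconciling \Cref{def:d-instability} (single-element extensions) with \Cref{def:zero-instability} (arbitrary $X_1 \subseteq X_2$) via \Cref{thm:triangle-ineq} is a careful addition the paper leaves implicit, but it does not change the approach.
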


This immediately follows from the two terms comprising $r_\choice$ mapping onto substitutability and monotonicity.

\subsubsection{Independence}

\begin{restatable}{definition}{Independence} 
\label{def:independence}
$\choice$ is independent if for 
 for all $x$,
either $x \in \choice(X)$ 
or $x \not\in \choice(X)$ for all $X$ where $x \in X$. 
\end{restatable}

In other words, any element is either always accepted, or always rejected, when available, irrespective of the other elements available in a set $S$.

\subsubsection{Equivalence}

\begin{restatable}{theorem}{Independent Rules are Substitutable and Monotonic} 
\label{thm:independent-sub-mon}
$\choice$ is independent if and only if it is both substitutable and monotonic.
\end{restatable}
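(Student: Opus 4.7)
The plan is to prove the biconditional by tackling each direction separately, with the reverse direction (substitutable $+$ monotonic $\Rightarrow$ independent) requiring slightly more work via a bridging set.

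For the forward direction, I would assume $\choice$ is independent and derive both substitutability and monotonicity directly from \Cref{def:independence}. Take any $X_1 \subseteq X_2$. For substitutability, any $x \in X_1 \cap \choice(X_2)$ is chosen from $X_2$, hence by independence is chosen from every set containing $x$, in particular $X_1$, so $x \in \choice(X_1)$. For monotonicity, any $x \in \choice(X_1)$ is likewise always chosen when present, so $x \in \choice(X_2)$ since $x \in X_1 \subseteq X_2$. Both inclusions are immediate.

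For the reverse direction, I would argue by contrapositive: assume $\choice$ is not independent and produce a violation of substitutability or monotonicity. Non-independence gives an element $x$ and sets $X_a, X_b$ with $x \in X_a \cap X_b$, $x \in \choice(X_a)$, and $x \notin \choice(X_b)$. The key trick is to introduce the intermediate set $X_0 := X_a \cap X_b$, which contains $x$ and is contained in both $X_a$ and $X_b$. Applying substitutability to $X_0 \subseteq X_a$ with $x \in X_0 \cap \choice(X_a)$ forces $x \in \choice(X_0)$. Applying monotonicity to $X_0 \subseteq X_b$ with $x \in \choice(X_0)$ forces $x \in \choice(X_b)$, contradicting our assumption. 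Hence $\choice$ must be independent.

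The main obstacle is identifying the correct bridging set in the reverse direction: one might initially try $X_a \cup X_b$, but that only lets each property be applied in one direction and doesn't close the loop. Using the intersection $X_a \cap X_b$ instead allows substitutability to ``pull down'' membership from $X_a$ and monotonicity to ``push up'' membership to $X_b$, chaining the two properties together. Once this bridge is chosen, the argument is a one-line application of each definition. No additional lemmas from the preceding subsections are needed, since the proof relies only on \Cref{def:substitutability}, \Cref{def:monotonicity}, and \Cref{def:independence}.
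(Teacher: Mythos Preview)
Your proof is correct. The forward direction matches the paper's argument essentially verbatim. The reverse direction is where the two diverge: the paper bridges via the \emph{union} $X \cup X'$ (monotonicity lifts $x \in \choice(X)$ to $x \in \choice(X \cup X')$, then substitutability on $X' \subseteq X \cup X'$ pulls it down to $x \in \choice(X')$), whereas you bridge via the \emph{intersection} $X_a \cap X_b$ (substitutability pulls $x$ down from $X_a$ to $X_0$, then monotonicity pushes it up to $X_b$). Both routes are clean two-step chains and neither is materially shorter; they are dual to each other.

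One remark worth correcting: your claim that the union approach ``only lets each property be applied in one direction and doesn't close the loop'' is mistaken. With $X_a \cup X_b$ as the bridge, monotonicity on $X_a \subseteq X_a \cup X_b$ gives $x \in \choice(X_a \cup X_b)$, and then substitutability on $X_b \subseteq X_a \cup X_b$ (noting $x \in X_b$) gives $x \in \choice(X_b)$, the same contradiction. This is precisely the paper's route. So the union works just as well; you simply have to apply the two properties in the opposite order from your intersection argument.
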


\textit{Intuition}: The only way for adding applicants to neither hurt those already accepted or help those already rejected is to not affect them at all.
\begin{proof}
    Assume $\choice$ is independent.

    \textbf{Informal}
    Consider how independence implies substitutability and monotonicity. Independence means any element is always accepted or always rejected regardless of applicant pool. Then any $x$ that is accepted in $X_1$ is still accepted in any subset of $X_1$, which satisfies the definition of substitutability, and is accepted in any superset of $X_1$, which satisfies monotonicity.

    To prove that substitutability and monotonicity combine to form independence, consider the behavior they cover. Recall the intuition behind substitutability --- removing applicants does not lead to rejecting those accepted, and the intuition behind monotonicity --- adding applicants does not lead to rejecting those accepted. Given both properties, nothing can affect an acceptance decision at all. 

    \textbf{Formal}
    First, we assume independence and prove $0$-stablility.
    Recall \Cref{lem:substitutability-term}: $\choice$ is substitutable if and only if $|X_1 \cap \choice(X_2) \setminus \choice(X_1)| = 0$ for all $X_1 \subseteq X_2$. Then for any $x \in X_1$, either $x \in \choice(X_1),\choice(X_2)$ or $x \not\in \choice(X_1),\choice(X_2)$, so $X_1 \cap \choice(X_2) = \choice(X_1)$. Then $|X_1 \cap \choice(X_2) \setminus \choice(X_1)| = 0$ and so $\choice$ is substitutable. Similarly, recall \Cref{lem:monotonicity-term}: $\choice$ is monotonic if and only if $|\choice(X_1) \setminus \choice(X_2)| = 0$ for all $X_1 \subseteq X_2$. Any element $x \in \choice(X_1)$ is in $X_2$ because $\choice(X_1) \subseteq X_1 \subseteq X_2$. By definition of independence, $x \in \choice(X)$ for all $X$ where $x \in X$, so if $x \in \choice(X_1)$, $x \in \choice(X_2)$, so $\choice(X_1) \subseteq \choice(X_2)$.

    To prove that substitutability and monotonicity together imply independence, consider any $x \in X$. If $x \in \choice(X)$, we can show that $x \in \choice(X')$ for any $X'$ where $x \in X'$. By monotonicity, $x \in \choice(X' \cup X)$, and so by substitutability, $x \in \choice(X')$. If $x \not\in \choice(X)$, we can prove $x \not\in \choice(X')$ for any $X'$. If we assumed for contradiction that $x \in \choice(X')$, then the same argument as above proves $x \in \choice(X)$, leading to contradiction.
\end{proof}

\begin{restatable}{corollary}{Only Independent Rules are 0-unstable} 
\label{thm:independent-zero-unstable}
$\choice$ is $0$-unstable if and only if it is independent, and therefore no $q$-acceptant $\choice$ can be $0$-unstable.
\end{restatable}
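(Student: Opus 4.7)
The plan is to chain together the two equivalences already established in the preceding two subsections, then apply the incompatibility lemma to handle the $q$-acceptant half of the statement. Both halves should follow directly from existing results, so this is genuinely a corollary rather than a standalone proof.

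For the first claim (that $0$-unstable is equivalent to independent), I would invoke \Cref{thm:sub-mon-zero}, which characterizes $0$-unstable as the conjunction of substitutability and monotonicity, and then \Cref{thm:independent-sub-mon}, which characterizes independence as exactly the same conjunction. Transitivity of ``if and only if'' gives the equivalence of $0$-instability and independence in one line, so no additional work is needed here.

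For the second claim (that no $q$-acceptant $\choice$ can be $0$-unstable), I would argue by contradiction in the nontrivial regime $|\calX| > q \geq 1$. Suppose $\choice$ is both $q$-acceptant and $0$-unstable. By the first claim (equivalently, by \Cref{thm:sub-mon-zero} directly), $0$-instability implies monotonicity. But \Cref{thm:mono-accept-mutex} states that monotonicity and $q$-acceptance are incompatible when $|\calX| > q \geq 1$, yielding the contradiction. The edge case $|\calX| \leq q$ is degenerate --- a $q$-acceptant function accepts every input set in full, which is trivially $0$-unstable --- so this implicit assumption should be noted for completeness.

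The main potential obstacle is not mathematical but bookkeeping: making sure the direction of each implication is clearly labeled and that the dependence on $|\calX| > q \geq 1$ matches how \Cref{thm:mono-accept-mutex} was stated. Since all the real work has been done in the prior lemmas and theorem, the proof itself should amount to two short paragraphs: one citing the chain of equivalences, and one applying the incompatibility lemma to rule out $0$-instability under capacity constraints.
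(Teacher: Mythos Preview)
Your proposal is correct and matches the paper's own proof exactly: the paper simply states that the result ``immediately follows from \Cref{thm:sub-mon-zero}, \Cref{thm:independent-sub-mon}, and \Cref{thm:mono-accept-mutex}.'' Your added remark about the degenerate case $|\calX| \leq q$ is a useful clarification the paper omits.
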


This immediately follows from $\Cref{thm:sub-mon-zero}$, $\Cref{thm:independent-sub-mon}$, and $\Cref{thm:mono-accept-mutex}$.

\subsection{1-Instability} \label{sec:1-unstable-appendix}

\subsubsection{Equivalence with Substitutability} \label{sec:sub-equiv-proof}

\begin{restatable}[Substitutability and 1-Instability are Equivalent Under Capacity Constraints]{theorem}{subequiv}
\label{thm:sub-unstable-equiv}
Any $q$-acceptant choice function is $1$-unstable if and only if it is substitutable.
\end{restatable}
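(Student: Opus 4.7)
The plan is to split the biconditional and handle each direction separately, using the two-term decomposition of $r_\choice$ given in \Cref{def:distance} together with \Cref{lem:substitutability-term}.

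For the forward direction (substitutable implies $1$-unstable), fix $X$ and $X' = X \cup \{x'\}$. By \Cref{lem:substitutability-term}, substitutability kills the first term: $|X \cap \choice(X') \setminus \choice(X)| = 0$. It remains to bound $|\choice(X) \setminus \choice(X')| \leq 1$ using $q$-acceptance. I would do a short case split on whether $|X| < q$ (in which case $\choice(X) = X$ is fully contained in $X' \supseteq \choice(X)$ or in $\choice(X') = X'$, giving $0$), and $|X| \geq q$ (where $|\choice(X)| = |\choice(X')| = q$). In the latter case, substitutability gives $\choice(X') \cap X \subseteq \choice(X)$, so $\choice(X')$ contains at least $q-1$ elements of $\choice(X)$ (with the one remaining slot potentially filled by $x'$), bounding the second term by $1$.

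For the reverse direction I would argue the contrapositive: if $\choice$ is $q$-acceptant but not substitutable, then $r_\choice(X, X') \geq 2$ for some single-element expansion. By \Cref{lem:unsubstitutable}, there exist $X$, $x' \notin X$, $X' = X \cup \{x'\}$, and a witness $x^* \in X$ with $x^* \in \choice(X') \setminus \choice(X)$. Since $x^* \in X \setminus \choice(X)$ implies $|\choice(X)| < |X|$, $q$-acceptance forces $|\choice(X)| = q$ and likewise $|\choice(X')| = q$. The equal cardinalities mean $|\choice(X) \setminus \choice(X')| = |\choice(X') \setminus \choice(X)| \geq 1$ (the latter contains $x^*$). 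Simultaneously, $x^* \in X \cap \choice(X') \setminus \choice(X)$ forces the first term of $r_\choice$ to be at least $1$, so the total is at least $2$.

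The main obstacle, and the place requiring most care, is the forward direction at the boundary $|X| = q$ (and ensuring the argument handles the trivial case $x' \in X$, where $X' = X$ makes $r_\choice = 0$ vacuously). The argument that substitutability plus $q$-acceptance pins $|\choice(X) \cap \choice(X')| \geq q-1$ is the crucial step; once that is established, both directions follow from bookkeeping on the two terms of $r_\choice$ and the equal-cardinality symmetry of symmetric differences between equal-size sets. No new machinery beyond \Cref{lem:substitutability-term} and \Cref{lem:unsubstitutable} should be needed.
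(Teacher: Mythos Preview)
Your proposal is correct and follows essentially the same route as the paper's proof: both directions hinge on \Cref{lem:substitutability-term} to control the first term of $r_\choice$, on \Cref{lem:unsubstitutable} to obtain the single-element witness in the non-substitutable case, and on the equal-cardinality bookkeeping forced by $q$-acceptance to bound (or lower-bound) the second term. Your case split at $|X| < q$ and your observation that $x^* \in X \setminus \choice(X)$ forces $|\choice(X)| = q$ are exactly the details the paper's informal argument elides when it says the small-$|X|$ case is ``trivial.''
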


\begin{proof}
\textbf{Informal}
Intuitively, $q$-acceptance means that every acceptance decision that becomes a rejection must also be paired with a new acceptance to keep the size of the accepted set the same. Substitutability implies that adding elements cannot cause new acceptances from rejections, so the only possible newly accepted element is the newly added one.  

Recall that the first term of $r_\choice$ is always zero only for substitutable $\choice$. So, we only need to show $|\choice(X) \setminus \choice(X')| \leq 1$ if and only if $\choice$ is substitutable. %
First, we show that $\choice$ being substitutable is \textit{sufficient} for $1$-instability. If $\choice$ is substitutable, then all elements in $\choice(X')$ except for $x'$ must be in $\choice(X)$. By $q$-acceptance, $|\choice(X')| = |\choice(X)| = q$ (the $|X| < q$ case is trivial), so $\choice(X)$ can have at most one element not in $\choice(X')$, making $\choice$ 1-unstable. 

Second, we show that substitutability is \textit{necessary} for $1$-instability. If $\choice$ is not substitutable, there is some $X \subseteq X'$ and $x^* \in X$ such that $x^* \notin \choice(X)$ but $x^* \in \choice(X')$ (by \Cref{lem:unsubstitutable}). This creates one change. By $q$-acceptance, this forces another candidate out of $\choice(X')$, creating at least a second change, violating 1-instability.

Together, these show that under $q$-acceptance, substitutability is necessary and sufficient for $1$-instability, completing the proof.
\end{proof}

\begin{restatable}{theorem}{BoundedDistance} 
\label{thm:sub-accept-consistent}
Any $q$-acceptant, substitutable $\choice$ is consistent.
\end{restatable}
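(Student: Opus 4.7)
The plan is to unpack the three definitions and show that the two accepted sets coincide by first establishing containment via substitutability, then matching cardinalities via $q$-acceptance.

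First, I would fix $X_1, X_2$ with $\choice(X_2) \subseteq X_1 \subseteq X_2$ and apply substitutability to the pair $X_1 \subseteq X_2$. Substitutability gives $X_1 \cap \choice(X_2) \subseteq \choice(X_1)$. Because the hypothesis $\choice(X_2) \subseteq X_1$ forces $X_1 \cap \choice(X_2) = \choice(X_2)$, this collapses to the containment
\[
\choice(X_2) \subseteq \choice(X_1).
\]
This is the one genuine use of substitutability, and it is immediate.

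The remaining task is to upgrade this containment into equality, which I would do by a short case split on whether $X_2$ saturates the capacity. If $|X_2| \leq q$, then $q$-acceptance yields $\choice(X_2) = X_2$, and the sandwich $X_2 = \choice(X_2) \subseteq X_1 \subseteq X_2$ forces $X_1 = X_2$, making the conclusion trivial. Otherwise $|X_2| > q$, so $|\choice(X_2)| = q$; combined with $\choice(X_2) \subseteq X_1$ this gives $|X_1| \geq q$, hence $|\choice(X_1)| = q$ by $q$-acceptance. Two sets of equal finite cardinality $q$ with one contained in the other must be equal, so $\choice(X_1) = \choice(X_2)$.

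I do not anticipate a real obstacle here: the only subtlety is remembering to handle the edge case $|X_2| \leq q$ separately, since the argument via $q$-acceptance cardinalities only bites when both $\choice(X_1)$ and $\choice(X_2)$ have size exactly $q$. Everything else is a one-line application of a previously stated property.
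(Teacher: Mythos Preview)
Your proposal is correct and follows essentially the same approach as the paper: apply substitutability to obtain $\choice(X_2) \subseteq \choice(X_1)$, then use $q$-acceptance to equate cardinalities and conclude equality. Your explicit case split on whether $|X_2| \leq q$ actually makes rigorous a step the paper glosses over (it asserts $|\choice(X_1)| = |\choice(X_2)|$ without justification), so if anything your version is slightly more careful.
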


\begin{proof}
Consider $X_1, X_2$ where $\choice(X_2) \subseteq X_1 \subseteq X_2$ for substitutable, $q-$acceptant $\choice$. Substitutability guarantees that if $x \in \choice(X_2)$ and $x \in X_1$, $x \in \choice(X_1)$. By construction, this applies to all $x \in \choice(X_2)$, because $\choice(X_2) \subseteq X_1$; therefore, $\choice(X_2) \subseteq \choice(X_1)$. As $|\choice(X_2)| = |\choice(X_1)|$, and $\choice(X_1),\choice(X_2)$ are finite sets, this means that $\choice(X_2) = \choice(X_1)$, which is the definition of consistency.
\end{proof}

\subsection{\textit{d}-Instability} \label{sec:d-unstable-appendix}

\subsubsection{Proof of \Cref{lem:unstable-calc}} \label{sec:calc-proof}

\begin{restatable}[Calculating Instability]{lemma}{stablecalc} 
\label{lem:unstable-calc}
For all $q$-acceptant $\choice$, let $X' = X \cup \{x'\}$ for $|X| \geq q$ and $n = |\choice(X) \setminus \choice(X')|$.  Then 
$r_\choice(X, X') = 2n-\mathbf{1}_{x'\in\choice(X')}$. 
\end{restatable}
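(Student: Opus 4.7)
The plan is a direct counting argument using the fact that $q$-acceptance fixes the sizes of both $\choice(X)$ and $\choice(X')$. Since $|X|\ge q$ and $|X'|=|X|+1\ge q$, the assumption of $q$-acceptance gives $|\choice(X)|=|\choice(X')|=q$. I would first note that the second term of $r_\choice(X,X')$ is by definition exactly $n$, so the task reduces to proving that the first term, $|X \cap \choice(X') \setminus \choice(X)|$, equals $n-\mathbf{1}_{x'\in\choice(X')}$.

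The key observation I would use is that $\choice(X')\subseteq X'=X\cup\{x'\}$, so $\choice(X')\cap X$ equals $\choice(X')$ if $x'\notin\choice(X')$ and $\choice(X')\setminus\{x'\}$ otherwise. In both cases, $|\choice(X')\cap X|=q-\mathbf{1}_{x'\in\choice(X')}$. Combined with $\choice(X)\subseteq X$, this lets me rewrite
\[
|X\cap\choice(X')\setminus\choice(X)| = |\choice(X')\cap X|-|\choice(X')\cap\choice(X)|.
\]

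Next I would compute $|\choice(X')\cap\choice(X)|$ from the definition of $n$: since $|\choice(X)|=q$ and $|\choice(X)\setminus\choice(X')|=n$, inclusion-exclusion inside $\choice(X)$ gives $|\choice(X)\cap\choice(X')|=q-n$. Substituting yields $|X\cap\choice(X')\setminus\choice(X)|=(q-\mathbf{1}_{x'\in\choice(X')})-(q-n)=n-\mathbf{1}_{x'\in\choice(X')}$, and adding the second term $n$ gives the claim.

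I do not expect a real obstacle here; the only place where care is needed is making sure the two $q$-values agree, which is where the hypothesis $|X|\ge q$ is used (otherwise $|\choice(X)|<q$ and the bookkeeping breaks down). Everything else is elementary set arithmetic from the definition of $r_\choice$ in \Cref{def:distance}.
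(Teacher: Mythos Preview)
Your proof is correct and follows essentially the same counting argument as the paper: both use $q$-acceptance to force $|\choice(X)|=|\choice(X')|=q$, then exploit $\choice(X')\subseteq X\cup\{x'\}$ to show that the first term of $r_\choice$ equals $n-\mathbf{1}_{x'\in\choice(X')}$. The only cosmetic difference is that you compute $|\choice(X')\cap X|$ and $|\choice(X)\cap\choice(X')|$ separately and subtract, whereas the paper routes the same computation through $|\choice(X')\setminus\choice(X)|$ and a two-case split; the substance is identical.
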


\begin{proof}
\textbf{Informal}
If input sets differ by one element, but the selected sets differ by $n$ elements, then it must be that $n$ previously accepted elements were rejected and there must be $n$ new accepted elements, at least $n-1$ must have been previously rejected, because the size of the accepted set is fixed. 

\textbf{Formal}
By $q$-acceptance, $|\choice(X)| = |\choice(X')| = q$. Then $|\choice(X) \setminus \choice(X')| = |\choice(X') \setminus \choice(X)| = q-|\choice(X) \cap \choice(X')|$. 
Second, note that $\choice(X') \setminus \choice(X) = (\choice(X') \cap X) \setminus \choice(X)$ if $x' \not\in \choice(X')$, and otherwise $\choice(X') \setminus \choice(X) = \{x'\} \cup \left((\choice(X') \cap X) \setminus \choice(X)\right)$. 
Putting the previous statements together, we get two cases:
\begin{itemize}
    \item $x' \not\in \choice(X')$: $|\choice(X) \setminus \choice(X')| = |\choice(X') \setminus \choice(X)| = |X \cap \choice(X') \setminus \choice(X)|$ 
    \item $x' \in \choice(X')$: $|\choice(X) \setminus \choice(X')| = |\choice(X') \setminus \choice(X)| = |X \cap \choice(X') \setminus \choice(X)| + 1$ 
\end{itemize}
Then $r_\choice(X, X') = |X \cap \choice(X') \setminus \choice(X)| + |\choice(X) \setminus \choice(X')| = 2|X \cap \choice(X') \setminus \choice(X)| + \mathbf{1}_{x'\in\choice(X')} = 2|\choice(X) \setminus \choice(X')| - \mathbf{1}_{x'\in\choice(X')}$
\end{proof}

\begin{example}
Suppose applicants are each one of at least 
$n$ different types (e.g. instrument specializations) and suppose they also have an overall ``talent'' rating.
Then suppose that admissions is determined by forming complementary groups of size $n$ (e.g. quartets) to the extent possible, and is otherwise based on a single ``talent'' rating.
Suppose type 1 candidates are the bottleneck in forming groups. 
Then one additional type 1 candidate in the applicant pool allows one additional group to be admitted ($n$ new admits). By \Cref{lem:unstable-calc}, this rule cannot be $d$ unstable for $d<2n-1$.
\end{example}

\subsubsection{Inconsistent Instability}\label{sec:inconsistent-proof}

\begin{restatable}[Even Instability is Inconsistent]{theorem}{inconsistentstable} 
\label{thm:even-inconsistency}
For any $q$-acceptant $\choice$, $\choice$ is inconsistent if and only if there exist $X, X' = X \cup \{x'\}$ where $r_\choice(X, X')$ is positive and even. 
\end{restatable}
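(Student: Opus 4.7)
The plan is to translate the statement into the language of Lemma~\ref{lem:unstable-calc}, which computes $r_\choice(X,X')$ explicitly. By that lemma, $r_\choice(X,X') = 2n - \mathbf{1}_{x'\in\choice(X')}$ where $n = |\choice(X)\setminus\choice(X')|$. So $r_\choice(X,X')$ is positive and even if and only if $x'\notin\choice(X')$ and $n\geq 1$; equivalently, $\choice(X')\subseteq X$ (i.e.\ the added element is rejected) and $\choice(X)\neq\choice(X')$. The proof reduces to showing this is equivalent to inconsistency.

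For the backward direction, suppose $X, X' = X\cup\{x'\}$ satisfy $r_\choice(X,X')$ positive and even. Set $X_1 := X$ and $X_2 := X'$. Since $x'\notin\choice(X')$, we have $\choice(X_2) = \choice(X')\subseteq X'\setminus\{x'\} = X = X_1 \subseteq X_2$, yet $\choice(X_1)\neq\choice(X_2)$; this is exactly a violation of \Cref{def:consistency}.

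The forward direction is the main step and requires picking the right element in the sequence. Suppose $\choice$ is inconsistent, witnessed by $X_1\subseteq X_2$ with $\choice(X_2)\subseteq X_1$ but $\choice(X_1)\neq\choice(X_2)$. Let $X_2\setminus X_1 = \{z_1,\ldots,z_k\}$ and build the chain $Y_0 = X_1 \subseteq Y_1\subseteq\cdots\subseteq Y_k = X_2$ by adding one $z_j$ at a time. Because $\choice(Y_0)\neq\choice(Y_k)$, there exists at least one index where the choice changes; let $i$ be the \textbf{largest} such index, so that $\choice(Y_i)=\choice(Y_{i+1})=\cdots=\choice(Y_k)=\choice(X_2)$. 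Take $X := Y_{i-1}$, $x' := z_i$, and $X' := Y_i$.

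The key observation is that $x'\notin\choice(X')$. Indeed $\choice(X') = \choice(Y_i) = \choice(X_2) \subseteq X_1 = Y_0$, and $x' = z_i \in X_2\setminus X_1$ is disjoint from $Y_0$, so $x'\notin\choice(X')$. Moreover $\choice(X)\neq\choice(X')$ by choice of $i$, so $n = |\choice(X)\setminus\choice(X')|\geq 1$ by $q$-acceptance (\Cref{lem:unstable-calc}). Applying Lemma~\ref{lem:unstable-calc} gives $r_\choice(X,X') = 2n$, which is positive and even. The potential obstacle here --- ensuring that the witness $x'$ is rejected by $\choice(X')$ rather than accepted (which would yield an odd $r_\choice$) --- is precisely why we pick the \emph{last} change in the chain rather than the first: after the last change, no further elements are added that could be accepted, so the containment $\choice(X_2)\subseteq X_1$ forces the added element at step $i$ to be rejected.
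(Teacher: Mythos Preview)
Your proof is correct and follows the same approach as the paper: both use Lemma~\ref{lem:unstable-calc} to see that $r_\choice(X,X')$ is positive and even exactly when $x'\notin\choice(X')$ and $\choice(X)\neq\choice(X')$, which is a single-step violation of consistency. Your forward direction is in fact more careful than the paper's: the paper tacitly assumes the inconsistency witness already has $|X_2\setminus X_1|=1$, whereas you explicitly reduce a general witness $\choice(X_2)\subseteq X_1\subsetneq X_2$ to the single-element case via the chain $Y_0\subseteq\cdots\subseteq Y_k$ and the choice of the \emph{last} index where $\choice$ changes, which is exactly what guarantees $x'\notin\choice(X')$.
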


\textit{Intuition}: for $\choice$ to have an even but nonzero choice distance $r_\choice$, the indicator term in  \Cref{lem:unstable-calc} must be zero; the added $x'$ is not selected. But that means $x'$ is a rejected element that still has an impact, which violates consistency.
\begin{proof}
\textbf{Informal}
First, we prove that an even and positive value of $r_\choice$ implies inconsistency. For $\choice$ to have an even but nonzero choice distance $r_\choice$, the indicator term in  \Cref{lem:unstable-calc} must be zero; the added $x'$ is not selected. But that means $x'$ affects selection without actually being chosen, which violates consistency.

Next, prove that consistency limits the values of $r_\choice$. If $\choice$ is not consistent, then there is some $X, X'$ where the new element $x'$ is not chosen (so the indicator term in \Cref{lem:unstable-calc} is zero and $r_\choice$ is even), but changes the chosen elements. This makes  $r_\choice$ is positive under a fixed capacity constraint.

\textbf{Formal}
Consider \Cref{lem:unstable-calc}. See that for $r_\choice \neq 0$, $r_\choice$ is odd if and only if $x' \in \choice(X')$ and even if and only if $x' \not\in \choice(X')$. If $x' \not\in \choice(X')$, then $\choice(X') \subseteq X$. %
So if $x' \not\in \choice(X')$ but $n > 0$, $\choice(X') \subseteq X$ but $\choice(X') \neq \choice(X)$, which proves inconsistency.
Similarly, if $\choice$ is inconsistent, there is $\choice(X') \subseteq X$ but $\choice(X') \neq \choice(X)$. %
As $\choice(X') \neq \choice(X)$, $n > 0$, but for $\choice(X') \subseteq X$ to be true, $x' \neq \choice(X')$, and so $r_\choice(X, X')$ is even. 
\end{proof}

\begin{restatable}{corollary}{No Consistent Tightly Even Instability} 
\label{cor:even-inconsistency}
There is no $q$-acceptant and consistent $\choice$ that is tightly $dk$-unstable for even values of $d$.
\end{restatable}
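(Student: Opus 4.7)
The plan is to derive this corollary by pairing \Cref{thm:even-inconsistency} with a quick unpacking of the ``tightly unstable'' condition. Suppose for contradiction that $\choice$ is $q$-acceptant, consistent, and tightly $d$-unstable for some even $d \geq 0$. I would proceed by case analysis on $d$.

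For the $d=0$ case, I would invoke \Cref{thm:independent-zero-unstable} directly: no $q$-acceptant choice function can be $0$-unstable (since $0$-instability is equivalent to independence, which is incompatible with capacity constraints on a sufficiently large universe), so none can be tightly $0$-unstable either.

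For $d > 0$ even, the key step is extracting a witness pair from the tightness condition. By definition, $\choice$ is $d$-unstable but not $(d-1)$-unstable, so there must exist some pair $X \subseteq X' = X \cup \{x'\}$ with $r_\choice(X, X') > d-1$. Combined with the universal bound $r_\choice(X, X') \leq d$ from $d$-instability, this pair must satisfy exactly $r_\choice(X, X') = d$, which is positive and even. Then \Cref{thm:even-inconsistency} --- which characterizes inconsistent $q$-acceptant choice functions as precisely those admitting some pair with positive even $r_\choice$ --- forces $\choice$ to be inconsistent, contradicting the assumption.

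I do not expect any real obstacle here, since the argument is essentially a direct rephrasing of \Cref{thm:even-inconsistency} in the language of tight instability. The only minor care point is the $d=0$ edge case, where ``$(-1)$-unstable'' is not well-defined under the stated definition, so the witness-extraction step does not apply and \Cref{thm:independent-zero-unstable} must be invoked instead.
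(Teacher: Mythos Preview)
Your proposal is correct and takes essentially the same approach as the paper, which simply states that the corollary ``follows directly from \Cref{thm:even-inconsistency}''; you have spelled out exactly that derivation. Your separate handling of the $d=0$ edge case via \Cref{thm:independent-zero-unstable} is an extra bit of care the paper's one-line proof glosses over.
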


This follows directly from \Cref{thm:even-inconsistency}.

While inconsistent choice functions are generally less intuitive (and formally studied much less), there exist some real-world examples of inconsistent choice functions. Famously, voting theory is rife with spoiler effects, where adding a dummy candidate changes the selected outcome. Consider the choice function in the US presidential election of 1912. Say that with the candidate pool being just William Taft and Woodrow Wilson, Taft would win (Wilson only got 41\% of votes). However, adding Theodore Roosevelt to the ballot led to Wilson winning, which violates consistency. 
More relevantly to our setting, some schools value student uniqueness. We could imagine cases where a university admits a student because their niche interest is particularly noteworthy, but if multiple have this interest, they may end up taking none. Similarly, if only one student from an unknown high school applies to a university, they may stand out more than if every student from that school applied. 

\subsection{Proof of \Cref{thm:instability}} \label{sec:instability-proof}

Recall \Cref{thm:instability}:
\thminstability*

Putting things together, the first statement of \Cref{thm:instability} comes from \Cref{thm:independent-zero-unstable} and the second from \Cref{thm:sub-unstable-equiv}. The third 
has roots in \Cref{lem:unstable-calc}, which allows us to calculate instability.

Rigorously, we can generate a $d$-unstable $\choice$ for any $1 < d \leq 2q$. Take any $1$-unstable, $q$-acceptant $\choice$ induced by a total order $\succ$ (i.e, is a ranking). Let $n$ be $\frac{d}{2}$, rounded down. Then let $X^* = \{x^*_1, ..., x^*_n\}$ be the $n$ \textit{minimal} elements of $\succ$ (or any $n$ elements that are not among the $n$ maximal elements, if there exist any.) Then modify $\choice$ such that if $X^* \subseteq X$ for any input $X$, $X^* \subseteq \choice(X)$. Then $\choice$ is tightly $2n-1$-unstable. To make $\choice$ $2n$-unstable, let there be some $x'$ such that if $X^* \subseteq X$ but $x' \in X$ as well, then \textit{no} element of $X^*$ is in $\choice(X)$. %

The intuition for the construction is that $X^*$ forms a highly desirable team, even though each element is not individually desirable, nor are partial subsets of $X^*$.
In the inconsistent case (even instability) $x'$ represents a candidate whose mere presence negates the value of $X^*$ as a team, even when not accepted. 

\subsection{Variability} \label{sec:variability-appendix}

\subsubsection{Generalizing and Justifying Variability}

In \Cref{sec:theory-definitions}, we provide a definition of variability for $1$-unstable, $q$-acceptant functions. Here, we provide a more thorough justification of \Cref{def:variability} by reducing it from a more general and nuanced characterization.
Notably, instability counts how many admissions decisions change when an applicant is added \textbf{or} removed. The definition in \Cref{def:variability}, however, focuses only on the case of \textbf{adding} elements. In the general case, there are two relevant sets, defined when adding or removing an element. 

As per \Cref{def:variability}, when appending a new element to the input set $X$, different applicants may become rejected: let the set of ``borderline admits'' $V_\choice(X)$ be 
\begin{equation}
V_\choice(X) = \bigcup\limits_{x'\in\calX} \choice(X) \setminus \choice(X \cup \{x'\})
\end{equation}
which we shall call the ``borderline set'' for short. Notice that this is bounded by $q$, as $|\choice(X)| = q$.

Conversely, when removing an applicant from $X$, let the set of applicants that might be newly accepted be
\begin{equation}
A_\choice(X) = \bigcup_{x^*\in X} \choice(X \setminus \{x^*\}) \setminus \choice(X)
\end{equation}
the set of ``waitlisted rejections''; likewise, we shall call this the ``waitlisted set''. 

\begin{definition}[Variability, General]
\label{def:variability-general}
A choice function $\choice$ has \textit{variability} $m$ where:
$$m := \max\left\{\max_{X\subset \calX} \left|V_\choice(X)\right|, \max_{X\subset \calX} \left|A_\choice(X)\right| \right\}$$
\end{definition}

\textit{Intuition}: Variability bounds how many different currently accepted candidates could potentially be displaced by adding any single new candidate \textit{or} how many different candidates are newly accepted if a currently accepted candidate is removed.%

\subsubsection{Equivalence in the 1-unstable Case} \label{sec:var-proof}

Here, we show that, for capacity-constrained, substitutable functions in large universes, the general definition in \Cref{def:variability-general} reduces to \Cref{def:variability}.

\begin{restatable}{theorem}{appendremove} 
\label{thm:append-remove-size}
For $q$-acceptant, 1-unstable $\choice$, and $|\calX| \geq 2q$, the maximum size of the waitlisted set is equal to the maximum size of the borderline set; that is, 
\begin{equation}
\max_{X\subset \calX} \left| V_\choice(X)\right| = \max_{X\subset \calX} \left| A_\choice(X)\right|
\end{equation}
\end{restatable}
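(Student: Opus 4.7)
The plan is to prove both inequalities $\max_X |V_\choice(X)| \leq \max_X |A_\choice(X)|$ and its reverse via explicit symmetric constructions. The core tool is that under $q$-acceptance with $1$-instability, substitutability (\Cref{thm:sub-unstable-equiv}) combined with \Cref{lem:unstable-calc} forces a tight one-to-one pairing: adding a single element either leaves $\choice$ unchanged or simultaneously admits the new element and evicts exactly one former admit, and removing an admitted element either leaves $\choice$ unchanged or admits exactly one former reject.

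For the first direction, I fix $X$ attaining $|V_\choice(X)| = m$ with distinct displaced elements $v_1, \ldots, v_m \in \choice(X)$. By $1$-instability each $x' \in \calX \setminus X$ displaces at most one element of $\choice(X)$, so I can choose distinct displacers $x'_1, \ldots, x'_m$ with $x'_i$ evicting $v_i$. Let $X^* = X \cup \{x'_1, \ldots, x'_m\}$, which fits in $\calX$ under the assumption $|\calX| \geq 2q$. Substitutability applied to each inclusion $X \cup \{x'_i\} \subseteq X^*$ yields $\choice(X^*) \cap X \subseteq \choice(X) \setminus \{v_i\}$; intersecting over $i$ gives $\choice(X^*) \cap X \subseteq \choice(X) \setminus \{v_1, \ldots, v_m\}$, a set of size $q - m$. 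A $q$-acceptance size count then pins down $\choice(X^*) = (\choice(X) \setminus \{v_1, \ldots, v_m\}) \cup \{x'_1, \ldots, x'_m\}$. Applying the same intersection trick to $X^* \setminus \{x'_i\}$, now using the substitutability bound coming from each $j \neq i$, forces $v_i \in \choice(X^* \setminus \{x'_i\})$, so $v_i \in A_\choice(X^*)$ for every $i$ and hence $|A_\choice(X^*)| \geq m$.

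The reverse direction is entirely symmetric: I start from $X$ maximizing $|A_\choice(X)|$ with newly admitted elements $a_1, \ldots, a_m$ and distinct evicted admits $x^*_1, \ldots, x^*_m \in \choice(X)$ (distinct by the same uniqueness argument, using that $x^*_i \in \choice(X)$ follows from consistency via \Cref{thm:sub-accept-consistent}), and set $X^\ddagger = X \setminus \{x^*_1, \ldots, x^*_m\}$. Substitutability applied via each intermediate set $X \setminus \{x^*_j\}$ identifies $\choice(X^\ddagger) = (\choice(X) \setminus \{x^*_1, \ldots, x^*_m\}) \cup \{a_1, \ldots, a_m\}$ and $\choice(X^\ddagger \cup \{x^*_i\}) = (\choice(X) \setminus \{x^*_1, \ldots, x^*_m\}) \cup \{x^*_i\} \cup \{a_j : j \neq i\}$, showing that adding $x^*_i$ to $X^\ddagger$ displaces precisely $a_i$; hence $a_i \in V_\choice(X^\ddagger)$ and $|V_\choice(X^\ddagger)| \geq m$.

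The main obstacle is executing the intersection-and-counting step cleanly. Substitutability alone only supplies an \emph{upper} bound on $\choice(X^*) \cap X$; turning this into the exact identification of $\choice(X^*)$ requires combining the $m$ separate substitutability bounds with $q$-acceptance so that both the ``$X$-side'' and the ``new-element side'' of $X^*$ simultaneously saturate the capacity. Verifying the analogous saturation for each $X^* \setminus \{x'_i\}$ (and, in the reverse direction, for each $X^\ddagger \cup \{x^*_i\}$) is the bookkeeping step where the argument can most easily go wrong. Once those exact identifications are established, the conclusion is immediate from the definitions of $V_\choice$ and $A_\choice$.
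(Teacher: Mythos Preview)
Your proposal is correct and follows essentially the same approach as the paper: both directions construct the witness set by adding (respectively removing) the $m$ distinguished ``displacer'' elements, then use substitutability together with a $q$-acceptance cardinality count to pin down $\choice$ exactly on the intermediate sets and read off the required memberships in $A_\choice$ or $V_\choice$. Your two parts simply appear in the opposite order from the paper's, and your ``intersection of the $m$ substitutability bounds plus size saturation'' is exactly the mechanism the paper uses; the only cosmetic slip is the remark that $X^*$ ``fits in $\calX$ under the assumption $|\calX|\geq 2q$'', which is automatic since every $x'_i$ already lies in $\calX$.
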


In short, if there is a set $X$ with a waitlisted set of size $u$, there must be a set $X_0$ with a borderline set of size (at least) $u$, and vice versa. %

\begin{proof}
To prove equality, we will prove that 1) $\max_{X\subset \calX} \left| V_\choice(X)\right| \geq \max_{X\subset \calX} \left| A_\choice(X)\right|$ and then that 2) $\max_{X\subset \calX} \left| V_\choice(X)\right| \leq  \max_{X\subset \calX} \left| A_\choice(X)\right|$.

\textbf{Part 1).}
To prove the first statement, assume that there is some $X$ where the waitlisted set is of cardinality at least $u$, so $x^*_1, ..., x^*_u \in A_\choice(X)$.

Then (recall that $\choice$ is consistent) there are $X_1, ..., X_u \subseteq X$ where $X_i = X \setminus \{x'_i\}$ such that $\choice(X_i) = x^*_i \cup \choice(X) \setminus x'_i$ for $i \in \{1, ..., u\}$. Then consider $X_0 = \bigcap_{i=1}^u X_i = X \setminus \{x'_1, ..., x'_u\}$, and $\choice(X_0)$. It must be that $\choice(X_0) = \{x^*_1, ..., x^*_u\} \cup \choice(X) \setminus \{x'_1, ..., x'_u\}$ by substitutability. I claim that $\{x^*_1, ..., x^*_u\} \subseteq V_\choice(X_0)$, which means $|V_c(X_0)| \geq u$, which would complete the proof.

Assume for contradiction that this is untrue; $\{x^*_1, ..., x^*_u\} \not\subseteq V_\choice(X_0)$. Then some $x^*_i$, say $x^*_u$ (without loss of generality), is not in $V_\choice(X_0)$, which means that $x^*_u \in \choice(\{x'_u\} \cup X_0)$.   
Consider the makeup of $\choice(\{x'_u\} \cup X_0)$. Recall that $\{x'_u\} \cup X_0 = \bigcap_{i=1}^{u-1} X_i = X \setminus \{x'_1, ..., x'_{u-1}\}$. Because of substitutability, as a subset of $X$, we have $\left(\choice(X) \setminus \{x'_1, ..., x'_{u-1}\}\right) \subseteq \{x'_u\} \cup X_0$. From substitutability, as a subset of from $X_1$ through $X_{u-1}$, $\{x^*_2, ..., x^*_{u-1}\}$ are likewise chosen. 
However, consider the size of $\choice(\{x'_u\} \cup X_0)$. $|\choice(X) \setminus \{x'_1, ..., x'_{u-1}\}|$ is of size $q-(u-1)$, because $x'_i \in \choice(X)$ for all $i$. We also know that $|\{x^*_1, ..., x^*_{u-1}\}| = u-1$, and as no  $x^*_i$ is in $\choice(X)$ by definition, all of these elements are distinct. As $x^*_u$ remains by assumption, this adds up to $|\choice(\{x'_u\} \cup X_0)| = q+1$, which violates $q$-acceptance and completes the proof.

\textbf{Part 2).}
The proof for 2) is extremely similar to the former. Begin with $X_0$ with $V_\choice(X_0) = \{x^*_1, ..., x^*_m\}$, with $x'_1, ..., x'_m$ such that $\choice(X'_1) = \{x'_1\} \cup \choice(X_0) \setminus \{x^*_1\}$. Let $X = \bigcup_{i=1}^m /X'_i = X_0 \cup \{x'_1,...,x'_m\}$ and $X_i = X \setminus \{x'_i\}$. Note that $X_i = X'_1 \cup ... X'_{i-1} \cup X'_{i+1} \cup ... \cup X_m$. %
Because of $1$-instability, $|\choice(X_0) \setminus \choice(X)| \leq |X_0 \setminus X| = m$. Because of substitutability, any $x \in \choice(X)$ is either in $\choice(X_0)$ or not in $X_0$ and therefore in $\{x'_1, ..., x'_m\}$. 

I claim $\choice(X) = \{x'_1,...,x'_m\} \cup \choice(X_0) \setminus \{x^*_1,..., x^*_m\}$. Firstly, $x^*_i \not\in \choice(X)$ because  $x^*_i \not\in \choice(X'_i)$ and $X'_i \subseteq X$, so that would be a substitutability violation. As $\choice(X_0) \setminus \{x^*_1,..., x^*_m\}$ has $q-m$ elements, the only possible elements to fill $\choice(X)$ must be the $m$ elements of $\{x'_1,...,x'_m\}$. 

The same argument holds for any $X_i$, but use $X_m$ without loss of generality for ease of notation; $X'_j \subseteq X_m$ for $m \neq j$, so $x^*_j \not\in\choice(X_m)$ by substitutability. With $\choice(X_0) \setminus \{x^*_1,..., x^*_{m-1}\}$ being the only $q-m+1$ elements of $X_0$ that can be in $\choice(X_m)$, the other $m-1$ elements must be $\{x'_1,. .., x'_{m-1}\}$. Then consider the definition of $A_\choice(X)$ and see that $\choice(X \setminus x'_i) \setminus \choice(X) = \{x^*_i\}$ for any $i$, with each $x^*_i$ being unique.
\end{proof}

\subsection{Queues and 1-Variability}\label{sec:ranking-proof}

\subsubsection{Some Lemmas}

\begin{restatable}[Consistency of Removable Sets]{lemma}{consistentremove}
\label{lem:consistent-remove}
Let $\choice$ be 1-unstable and q-acceptant with variability $m$. For any $X_1, X_2$, if $\choice(X_1) = \choice(X_2)$, then $V_\choice(X_1) = V_\choice(X_1)$.
\end{restatable}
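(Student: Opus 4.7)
The plan is to set $Y \coloneqq \choice(X_1) = \choice(X_2)$ and show $V_\choice(X_1) \subseteq V_\choice(X_2)$ (then invoke symmetry), by reusing the displacement witness from $X_1$ directly for $X_2$ via the bridge set $X_1 \cup X_2$. Throughout I would use that a $1$-unstable, $q$-acceptant $\choice$ is substitutable (\Cref{thm:instability}) and hence consistent (\Cref{thm:sub-accept-consistent}). The degenerate case $|Y| < q$ is trivial, since then neither $X_1$ nor $X_2$ can have size exceeding $q$, no displacement is possible, and $V_\choice(X_1) = V_\choice(X_2) = \emptyset$; so I may assume $|Y| = q$.

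First I would show $\choice(X_1 \cup X_2) = Y$: substitutability applied to each $X_i \subseteq X_1 \cup X_2$ gives $\choice(X_1 \cup X_2) \cap X_i \subseteq Y$, hence $\choice(X_1 \cup X_2) \subseteq Y$, and cardinalities match. Next, pick any $y^* \in V_\choice(X_1)$ with witness $x'_1$ (necessarily $x'_1 \notin X_1$); \Cref{lem:unstable-calc} combined with $1$-instability pins down $\choice(X_1 \cup \{x'_1\}) = (Y \setminus \{y^*\}) \cup \{x'_1\}$.

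The crux step rules out $x'_1 \in X_2$: if it held, $X_1 \cup \{x'_1\} \subseteq X_1 \cup X_2$ and substitutability would force $y^* \in (X_1 \cup \{x'_1\}) \cap Y \subseteq (Y \setminus \{y^*\}) \cup \{x'_1\}$, impossible since $y^* \in X_1$ while $x'_1 \notin X_1$. With $x'_1 \notin X_1 \cup X_2$ secured, substitutability applied to $X_1 \cup \{x'_1\} \subseteq X_1 \cup X_2 \cup \{x'_1\}$ forbids $y^*$ from reappearing in $\choice(X_1 \cup X_2 \cup \{x'_1\})$, so \Cref{lem:unstable-calc} and $1$-instability (starting from $\choice(X_1 \cup X_2) = Y$) pin down $\choice(X_1 \cup X_2 \cup \{x'_1\}) = (Y \setminus \{y^*\}) \cup \{x'_1\}$. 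Since this chosen set lies in $X_2 \cup \{x'_1\} \subseteq X_1 \cup X_2 \cup \{x'_1\}$, consistency yields $\choice(X_2 \cup \{x'_1\}) = (Y \setminus \{y^*\}) \cup \{x'_1\}$, placing $y^*$ in $V_\choice(X_2)$. I expect the main obstacle to be precisely this degenerate case $x'_1 \in X_2$, where the witness cannot simply be transplanted; happily, substitutability makes it impossible, and consistency then transports witnesses verbatim.
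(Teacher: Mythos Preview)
Your argument is correct. You bridge $X_1$ and $X_2$ through the superset $X_1\cup X_2$: first you show $\choice(X_1\cup X_2)=Y$, then you rule out the degenerate case $x'_1\in X_2$ via substitutability, and finally you push the witness up to $X_1\cup X_2\cup\{x'_1\}$ and back down to $X_2\cup\{x'_1\}$ by consistency. All steps check out, including the handling of $|Y|<q$ (where in fact $X_1=Y=X_2$).

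The paper takes a shorter route: instead of going \emph{up} to $X_1\cup X_2$, it goes \emph{down} to $\choice(X)$. Concretely, it shows $V_\choice(X)=V_\choice(\choice(X))$ for every $X$, by observing that substitutability forces $\choice(X\cup\{x'\})\subseteq \choice(X)\cup\{x'\}\subseteq X\cup\{x'\}$, whence consistency gives $\choice(\choice(X)\cup\{x'\})=\choice(X\cup\{x'\})$ for every $x'$; the two unions defining $V_\choice$ then coincide term by term. The conclusion is immediate from $\choice(X_1)=\choice(X_2)$. The payoff of the paper's detour is that the intermediate identity $V_\choice(X)=V_\choice(\choice(X))$ is a clean, reusable fact (the borderline set depends only on the chosen set), and it sidesteps precisely the ``crux step'' you had to isolate: since $\choice(X)$ already sits inside both $X_1$ and $X_2$, there is no need to worry about whether a witness $x'$ accidentally lies in the other pool. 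Your approach, by contrast, stays closer to the original sets and is entirely self-contained, at the cost of that one extra case analysis.
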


\begin{proof}
Recall from \Cref{thm:sub-accept-consistent} that $\choice$ must be consistent and so $\choice(\choice(X)) = \choice(X)$ for any $X$. 

Consider any $x'$ associated with an element $x^*$ of $V_\choice(X)$, such that $\choice(X) \setminus \choice(X') = \{x^*\}$ (and even more precisely, by $q$-acceptant substitutability, that $\choice(X') = \{x'\} \cup \choice(X) \setminus \{x^*\}$ exactly). Note that $\choice(X') \subseteq \choice(X) \cup \{x'\} \subseteq X'$, and therefore $\choice(\choice(X) \cup \{x'\}) = \choice(X')$ by consistency. This applies to all possible $x'$ (including when $x' \in X$), and therefore, $V_\choice(X) = V_\choice(\choice(X))$.

As this set equality is exact, apply this logic to some $X_2$ and see $V_\choice(X_1) = V_\choice(\choice(X_1)) = V_\choice(\choice(X_2)) = V_\choice(X_2)$.
\end{proof}

\begin{restatable}{lemma}{Ranking m} 
\label{lem:a-v-equal}
Let $q$-acceptant, $1$-unstable $\choice$ have variability $1$. Then $V_\choice(X) = A_\choice(X \cup \{x'\})$ when $\choice(X \cup \{x'\}) \neq \choice(X)$.
\end{restatable}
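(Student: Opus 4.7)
The plan is to show that both $V_\choice(X)$ and $A_\choice(X \cup \{x'\})$ reduce to the same singleton $\{x^*\}$, where $x^*$ is the unique displaced element when $x'$ is added. First I would set $X' = X \cup \{x'\}$ and use $1$-instability together with the hypothesis $\choice(X') \neq \choice(X)$ to force $r_\choice(X,X') = 1$ (at least $1$ because the selected sets differ, at most $1$ by $1$-instability). Invoking \Cref{lem:unstable-calc}, $r_\choice(X, X') = 2n - \mathbf{1}_{x' \in \choice(X')}$ with $n = |\choice(X) \setminus \choice(X')|$, so the only way to hit $1$ is $n = 1$ and $x' \in \choice(X')$. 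Combined with substitutability (\Cref{thm:sub-unstable-equiv}), this pins down a unique $x^* \in \choice(X) \setminus \choice(X')$ satisfying $\choice(X') = \{x'\} \cup \choice(X) \setminus \{x^*\}$.

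Next I would read off each side. For the left-hand side, $x^* \in \choice(X) \setminus \choice(X \cup \{x'\})$ directly witnesses $x^* \in V_\choice(X)$, and variability $1$ caps its size, so $V_\choice(X) = \{x^*\}$. For the right-hand side, removing $x'$ from $X'$ returns $X$, so $\choice(X' \setminus \{x'\}) \setminus \choice(X') = \choice(X) \setminus \choice(X') = \{x^*\}$ exhibits $x^* \in A_\choice(X')$. To cap $|A_\choice(X')| \leq 1$ I would invoke \Cref{thm:append-remove-size}, which equates the maximum $V$-set and $A$-set cardinalities for $1$-unstable $q$-acceptant $\choice$, so variability $1$ transfers to the waitlisted-set side as well. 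Hence $A_\choice(X') = \{x^*\} = V_\choice(X)$.

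The main obstacle is the mild definitional bookkeeping: the body statement of variability (\Cref{def:variability}) bounds only $|V_\choice|$, whereas the argument needs a bound on $|A_\choice|$ as well. The bridge is precisely \Cref{thm:append-remove-size} from the preceding subsection; once that is in hand, each containment above is a one-line consequence of \Cref{lem:unstable-calc} and the forced form of $\choice(X')$, making the remainder of the proof essentially mechanical.
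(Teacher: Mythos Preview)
Your proof is correct and follows the same route as the paper: pin down $\choice(X') = \{x'\} \cup \choice(X) \setminus \{x^*\}$ from $1$-instability, read off $V_\choice(X) = \{x^*\}$ from variability~$1$, and then observe $x^* \in A_\choice(X')$ by removing $x'$ from $X'$. The paper's own proof is terser and simply asserts $A_\choice(X') = \{x^*\}$ at the last step without separately bounding $|A_\choice(X')|$; your explicit appeal to \Cref{thm:append-remove-size} to transfer the variability bound from the $V$-side to the $A$-side is a legitimate way to close what the paper leaves implicit (equivalently, one may read ``variability $1$'' via the general \Cref{def:variability-general}, which already caps both sets).
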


\begin{proof}
If $\choice(X \cup \{x'\}) \neq \choice(X)$, then $\choice(X \cup \{x'\}) = \{x'\} \cup \choice(X) \setminus x^*$ where $V_\choice(X) = \{x^*\}$ (by consistency, 1-instability, and the definition of variability). Then $X' \setminus \{x'\}) = X$ and so $A_\choice(X') = \{x^*\}$, completing the proof. 
\end{proof}

\subsubsection{Equivalence of Queues and $1$-variability}
Here we present the proof for $1$-unstable $1$-variability being equivalent to $q$-representativeness, i.e., being fully characterized by a total order.

\begin{restatable}[$q$-Representativeness and $1$-Variability are Equivalent]{theorem}{rankingvariability} 
\label{thm:ranking-variability}
$\choice$ is $q$-representative if and only if it is $q$-acceptant and $1$-unstable with variability $1$.
\end{restatable}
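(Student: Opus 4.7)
The statement is an ``iff'' whose forward direction is a direct verification and whose backward direction requires constructing a total order from $\choice$. For $(\Rightarrow)$, assume $\choice$ is $q$-representative with total order $\succ$. Then $q$-acceptance is immediate from taking the top $q$. Adding $x'$ to $X$ perturbs $\choice$ only if $x'$ outranks the $\succ$-minimum $m$ of $\choice(X)$; in that case exactly $m$ is displaced and $x'$ is accepted, so $r_\choice(X, X \cup \{x'\}) = 1$. The only displaceable element of $\choice(X)$ is $m$, so $V_\choice(X) \subseteq \{m\}$, giving variability $\leq 1$; this is tight in any nondegenerate setting where some outranker exists.

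For $(\Leftarrow)$, assume $\choice$ is $q$-acceptant, $1$-unstable, and $1$-variable. Theorems \ref{thm:sub-unstable-equiv} and \ref{thm:sub-accept-consistent} give substitutability and consistency. Define $a \succ b$ iff $a \neq b$ and some finite $X \ni a, b$ satisfies $a \in \choice(X)$ and $b \notin \choice(X)$. The plan is to show $\succ$ is a strict partial order, extend it arbitrarily among ``ties'' to a strict total order, and verify $\choice(X)$ equals the resulting top-$q$. Transitivity goes through a short substitutability argument: if $a \succ b$ via $X_1$ and $b \succ c$ via $X_2$, then adding $a$ to $X_2$ must either leave $\choice(X_2)$ unchanged (yielding $b \succ a$ via $X_2 \cup \{a\}$, which together with $a \succ b$ would contradict asymmetry) or displace the borderline $V_\choice(X_2) \in \choice(X_2)$, which cannot be $c$ since $c \notin \choice(X_2)$; so $X_2 \cup \{a\}$ witnesses $a \succ c$.

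For asymmetry, suppose $a \succ b$ via $X_1$ and $b \succ a$ via $X_2$. Substitutability applied to $X_1, X_2 \subseteq X_1 \cup X_2$ forces $a, b \notin \choice(X_1 \cup X_2)$. Trim $X_1 \cup X_2$ by removing rejected elements other than $a, b$ to obtain $S = \choice(X_1 \cup X_2) \cup \{a, b\}$; consistency then gives $\choice(S) = \choice(X_1 \cup X_2)$. Since $a, b \in S \setminus \choice(S)$ and $A_\choice(S)$ is a singleton by $1$-variability, at most one of $a, b$ can be the waitlisted element upon removing any chosen element of $S$; combining this asymmetric status with the two prescribed witnesses via substitutability at the appropriate displacement step produces the contradiction.

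Completeness then follows since any pair of distinct elements incomparable under $\succ$ is ``tied'' (always jointly accepted or jointly rejected in every set containing both), so any tie-breaking extends $\succ$ to a strict total order without disturbing the top-$q$ structure. The representation check is direct: any $x \in \choice(X)$ and $y \in X \setminus \choice(X)$ satisfy $x \succ y$ with $X$ itself as witness, and $|\choice(X)| = \min(q, |X|)$ pins $\choice(X)$ as the top-$q$ under $\succ$. The main obstacle is the asymmetry argument: $1$-variability only constrains single-element perturbations, so reconciling two symmetric witnesses requires pinpointing the intermediate set where uniqueness of the waitlist directly contradicts the hypotheses.
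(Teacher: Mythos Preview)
Your forward direction and your handling of transitivity and (almost-)completeness are essentially the same as the paper's, and they are correct. The real divergence is in the asymmetry step, where you take a genuinely different route from the paper's sequential construction: you pass to $S=\choice(X_1\cup X_2)\cup\{a,b\}$ and invoke $|A_\choice(S)|\leq 1$. This is a clean idea, but the sketch has a real gap at the final move.

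First, note that ``$A_\choice(S)$ is a singleton by $1$-variability'' is not immediate from Definition~\ref{def:variability}, which bounds $|V_\choice|$, not $|A_\choice|$. You are implicitly using (one direction of) Theorem~\ref{thm:append-remove-size}; that direction does go through, but it should be cited or reproved. More seriously, once you have (say) $A_\choice(S)=\{b\}$, the sentence ``combining this asymmetric status with the two prescribed witnesses via substitutability at the appropriate displacement step produces the contradiction'' is not a proof. Knowing that every single removal from $\choice(S)$ promotes $b$ rather than $a$ tells you nothing direct about the witness $X_1$ where $a\in\choice(X_1)$ and $b\notin\choice(X_1)$, because $\choice(X_1)$ and $\choice(S)$ need not overlap in any controlled way. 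There is no single ``displacement step'' linking $S$ to $X_1$ that substitutability alone can exploit; you would still need to build a chain of sets between them and track which of $a,b$ is on the boundary at each stage. That tracking argument is exactly what the paper supplies: it starts from $X_a'=\choice(X_a)\cup\{b\}\setminus\{x_1^a\}$ with $V_\choice(X_a')=\{b\}$, and then adds elements of $\choice(X_b)$ one at a time, using Lemma~\ref{lem:consistent-remove} (the borderline set depends only on $\choice(X)$) and Lemma~\ref{lem:a-v-equal} to maintain $V_\choice(X_t)=\{b\}$ throughout, until it reaches a superset of $X_b$ where $a$ must simultaneously be accepted (by the invariant) and rejected (by substitutability from $X_b$).

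Your $A_\choice(S)$ reduction could perhaps be completed by a similar propagation, but as written it stops precisely where the difficulty begins. A minor additional point: incomparable pairs under your $\succ$ cannot be ``jointly rejected'' (a size-$q{+}1$ trim would force one to be accepted and the other not), so ties occur only among elements that are \emph{always} accepted; this matches the paper's conclusion but your phrasing ``always jointly accepted or jointly rejected'' is slightly off.
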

\begin{proof}
We prove this in two parts: first, that $q$-representativeness implies $1$-variability, and then the converse.

\paragraph{Direction 1} %
Note that $q$-representativeness implies $q$-acceptance by definition. 

Next, we prove that $q$-representativeness implies $1$-instability. Assume for contradiction that some $q$-representative $\choice$ is not substitutable (and therefore not $1$-unstable). Then there exists (by \Cref{lem:unsubstitutable}) some $x^*$, $x'$, $X$ and $X' = X \cup \{x'\}$ where $x^* \in X$ and $x^* \in \choice(X')$ but $x^* \not\in \choice(X)$. By $q$-acceptance, there must be at least one $x$ in $\choice(X)$ that is not in $\choice(X')$ (which must take the place of $x^*$). However, $\choice(X')$ implies that $x^* \succ x$, while $\choice(X)$ implies that $x \succ x^*$ --- violating asymmetry and leading us to contradiction.

Lastly, proving that $q$-representative $\choice$ leads to $m=1$ is relatively intuitive. For any $X$, $\choice(X)$ are the $q$ highest-ranked elements of $X$. Denote $X = \{x_1, x_2, ..., x_k\}$ where $|X| = k$ and $x_1 \succ x_2 \succ ...\succ x_k$. Then $\choice(X) = \{x_1, ..., x_{q^*}\}$ where $q^* = \min\{q, k\}$. 
Assume for contradiction that there is a $q$-representative $\choice$ with variability $m > 1$. Then there exists some $X$ where $|V_\choice(X)| > 1$; say that $\{x^*_a, x^*_b\} \subseteq V_\choice(X)$. Then there exists some $x'_a$ where $\choice(X) \setminus \choice(X \cup x'_a) = \{x^*_a\}$ and similar $x'_b$ for $x^*_b$. The first implies $x^*_b$ is ranked above $x^*_a$, but the second implies the opposite. Then choices cannot be made according to a static ranking and $\choice$ is not $q$-representative, completing the proof by contradiction. %

\paragraph{Direction 2}
Assume $q$-acceptance, variability 1, and instability 1. Now we prove $\choice$ is $q$-representative. To be precise, I take this to mean that $\choice$ is indistinguishable from some $q$-representative $\choice$; $\choice$ could be represented by some total ordering over elements of $X$. We show this strict total ordering constructively.

Define $\succ$ such that, if $x_a \in \choice(X)$ but $x_b \not\in\choice(X)$ for any $x_a, x_b \in X$, $x_a \succ x_b$. Note that this implies that the existence of some set $A$ where $x_a \in \choice(A)$ and $A_\choice(A) = \{x_b\}$ by consistency (e.g. $A = \choice(X) \cup \{x_b\}$). This also implies $A'$ where $x_a, x_b \in \choice(A')$, $V_\choice(A') = \{x_b\}$, where $A' = A \setminus \{x\}$ for any $x \neq x_a, x_b$. 

Then, we need to prove this relation is asymmetric, transitive, and completeness (or rather, is indistinguishable from a complete relation when examining the induced choice function).

\paragraph{Asymmetry} 
First, we show asymmetry. Assume for contradiction that $a \succ b$ and $b \succ a$. Then there exists $X_a$ where $a, b \in X_a$, $a \in \choice(X_a)$ and $b \not\in \choice(X_a)$, and a similar $X_b$ where $a \not\in \choice(X_b)$ and so on.

Let $X_a = \choice(X_a) \cup \{b\}$ (we can construct such out of any $X_a$, by consistency), where $A_\choice(X_a) = \{b\}$, and denote $\choice(X_a) = \{x^a_1, ..., x^a_{q-1}, a\}$. Construct $X_b$ similarly. Further, let $X'_a = X_a \setminus \{x^a_1\}$, and note that $V_\choice(X_a') = \{b\}$.

Then from here, we add the elements of $X_b$ to $X_a'$, constructing $X_1, X_2,$ and so forth. 

Let $X_0 = X_a'$ and sequentially construct a series of $X_t$. Add $x^b_t$ to $X_{t-1}$. If $x_b \not\in \choice(X_{t-1} \cup \{x^b_t\})$, then let $X_t = X_{t-1} \cup \{x^b_t\}$, and $V_\choice(X_t) = \{b\}$ by \Cref{lem:consistent-remove}. If $x^b_t \in \choice(X_{t-1} \cup \{x^b_t\})$, then $b \not\in \choice(X_{t-1} \cup \{x^b_t\})$ and $A_\choice(X_{t-1} \cup \{x^b_t\}) = \{b\}$ (by \Cref{lem:a-v-equal}). Then let $X_t = \{x^b_t\} \cup X_a' \setminus x^a_i$ where $x^a_{i-1} \not\in X_{t-1}$ but $x^a_i \in X_{t-1}$, and $V_\choice(X_t) = \{b\}$ (again, by \Cref{lem:a-v-equal}). Note that $a \in \choice(X_t)$, $b \in X_t$, and specifically $b \in V_\choice(X_t)$, for all $t \in \{0,..., q-2\}$.

Then $X_{q-2}$ will contain $x^b_1,...x^b_{q-2}$. Let $X^* = X_{q-2} \cup \{x^b_{q-1}\}$. Note that adding $x^b_{q-1}$ to $X_{q-2}$ makes $X^* \supseteq X_b$. Then, if $a \in \choice(X^*)$, substitutability is violated, as $a \not\in \choice(X_b)$. If $x^b_{q-1} \not\in \choice(X^*)$, then $\choice(X_{q-2}) = \choice(X^*)$ by consistency, and $a \in \choice(X_{q-2})$. If instead $x^b_{q-1} \in \choice(X^*)$, recall that $V_\choice(X_{q-2}) = \{b\}$, so $a \not\in V_\choice(X_{q-2})$, and $a$ remains in $\choice(X^*)$. Therefore, either possibility leads to contradiction, proving that $\succ$ is asymmetric.

\paragraph{Transitivity}
Say that $a \succ b$ and $b \succ c$ and we want to prove $a \succ c$. If $a \succ b$, there exists some $X$ where $a, b \in X$, $a \in \choice(X)$, and $b \not\in \choice(X)$. Let $X$ be $\choice(X) \cup \{b\}$, where $a \in \choice(X), b \not\in\choice(X)$ by consistency. Then consider $X \cup \{c\}$. By asymmetry, $c \not\succ b$, so $c \not\in \choice(X\cup \{c\})$. By consistency, then, $\choice(X\cup \{c\}) = \choice(X)$, so $a \in \choice(X\cup \{c\})$ and $a \succ c$, completing the proof.

\paragraph{(Almost)-Completeness}
See that this relation must exist for all but $q$ elements --- and any arbitrary linear extension thereof generates the same choice function behavior.

For any $a, b$, let there be some $X_0$ where $a, b \not\in \choice(X_0)$. Then remove elements of $\choice(X_t)$ in a sequence $X_0, ..., X_t$ until $A_\choice(X_t) = \{a\}$ or $\{b\}$. This must happen for some value of $t$ because when $|X_t| = q+1$, at least one of $a$ or $b$ are in $\choice(X)$, and due to $1$-instability, both $a$ and $b$ cannot be added to $\choice(X_t)$ at once. The only exception is if there is no such $X_0$ where $a, b \not\in \choice(X_0)$, which can only be true if $a$ is such that $a \in \choice(X)$ for all $X$ (and similarly for $b$). (If $a \not\in \choice(X)$ for some $X$, then substitutability implies $a \not\in \choice(X \cup \{b\})$). However, there can only be $q$ such elements (by $q$-acceptance). Note that any linear extension of this partial order will produce the same choice function behavior; that is, any constructed total ordering that otherwise aligns with $\succ$ will produce the same choice behavior, and thus, $\choice$ can be represented by any such strict total ordering.
\end{proof}

\subsection{Higher-Variability Functions} \label{sec:high-variability-appendix}
In this subsection, we examine two different cases of choice functions: those that are sequentially composed of other choice functions, and those that can be represented as a linear assignment problem (also described in graph-theoretic terms as a maximum- or minimum-weight bipartite matching).

\subsubsection{Sequential Composition} \label{sec:seq-proofs}
A useful class of choice functions are choice functions that can be broken down or defined as a sequential composition of functions $\choice_n, \choice_{n-1}, ..., \choice_1$; that is, 
$\choice(X) = \choice_n(X) \cup \choice_{n-1}(X_{n-1}) \cup ... \cup \choice_1(X_1)$ where $X_i = X_{i+1} \setminus\choice_{i+1}(X_{i+1})$ and $X_n = X$. Sequential composition is particularly useful because it preserves a variety of useful properties. Sequences of $q_i$-acceptant functions are $q$-acceptant, because no duplicate elements are selected by multiple functions, and therefore preserves $q$-acceptance when each function in the sequence has capacity $q_i$ (i.e., $q = q_n + ... + q_1$.) 
\seqcomposition*

\textit{Intuition}: Selection proceeds in stages, with each stage operating on candidates not selected by higher-priority stages.

\paragraph{Sequentially Composed Substitutable Functions Are Substitutable}
Similarly, we show sequential compositions of substitutable functions are also substitutable.
\begin{restatable}[Composition Preserves Substitutability]{theorem}{seqsub}
\label{thm:composition-substitutable}
Sequential composition of substitutable functions yields a substitutable function.
\end{restatable}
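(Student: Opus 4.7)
The plan is a two-step induction: first establish that the ``leftover'' sets shrink consistently between $X$ and $X'$, then chase an arbitrary element of $X \cap \choice(X')$ through the stages of the composition.

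Let $X \subseteq X'$ and let $X_1 = X$, $X_{i+1} = X_i \setminus \choice_i(X_i)$, with the primed sequence $X'_i$ defined analogously from $X'$. The first step is to show by induction on $i$ that $X_i \subseteq X'_i$. The base case $i=1$ is immediate. For the inductive step, take $x \in X_{i+1}$; then $x \in X_i \subseteq X'_i$ by the hypothesis, and if we had $x \in \choice_i(X'_i)$, substitutability of $\choice_i$ together with $X_i \subseteq X'_i$ would give $x \in X_i \cap \choice_i(X'_i) \subseteq \choice_i(X_i)$, contradicting $x \in X_{i+1} = X_i \setminus \choice_i(X_i)$. Hence $x \in X'_{i+1}$.

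The second step is the main inclusion. Fix $x \in X \cap \choice(X')$. Because the sets $\{\choice_j(X'_j)\}_j$ are disjoint by construction, there is a unique $i$ with $x \in \choice_i(X'_i)$. If $x \in X_i$, then substitutability of $\choice_i$ applied to $X_i \subseteq X'_i$ (from the first step) gives $x \in \choice_i(X_i) \subseteq \choice(X)$. If instead $x \notin X_i$, then since $x \in X_1 = X$ and the sequence $X_1 \supseteq X_2 \supseteq \cdots$ is strictly decreasing only through removals in $\choice_j(X_j)$, there exists $j < i$ with $x \in \choice_j(X_j) \subseteq \choice(X)$. Either way, $x \in \choice(X)$, so $X \cap \choice(X') \subseteq \choice(X)$.

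The only subtle point, and what I expect to be the main obstacle, is the induction in the first step: one must rule out that $\choice_i$ could ``promote'' $x$ at a stage in the primed sequence where it was not promoted in the unprimed sequence, which is exactly where substitutability of each $\choice_i$ is used. Everything else is bookkeeping about the disjointness and nesting of the $\choice_i(X_i)$.
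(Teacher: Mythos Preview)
Your proof is correct. Both your argument and the paper's hinge on the same key fact---that the ``leftover'' sets are nested, $X_i \subseteq X'_i$, which is exactly where substitutability of each $\choice_i$ enters---but the two proofs are organized differently. The paper proceeds by induction on the number $n$ of composed functions: it peels off the first stage, writes $\choice^{n+1}(X) = \choice_{n+1}(X) \cup \choice^n(X \setminus \choice_{n+1}(X))$, establishes the one-step nesting $S \subseteq S'$, applies the induction hypothesis to $\choice^n$, and then pushes through a chain of set-algebraic manipulations to recover the substitutability inclusion for $\choice^{n+1}$. You instead prove the full nesting $X_i \subseteq X'_i$ for all stages up front and then chase a single element $x \in X \cap \choice(X')$ through the composition, splitting on whether $x$ survives in $X_i$ at the stage where the primed sequence selects it. Your route avoids the set-algebra bookkeeping and makes the two possible fates of $x$ (picked at the same stage, or picked earlier in the unprimed sequence) explicit; the paper's route is more uniform in that it never touches individual elements. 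Either way, the substance is the same lemma applied in the same place.
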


\textit{Intuition}:
``Appending'' a substitutable choice function $\choice_2$ before another substitutable function $\choice_1$ affects the inputs to $\choice_1$ in a structured way.
\begin{proof}
    Proof by induction.
    In the base case, for $n = 1$, $\choice(X) = \choice_1(X)$, and as $\choice_1$ is given to be substitutable, $\choice$ is as well.

    For the inductive step, show that $\choice^{n+1}$ is substitutable when comprised of $\choice_{n+1}, \choice_n, ..., \choice_1$, assuming that $\choice^n$ comprised of $\choice_n,...,\choice_1$ is substitutable. 

    The definition of substitutability is that $X \subseteq X'$ implies $\choice(X') \cap X \subseteq \choice(X)$. So we need to show $\choice^{n+1}(X') \cap X \subseteq \choice^{n+1}(X)$. 
    
    First, note that $\choice^{n+1}(X) = \choice_{n+1}(X) \cup \choice^n(X \setminus \choice_{n+1}(X))$. 
    Let $S = X \setminus \choice_{n+1}(X)$ and $S' = X' \setminus \choice_{n+1}(X')$. As $S \subseteq X'$, we only need to show that no elements of $S$ are elements of $\choice_{n+1}(X')$. As $\choice_{n+1}$ is substitutable, $\choice_{n+1}(X') \cap X = \choice_{n+1}(X)$. So, any element of $X$ that is in $\choice_{n+1}(X') $ must be in $\choice_{n+1}(X)$, and therefore cannot be in $S = X \setminus \choice_{n+1}(X)$. Therefore, $S \subseteq S'$.

    Now, we can apply the induction hypothesis, so we assume $\choice^n$ is substitutable to get 

    \begin{align*}
        \choice^n(S') \cap S & \subseteq \choice^n(S) && \text{induction hypothesis} \\
        \choice^n(S') \cap \left(X \setminus \choice_{n+1}(X)\right) & \subseteq \choice^n(S) &&  \text{substitute out $S$}\\      \left(\choice^n(S') \cap X\right) \setminus \left(\choice^n(S') \cap  \choice_{n+1}(X)\right) & \subseteq \choice^n(S) && \text{intersection commutes}\\  
        \choice_{n+1}(X) \cup \left[\left(\choice^n(S') \cap X\right) \setminus \left(\choice^n(S') \cap  \choice_{n+1}(X)\right)\right] & \subseteq \choice_{n+1}(X) \cup \choice^n(S) && \text{add $\choice_{n+1}(X)$} \\ \intertext{Note that $\choice_{n+1}(X) \supseteq\left(\choice^n(S') \cap  \choice_{n+1}(X)\right)$, so the set difference can be simplified out:}
        \choice_{n+1}(X) \cup \left(\choice^n(S') \cap X\right) & \subseteq \choice_{n+1}(X) \cup \choice^n(S) && \text{equivalent} \\ \intertext{Note that for any sets $A,B,C$, $(A \cup B) \cap C = (A \cap C)\cup (B \cap C) \subseteq A \cup (B \cap C)$, so: 
        }
        \left(\choice_{n+1}(X) \cup \choice^n(S') \right) \cap X & \subseteq \choice_{n+1}(X) \cup \choice^n(S) && \text{} \\
        \left(\choice_{n+1}(X) \cup \choice^n(X' \setminus \choice_{n+1}(X')) \right) \cap X & \subseteq \choice_{n+1}(X) \cup \choice^n(X \setminus \choice_{n+1}(X)) &&  \text{substitute out $S$, $S'$}  \\
        \choice^{n+1}(X') \cap X & \subseteq \choice^{n+1}(X) &&  \text{definition of $\choice^{n+1}$}    
    \end{align*}

    Which is the definition of substitutability, completing the proof.
\end{proof}

\begin{restatable}[Additive Variability Bound]{theorem}{addvar}
\label{thm:additive-variability}
If $\choice$ is a sequential composition of $q$-acceptant, $1$-unstable functions with variabilities $m_1, \ldots, m_n$, then $\choice$ has variability at most $\sum_{i=1}^n m_i$.
\end{restatable}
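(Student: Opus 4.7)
By Theorem~\ref{thm:composition-substitutable}, $\choice$ is substitutable; combined with $q$-acceptance for $q = \sum_{i=1}^n q_i$, Theorem~\ref{thm:sub-unstable-equiv} then yields that $\choice$ is $1$-unstable, so its variability is well-defined. The plan is to fix an arbitrary $X \subseteq \calX$ and $x' \in \calX \setminus X$ (the case $x' \in X$ is vacuous) and trace how the single perturbation $X \mapsto X \cup \{x'\}$ propagates through the layers. Writing $X_i \coloneqq X_i(X)$ and $X_i' \coloneqq X_i(X \cup \{x'\})$ for the layer-$i$ inputs in the two runs, the central claim is that $X_i'$ is obtained from $X_i$ by adjoining a single element, so that the unique element displaced at layer $i$ (if any) must lie in $V_{\choice_i}(X_i)$.

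The key lemma is the invariant $X_i' = X_i \cup \{e_{i-1}\}$ with $e_{i-1} \notin X_i$, proved by induction on $i$ starting from $e_0 = x'$. The base case $i = 1$ is immediate. For the step, apply to $\choice_i$ the standard $1$-instability dichotomy of Theorem~\ref{thm:sub-unstable-equiv}: since $\choice_i$ is $q_i$-acceptant and substitutable, adding the single element $e_{i-1}$ to $X_i$ either (case B) leaves $\choice_i(X_i') = \choice_i(X_i)$, in which case set $e_i \coloneqq e_{i-1}$, or (case A) yields $\choice_i(X_i') = (\choice_i(X_i) \setminus \{x_i^*\}) \cup \{e_{i-1}\}$ for some $x_i^* \in V_{\choice_i}(X_i)$, in which case set $e_i \coloneqq x_i^*$. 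A direct set-difference computation, using $e_{i-1} \notin X_i$ from the inductive hypothesis, gives $X_{i+1}' = X_{i+1} \cup \{e_i\}$ in both cases, and $e_i \notin X_{i+1}$ because in case A $x_i^* \in \choice_i(X_i)$ is removed when forming $X_{i+1}$, while in case B $e_i = e_{i-1} \notin X_i \supseteq X_{i+1}$.

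To conclude, take any $y \in \choice(X) \setminus \choice(X \cup \{x'\})$. Then $y$ is selected at some unique layer $i$ of the base run but absent from every layer of the perturbed run; in particular $y \in \choice_i(X_i) \subseteq X_i \subseteq X_i'$ yet $y \notin \choice_i(X_i')$, so $y \in \choice_i(X_i) \setminus \choice_i(X_i')$. The invariant then places $y \in V_{\choice_i}(X_i)$ directly from the definition of variability applied to $\choice_i$. Ranging over all $x'$ yields the containment $V_\choice(X) \subseteq \bigcup_{i=1}^n V_{\choice_i}(X_i)$, and hence $|V_\choice(X)| \leq \sum_{i=1}^n m_i$; maximizing over $X$ gives the desired bound.

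The main obstacle I anticipate is the bookkeeping in the inductive step: verifying cleanly that in case A it is the displaced $x_i^*$, not $e_{i-1}$, that becomes the new input to layer $i+1$, and confirming that if a displaced element is later re-selected by some $\choice_j$ with $j > i$, it simply drops out of $\choice(X) \setminus \choice(X \cup \{x'\})$ rather than contributing twice to the bound. Both points reduce to elementary tracking of the layered set subtractions once the invariant is stated, so the substance of the proof is concentrated in the inductive step rather than the final union bound.
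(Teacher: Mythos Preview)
Your proposal is correct and rests on the same key observation as the paper's proof: the single-element perturbation propagates through the layers one at a time, so that at each layer $i$ the two runs' inputs differ by exactly one new element. The paper packages this as an outer induction on the number of layers $n$, treating the tail composition $\choice^n$ as a black box with variability $\le m$ and showing via set algebra that prepending $\choice_{n+1}$ adds at most $m_{n+1}$ (its two cases ``$x'\in\choice_{n+1}(X')$'' and ``$x'\notin\choice_{n+1}(X')$'' are exactly your cases A and B). You instead unroll the composition explicitly and track the extra element $e_i$ through each layer via the invariant $X_i'=X_i\cup\{e_{i-1}\}$, arriving at the sharper containment $V_\choice(X)\subseteq\bigcup_i V_{\choice_i}(X_i(X))$ before taking cardinalities. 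Your bookkeeping is more concrete and sidesteps the paper's set-difference manipulations; the paper's black-box induction is slicker but obscures where displaced elements end up. Both routes reduce to the same two-case analysis and the same bound.
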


Note that this is only an upper bound; we could trivially construct cases where this is not tight. For example, compose a $\choice$ of two $q$-representative choice functions with the same underlying ordering. On the other hand, it is also easy to construct instances in which this upper bound is tight. Consider the composition of $m$ queues which each rank a student by a different subject area exam. Then the addition of a new student to the applicant pool could displace any of $m$ students, depending on the subject area.

\textit{Intuition}:
The intuition of this proof lies in the fact that ``appending'' a 1-unstable function $\choice_2$ can ``contribute'' $m_2$ variability, and that perturbing the inputs to $\choice$ can only perturb the inputs to $\choice_1$ by the same amount because $\choice_2$ is $1$-unstable.
\begin{proof}
Proof by induction. 

In the base case, for $n = 1$, $\choice(X) = \choice_1(X)$, so $m = m_1$.

For the inductive step, show that $\choice^{n+1}$ has variability $m' \leq m + m_{n+1}$ when comprised of $\choice_{n+1}, \choice_n, ..., \choice_1$, assuming that $\choice^n$ comprised of $\choice_n,...,\choice_1$ has variability $m$ and $\choice_{n+1}$ has variability $m_{n+1}$. 

First, note that $\choice^{n+1}(X) = \choice_{n+1}(X) \cup \choice^n(X \setminus \choice_{n+1}(X))$.
Recall the definition of variability:
\[
m \coloneqq \max_{X\subset \calX} \left|\bigcup\limits_{x'\in\calX} \choice(X) \setminus \choice(X \cup \{x'\})\right|
\]

We are interested in $\bigcup\limits_{x'\in\calX} \choice^{n+1}(X) \setminus \choice^{n+1}(X \cup \{x'\})$ for every $X$. 
For ease of notation, let $X' = X \cup \{x'\}$, $S = X \setminus \choice_{n+1}(X)$, and $S' = X' \setminus \choice_{n+1}(X')$.

\begin{align*}
    \bigcup\limits_{x'\in\calX} \choice^{n+1}(X) \setminus \choice^{n+1}(X') 
    = & \bigcup\limits_{x'\in\calX} \left[\choice_{n+1}(X) \cup \choice^n(S)\right] \setminus \left[\choice_{n+1}(X') \cup \choice^n(S')\right] & \text{notation substitution}\\
    = & \bigcup\limits_{x'\in\calX} \left[\choice_{n+1}(X)  \setminus \choice_{n+1}(X')  \setminus \choice^n(S')\right] \cup \left[\choice^n(S) \setminus \choice_{n+1}(X')  \setminus \choice^n(S')\right]
    & \text{set diff distributes} \\
    = & \bigcup\limits_{x'\in\calX}  \left[\choice_{n+1}(X)  \setminus \choice_{n+1}(X')  \setminus \choice^n(S')\right] \cup \bigcup\limits_{x'\in\calX} \left[\choice^n(S) \setminus \choice_{n+1}(X')  \setminus \choice^n(S')\right] \\
    \subseteq & \bigcup\limits_{x'\in\calX}  \left[\choice_{n+1}(X)  \setminus \choice_{n+1}(X') \right] \cup \bigcup\limits_{x'\in\calX} \choice^n(S)\setminus \choice^n(S') \\        
\end{align*}

So we have 
\begin{equation*}
    \left|\bigcup\limits_{x'\in\calX} \choice^{n+1}(X) \setminus \choice^{n+1}(X \cup \{x'\}) \right| \leq \left|\bigcup\limits_{x'\in\calX}  \choice_{n+1}(X) \setminus \choice_{n+1}(X') \right| + \left| \bigcup\limits_{x'\in\calX} \choice^n(S)\setminus \choice^n(S')\right|
\end{equation*}

See that the left term is bounded precisely by the variability of $\choice_{n+1}$. 
Next, we need to show that the right term is bounded by $m$.
To apply the variability bound to the right term, we need to show that $S' = S \cup \{x''\}$ for some $x''$ for all $S, S'$, which holds if $|S' \setminus S| \leq 1$ and $S \subseteq S'$, or equivalently $S \setminus S' = \emptyset$, for all $x'$.
We will examine $S \setminus S'$ first.
Consider $S \setminus S' = \left[X \setminus \choice_{n+1}(X)\right] \setminus  \left[X' \setminus \choice_{n+1}(X')\right] $. 
\begin{align*}
    S \setminus S' & = \left[X \setminus \choice_{n+1}(X)\right] \setminus  \left[X' \setminus \choice_{n+1}(X')\right]\\
    & = X \setminus \left[X' \setminus \choice_{n+1}(X')\right] \setminus   \choice_{n+1}(X) \\
    & = \left[(X \setminus X') \cup (X \cap \choice_{n+1}(X')) \right] \setminus  \choice_{n+1}(X) & \text{Set difference of difference property}\\
    & = [X \cap \choice_{n+1}(X')] \setminus  \choice_{n+1}(X) & \text{$X \subseteq X'$ by definition}\\        
\end{align*}
Note that this set is always empty if and only if $\choice_{n+1}$ is substitutable, by \Cref{lem:substitutability-term}, so $S \subseteq S'$ when $\choice_{n+1}$ is substitutable.

Next, consider $S' \setminus S = \left[X' \setminus \choice_{n+1}(X')\right] \setminus  \left[X \setminus \choice_{n+1}(X)\right] $. 
\begin{align*}
    S' \setminus S & = \left[X' \setminus \choice_{n+1}(X')\right] \setminus  \left[X \setminus \choice_{n+1}(X)\right] \\
    & = X' \setminus \left[X \setminus \choice_{n+1}(X)\right] \setminus   \choice_{n+1}(X') \\
    & = \left[(X' \setminus X) \cup (X' \cap \choice_{n+1}(X)) \right] \setminus  \choice_{n+1}(X') & \text{Set difference of difference property}\\
    & = \left[\{x'\} \cup (X' \cap \choice_{n+1}(X)) \right] \setminus  \choice_{n+1}(X') & \text{definition of $X'$}\\    
    & = \left[\{x'\} \cup \choice_{n+1}(X)) \right] \setminus  \choice_{n+1}(X') & \text{$\choice_{n+1}(X)) \subseteq X'$}\\           
\end{align*}

So $|S' \setminus S| = \mathbf{1}_{x'\not\in\choice_{n+1}(X')} + |\choice_{n+1}(X)) \setminus  \choice_{n+1}(X')|$. Then there are two cases to consider:
\begin{enumerate}
    \item $x' \not\in \choice_{n+1}(X')$
    \item $x' \in \choice_{n+1}(X')$
\end{enumerate}

Consistency guarantees that $|S' \setminus S| = 1$ in case (1). When $x' \not\in \choice_{n+1}(X')$, $\choice_{n+1}(X)) = \choice_{n+1}(X')$, so $|\choice_{n+1}(X)) \setminus  \choice_{n+1}(X')| = 0$. 

For case (2): $1$-instability means $|X \cap \choice_{n+1}(X') \setminus \choice_{n+1}(X)| + |\choice_{n+1}(X) \setminus \choice_{n+1}(X')| \leq 1$, so it must be that $|\choice_{n+1}(X)) \setminus  \choice_{n+1}(X')| \leq 1$.

So $\left| \bigcup\limits_{x'\in\calX} \choice^n(S)\setminus \choice^n(S')\right| \leq m$ and so $\left|\bigcup\limits_{x'\in\calX} \choice^{n+1}(X) \setminus \choice^{n+1}(X') \right| \leq m_{n+1} + m$
for all $X$, meaning that $m' \leq m_{n+1} + m$, completing the proof.
\end{proof}

\subsubsection{Linear Assignment Problems} \label{sec:lap-proofs}

We highlight \textit{linear assignment problems} as a generalization of sequential composition. 
The linear sum assignment problem can be represented as a bipartite graph matching. 
One one side of the graph are applicants, and on the other are available admissions slots.
Edges connect these nodes according to how well suited each applicant is for each slot.
The admissions decision is made by selecting edges which have maximal sum, subject to the constraint that each applicant and each slot is selected only once.

Consider all edges connecting to slot $s$; then for each element  $x_j \in X$ there is a real number weight $w(x_j, s)$. Then sorting $x \in \calX$ by $w(x,s)$ creates a total ordering, which we call $\succ_s$. Let slot $s'$ be \textit{equivalent} to 
to slot $s$ if and only if $\succ_{s'} = \succ_{s}$. 
We will assume for simplicity that weights are defined such that ties do not occur.
Let assignment $A$ be the set of pairs $\{(x_{s_1}, s_1), ..., (x_{s_q}, s_q)\}$.

\begin{restatable}{lemma}{LAP Ordering} 
\label{lem:lap-ordering}
Let $x_{s}$ denote the element of $\choice(X)$ assigned to slot $s$. Then for all $x \in X$, $x \in \choice(X)$ if $x \succ_s x_s$, and $x_s \succ_s x$ if $x \not\in \choice(X)$.
\end{restatable}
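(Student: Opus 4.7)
The plan is a standard exchange argument against the optimality of the LAP solution. First observe that, under the assumption that weights are defined so that ties do not occur, $\succ_s$ is a strict total order on $X$, so the two clauses of the lemma are contrapositives of each other. It therefore suffices to prove the direction that $x \succ_s x_s$ implies $x \in \choice(X)$.

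I would argue by contradiction. Suppose some $x \in X$ satisfies $x \succ_s x_s$ yet $x \notin \choice(X)$. Because $x$ lies in the input but is not chosen, $x$ is unmatched by $A$, i.e., no pair of the form $(x, s')$ appears in $A$. Construct $A'$ from $A$ by removing $(x_s, s)$ and inserting $(x, s)$. Each applicant still appears at most once (since $x$ was previously unmatched and $x_s$ has been dropped) and the collection of occupied slots is unchanged, so $A'$ is a feasible assignment of the same $q$ slots.

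Comparing total weights, $A$ and $A'$ agree on every slot other than $s$, so their weight difference equals $w(x, s) - w(x_s, s)$. By the definition of $\succ_s$ as the ordering induced by weights into slot $s$, the hypothesis $x \succ_s x_s$ gives $w(x, s) > w(x_s, s)$, so $A'$ has strictly larger total weight than $A$. This contradicts the optimality of $A$ as a maximum-weight assignment, which yields the desired conclusion. The only subtlety in this argument is verifying feasibility of the swap, which rests entirely on the hypothesis $x \notin \choice(X)$ guaranteeing that $x$ was unassigned in $A$; once that is established, the weight comparison is immediate from the definition of $\succ_s$.
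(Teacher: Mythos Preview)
Your proposal is correct and follows essentially the same exchange argument as the paper: assume $x \succ_s x_s$ with $x \notin \choice(X)$, swap $x$ for $x_s$ at slot $s$, and derive a strictly heavier feasible assignment contradicting optimality. Your write-up is in fact more careful than the paper's, explicitly verifying feasibility of the swap and noting that the two clauses are contrapositives under the no-ties assumption.
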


\textit{Intuition}: Linear assignment problems have a structure that resembles a combination of rankings, even though there is not a fixed sequential order.

\begin{proof}
This comes by the optimality of the assignment solution. For contradiction, say there is $x^* \succ_s x_s$ where $x^* \not\in \choice(X)$. Then $w(x^*, s) > w(x_s, s)$, and therefore $\{x^*, s\} \cup A \setminus \{x_s, s\}$ leads to a higher sum than $\choice(X)$, contradicting the fact that $A$ is be optimal.
\end{proof}

\begin{restatable}[Linear Assignment Instability]{theorem}{LAPstable} 
\label{thm:lap-unstable}
Linear assignment problems are 1-unstable.
\end{restatable}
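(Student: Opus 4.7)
The plan is to analyze the symmetric difference of the optimal assignments on $X$ and on $X' = X \cup \{x'\}$, viewed as edge sets of bipartite matchings between applicants and slots. Let $A$ be the optimal assignment on $X$ and $A'$ be the optimal assignment on $X'$. The key structural fact I will lean on is that $A \triangle A'$ decomposes into edge-disjoint alternating paths and cycles, whose path endpoints are precisely the nodes of odd degree in $A \triangle A'$.

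First, I would dispose of the easy case where $x' \notin \choice(X')$. In that case every edge of $A'$ lies in the bipartite subgraph restricted to $X$, so $A'$ is itself a feasible assignment on $X$; optimality of $A$ and of $A'$ combined with the no-ties hypothesis on weights forces $A = A'$, so $\choice(X) = \choice(X')$ and $r_\choice(X, X') = 0$.

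Next, for the main case where $x' \in \choice(X')$, the applicant $x'$ has degree $1$ in $A \triangle A'$ (it appears in $A'$ but not in $A$) and is therefore the endpoint of an alternating path of the form $x' - s_1 - y_1 - s_2 - \cdots - s_k - x^*$, where $x^*$ is some applicant in $X$ lying in $\choice(X) \setminus \choice(X')$. The central step is to show that this is the \emph{only} nontrivial component of $A \triangle A'$. For any additional path or cycle not incident to $x'$, swapping its $A$-edges for its $A'$-edges inside $A$ yields another feasible assignment on $X$; by optimality, if the swap strictly increases total weight we contradict optimality of $A$, and if it strictly decreases total weight the reverse swap applied to $A'$ contradicts optimality of $A'$. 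The remaining case of equal weights is ruled out by the assumption, stated just before \Cref{lem:lap-ordering}, that weights are chosen so that ties do not occur and optimal assignments are unique. Combining these observations, $\choice(X) \triangle \choice(X') = \{x', x^*\}$; since $x' \notin X$, \Cref{def:distance} gives $r_\choice(X, X') = 1$.

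The main obstacle I anticipate is making the optimality-plus-uniqueness exchange argument precise enough to rule out every additional component of $A \triangle A'$. The path-and-cycle decomposition is standard for symmetric differences of bipartite matchings, but the delicate piece is translating the informal no-ties assumption into strict uniqueness of optimal assignments, which is what ultimately forces the single-path structure and hence $1$-instability.
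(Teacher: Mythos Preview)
Your proposal is correct and takes a genuinely different route from the paper. The paper never touches $A \triangle A'$; instead it reduces $1$-instability to substitutability via \Cref{thm:sub-unstable-equiv} and then argues substitutability by contradiction through \Cref{lem:lap-ordering}: if some $x^* \in X$ were rejected from $\choice(X)$ but accepted into $\choice(X')$, then every $x_{s_i} \in \choice(X)$ beats $x^*$ in its own slot ordering $\succ_{s_i}$, which the paper leverages to force $\choice(X) \subseteq \choice(X')$ and hence a capacity violation. Your augmenting-path argument is more explicitly combinatorial and exposes extra structure---the single alternating path from $x'$ to the displaced applicant is exactly the cascade of slot reassignments---while bypassing the substitutability reduction entirely. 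The trade-off is that your exchange step leans on uniqueness of the optimal assignment (or at least of the accepted set, which is already implicit in $\choice$ being well-defined) to kill additional \emph{path} components of $A \triangle A'$; cycle components need not be ruled out since they do not change which applicants are accepted. By contrast, the paper's argument only needs per-slot distinctness of weights to make each $\succ_s$ a strict total order. One small gap to patch in your outline: when $|X| < q$, not every slot is matched in $A$, so the path from $x'$ can terminate at an unmatched slot rather than at an applicant $x^*$; this case is harmless ($r_\choice(X,X')=0$) but falls outside your description of the main case.
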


\textit{Intuition}: The ranking-like structure of linear assignment slots maintains substitutability, because each slot maintains a  preference ordering over elements that is independent of the cohort.

\begin{proof}
Recall that for $q$-acceptant functions, substitutability and 1-instability are equivalent by \Cref{thm:sub-unstable-equiv}.

The proof is by contradiction; assume $\choice$ is not substitutable. As per \Cref{lem:unsubstitutable}, there exists $\choice(X), \choice(X')$ where some $x^* \in X$ is not in $\choice(X)$ but $x^* \in \choice(X')$.
By \Cref{lem:lap-ordering}, $\choice(X) = \{x_{s_1}, ..., x_{s_q}\}$ tells us we have $x_{s_i} \succ_{s_i} x^*$ for all $i \in \{1, ..., q\}$. For $x^*$ to be in $\choice(X')$, $x_{s_i}$ must also be in $\choice(X')$, meaning that $\choice(X) \subseteq \choice(X')$. However, as $x^* \in \choice(X')$, $|\choice(X')| > |\choice(X')|$, which violates $q$-acceptance, leading to contradiction.
\end{proof}

\begin{restatable}[Linear Assignment Variability]{theorem}{LAPvariable} 
\label{thm:lap-variability}
The variability of a linear assignment problem is upper bounded by the number of distinct preference orderings induced by its slots. 
\end{restatable}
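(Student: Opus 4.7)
The plan is to identify, for each input $X$, at most $n$ candidate elements that could be displaced when a new applicant is added. Letting $A$ be the optimal assignment for $X$, group the slots into equivalence classes $S_1,\ldots,S_n$ by their induced orderings $\succ_1,\ldots,\succ_n$. For each class $S_i$, let $T_i$ denote the elements assigned to slots in $S_i$ under $A$, and let $x^*_i$ be the $\succ_i$-minimum of $T_i$. I will show that whenever adding $x'$ displaces some $x^*\in\choice(X)$ whose original slot $s^*$ lies in $S_i$, necessarily $x^*=x^*_i$. Since there is at most one such candidate per class, this forces $|V_\choice(X)|\leq n$, yielding the theorem.

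To prove the claim, let $A'$ be the optimal $X'$-assignment, so $x^*$ is the unique element of $\choice(X)\setminus\choice(X')$ by Theorem~\ref{thm:lap-unstable}. The symmetric difference $A\triangle A'$ decomposes into an augmenting path $P=x'\!-\!t_1\!-\!z_1\!-\!\cdots\!-\!t_k\!-\!x^*$ plus disjoint cycles among $\choice(X)\cap\choice(X')$; any such cycle must be weight-neutral (optimality of $A$ bounds its weight change $\leq 0$ and optimality of $A'$ bounds it $\geq 0$), so reversing it in $A'$ preserves optimality and we may assume $A\triangle A'=P$. Suppose for contradiction $x^*\neq x^*_i$; then some $y\in T_i$ at slot $s\in S_i$ satisfies $x^*\succ_i y$, and $1$-instability forces $y\in\choice(X')$. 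If $y$ remains at $s$ in $A'$, then Lemma~\ref{lem:lap-ordering} applied to $A'$ at $s$ forces its assignee $y$ to satisfy $y\succ_i x^*$, directly contradicting $x^*\succ_i y$.

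Otherwise $y$ lies on $P$ at some position $y=z_j$ with $t_j=s$. The key exchange step constructs a hypothetical $X$-assignment $A^{(Q)}$ by shifting along the tail of $P$ (moving $y$ from $t_j$ to $t_{j+1}$, each subsequent $z_l$ from $t_l$ to $t_{l+1}$ up through $z_{k-1}$ taking $t_k$) and filling the vacated slot $s$ with $x^*$. Since $A^{(Q)}$ reassigns exactly the elements of $\choice(X)$, optimality of $A$ gives $w(A^{(Q)})\leq w(A)$, which telescopes to $[w(x^*,s)-w(y,s)]+\Delta_2\leq 0$, where $\Delta_2$ denotes the tail's contribution to $w(A')-w(A)$. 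Since $\succ_s=\succ_i$ and $x^*\succ_i y$ give $w(x^*,s)>w(y,s)$, this forces $\Delta_2<0$. But the alternative $X'$-assignment obtained by truncating $P$ at $y$ (displacing $y$ instead of $x^*$) has weight $w(A)+\Delta_1$, and optimality of $A'=A+\Delta_1+\Delta_2$ against it requires $\Delta_2\geq 0$, a contradiction. The main obstacle will be the careful telescoping bookkeeping and verifying that both constructed assignments remain valid $q$-acceptant matchings; the structural insight is that slots within $S_i$ share the ordering $\succ_i$, so $x^*\succ_i y$ at slot $s$ delivers precisely the strict weight inequality needed to close the argument, and this is also exactly why the element that escapes the argument must be the $\succ_i$-minimum of $T_i$ rather than any other element.
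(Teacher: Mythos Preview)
Your proof is correct and takes a different, more careful route than the paper's. The paper argues the stronger statement that $x_{s_i}\notin V_\choice(X)$ whenever $x_{s_i}\succ_{s_j}x_{s_j}$ for \emph{any} other slot $s_j$, by a one-line appeal to Lemma~\ref{lem:lap-ordering} in $X'$ at slot $s_j$; that appeal implicitly assumes the occupant of $s_j$ in the optimal $X'$-assignment is still $x_{s_j}$, which is not justified (the occupant of $s_j$ may shift along the augmenting path from $x'$ to $x^*$). Your argument addresses exactly this gap: after reducing $A\triangle A'$ to a single augmenting path, Case~1 recovers the paper's Lemma~\ref{lem:lap-ordering} argument when $y$ stays at $s$, while Case~2 supplies the missing analysis when $y$ lies on the path, via two exchange constructions (the cycle $A^{(Q)}$ on $X$ and the truncated path $A''$ on $X'$) whose optimality comparisons pin $\Delta_2$ on both sides of zero. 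The cost is more bookkeeping; the benefit is a complete argument. As a bonus, your Case-2 exchange only uses $w(x^*,s)>w(y,s)$ at $y$'s slot $s$ and nowhere uses that $s^*\in S_i$, so the same proof establishes the paper's stronger cross-class claim as well.
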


\textit{Intuition}: 
The proof follows from the observation that, if slot $s$ ranks some other element $x$ in the chosen set higher than the element $x_s$ currently assigned to $s$, $x$ cannot be in the variable set. Then, if $k$ slots (say, slots $s_1, ..., s_k$, with assigned elements $x_{s_1}, ..., x_{s_k}$) induce the same preference ordering over elements, then only the minimal element in $\{x_{s_1}, ..., x_{s_k}\}$ according to that preference ordering can be in $V_\choice(X)$.

\begin{proof}
We want to show that variability is upper bounded by the number of unique orderings. Recall that the variable set $V_\choice(X) \subseteq \choice(X)$ is at most cardinality $q$.

I claim that for any $X$ and $\choice(X)$, $x_{s_i}$ is not in $V_\choice(X)$ if $x_{s_i} \succ_{s_j} x_{s_j}$ for any $j \neq i$. 
To be precise, if $x_{s_i} \in \choice(X)$, and $x_{s_i} \succ_{s_j} x_{s_j}$ for some $j \neq i$, then $x_{s_i} \in \choice(X')$ for any $X' = X \cup \{x'\}$. 

The proof of this claim is by contradiction: assume there exists $\choice(X), \choice(X')$ where some $x_{s_i} \succ_{s_j} x_{s_j}$ but $x_{s_i} \not\in \choice(X')$. Firstly, because of the 1-instability proven in \Cref{thm:lap-unstable}, $|\choice(X) \setminus \choice(X')| = 1$, which means by assumption, $\choice(X) \setminus \choice(X') = \{x_{s_i}\}$, which further implies $x_{s_j} \in \choice(X')$. However,  \Cref{lem:lap-ordering} then applies to $x_{s_j}$, where $x_{s_i} \succ_{s_j} x_{s_j}$ implies $x_{s_i} \in \choice(X')$, leading to contradiction.

Secondly, note that if $s_i, s_j$ induce the same total ordering, then either $x_{s_i} \succ_{s_j} x_{s_j}$ or $x_{s_j} \succ_{s_i} x_{s_i}$ (and the same for $\succ_{s_i}$) for any $\choice(X)$, as a property of any total ordering. Then one of either $x_{s_j}$ or $x_{s_j}$ are not in $V_\choice(X)$. It is intuitive to see that this generalizes: if $s_1, ... s_j$ all induce the same ordering $\succ_{s_j}$, there is exactly one minimal element in $x_{s_1}, ..., x_{s_j}$, because any finite subset of a totally ordered set is well-ordered. 
Say that $x_{s_j}$ is the minimal element, w.l.o.g.; then none of $x_{s_1}, ..., x_{s_{j-1}}$ are in $V_\choice(X)$. More generally, let there be $m$ unique orderings, which we notate $\succ_1,..., \succ_m$. Let $k \in \{1,..., m\}$ and let there be $c_k$ slots that induce $\succ_k$. Then we have 
defined sets of size $c_k-1$ elements that cannot be in $V_\choice(X)$. Then $|V_\choice(X)| \leq m$, and therefore the overall variability is at most $m$. 
\end{proof}

\subsection{Proof of \Cref{thm:variability}} \label{sec:variability-proof}

Recall \Cref{thm:variability}:
\thmvariability*

We see that the second claim comes exactly from \Cref{thm:ranking-variability}, and that combining it with \Cref{thm:additive-variability} directly gives us the first claim.

\subsection{Proof of \Cref{prop:ml-representation}} \label{sec:ml-prop-proof}

Recall \Cref{prop:ml-representation}:
\mlrepresentation*

First consider using an ML model to make decisions; then it can be seen as a choice function with various properties.
Thus we say that an ML model can represent a choice function with various properties if it can have those properties when viewed as a choice function. %

The first statement follows by noting that any rule of the form~\eqref{eq:ml-independent} must either admit or reject an applicant regardless of the applicant pool, so it must be independent and hence $0$-unstable by \Cref{thm:independent-zero-unstable}. %
The second and third statements follow by noting that scores in~\eqref{eq:ml-rank} induce a total order by embedding applicants in $\mathbb R$ and selecting accordingly. Therefore any rule of the form~\eqref{eq:ml-rank} is $q$-representative (by definition) and hence has instability and variability exactly one (by \Cref{thm:ranking-variability}), but no greater.

\subsection{Proof of \Cref{prop:program-variability}} \label{sec:program-prop-proof}

Recall \Cref{prop:program-variability}:
\empiricalmethods*

We construct applicant sets to prove the variability of each empirical method we study, showing that the upper bound from \Cref{thm:variability} based on the number of queues is tight.

Note that both Screened and Open programs are explicitly a ranking, and therefore have variability $1$ by  \Cref{thm:variability}. 
Screened and Open programs with DIA are upper-bounded by variability 2, as they are composed of two queues. It is easy to see that this upper bound is tight. Consider an applicant pool $X$ where the lowest-ranking students in the DIA queue is lower-ranked than the lowest-ranked student in the non-DIA queue. Then $A_\choice(X)$ contains two elements -- adding a new DIA-qualified student might lead to rejecting the lowest-ranked student in the DIA queue, while a non-DIA student would lead to a change in the non-DIA queue. Similarly, in the three-queue Ed. Opt. case, the variability $3$ bound is tight, as there are applicant pools where adding a single high-, middle-, or low-category student would displace a different student in the corresponding queue. Likewise, the six queues in Ed. Opt. with DIA programs lead to six diffrent students who might be displaced depending on what category a new applicant falls into.

\section{Additional Empirical Setting Details}
\subsection{Feature Engineering} \label{sec:feature-engineering-appendix}
In order to isolate the effects of choice function characteristics on learnability, we engineer the input features to the model to avoid spurious correlations. In addition to tailoring each model's inputs to remove irrelevant features (e.g., Ed. Opt. category features for an Open program), we input priority flags directly as features (i.e, a flag for ``priority 1'' instead of a categorical residence borough), and pre-calculate students' Screened grade tiers. Specifically, we feature engineer Screened grade tiers according to the 2025 (current at time of writing) breakdown, which determines cutoffs based on percentile thresholds of the citywide student grade distribution (top 15\%, 30\%, and 50\% of all students).

\subsection{Changes in 2021 and 2022 Admissions Methods}
Note that Screened programs in 2021 did not follow a wholly uniform system; some programs ranked students by raw grades without any grouping into tiers, and other programs determined their own cutoffs for grade tiers. This both motivates our usage of a simulator --- to hold the choice functions at play constant, even though they historically were modified --- as well as explains the differential in simulator accuracy for 2021.

\section{Experiment Details and Validations} \label{sec:empirics-appendix}

\subsection{Synthetic Admissions Method Details} \label{sec:synthetics-appendix}
As mentioned in the main text, we create three synthetic methods for comparisons of instability, two of which are $0$-unstable and one of which is $5$-unstable.

The first, the Uniform Threshold, is a threshold on tiebreaker number, which we normalize to a [0, 1] distribution globally. A student is admitted if their tiebreaker number is less than $0.5$ (i.e, above the 50\% percentile, globally), and rejected otherwise. 

The second, the Conditional Threshold, is largely similar, but with different thresholds depending on a student's Ed. Opt. category. A High-category student is admitted with a tiebreaker number below $0.7$, Middle-category students $0.5$, and Low-category students $0.3$. This serves as a slightly less trivial to learn, but still simple and $0$-unstable rule to learn as a control.

The $5$-unstable function is more complex. The base intuition here is that, drawing on other examples of complement effects in higher-$d$-unstable functions, a hypothetical program wants to pair exactly equal numbers of High- and Low-category students. However, to ensure that different samples with high $r_\choice$ occur, the number of seats reserved for such complements is a periodic function of the number of the High- and Low-category students that apply. This provides meaningful fluctuations, and the change in seats reserved is bounded to be at most two per student. The function is $5$-unstable and not $3$-unstable, because in the worst case, adding a new student \textit{decreases} the number of paired spots, even as the new student displaces a previously accepted student.

\subsection{Covariate Shift in Screened Programs} \label{sec:screened-appendix}
While we account for many other complications to training a machine learning model, not all effects can be fully controlled. The distribution of Screened grade tiers in 2022 was noticeably skewed; an outright majority of students were placed in the highest tier based on the grading system used, and a majority of students had the highest possible grade even prior to tier grouping. This is a severe covariate shift. As a result, model training results for Screened/Screened with DIA programs fail dramatically when evaluated out of distribution in 2022. In the following experimental validations (see \Cref{sec:argmax-appendix} and \Cref{sec:models-appendix}), Screened performance drops when comparing individual admissions methods, and aggregate performance for $1$-unstable and $1$- or $2$-variable functions drops well past what would be expected.

However, this covariate shift further highlights the foundation of our work: cohort effects \textit{matter}. Even the simplest and most ideal class of capacity-constrained choice functions can cause drastic changes in performance when samples are drawn from a different distribution. In the language of distribution shift, our work shows that choice functions induce structured changes in the joint distribution between features and labels; these structured changes are partially \textit{driven by} covariate shift, and conventional models of covariate shift would not explain why grade inflation makes admissions prediction so difficult.

\subsection{Disaggregated Performance By Method} 
\label{sec:per-method-appendix}

We provide a disaggregated plot (\Cref{fig:method}) of model accuracies for each individual admissions method, without grouping by instability or variability. Train and test subset performance is left unplotted for visual clarity.

\begin{figure}[tbh]
\centering
\includegraphics[width=\textwidth]{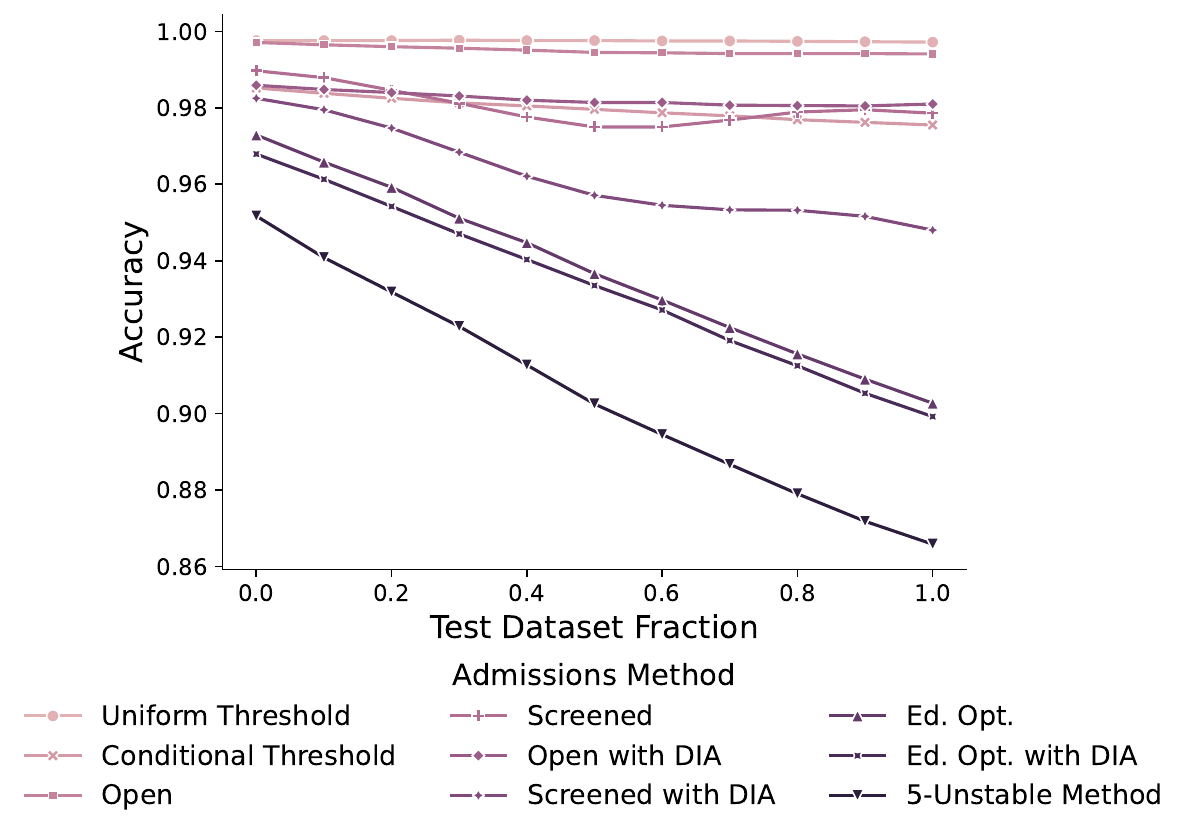}
\caption{Model accuracy for each admissions method, under increasing levels of distribution shift.}
\label{fig:method}
\end{figure}

\subsection{Results with Independent Discretization}
\label{sec:argmax-appendix}

We present our experiment results in \Cref{fig:argmax} using independent discretization; that is, as per \Cref{eq:ml-independent}, with $t = 1$. Note that performance drops across the board, but differentially; as explained in \Cref{sec:screened-appendix}, any independent threshold for Screened programs is strongly affected by grade inflation. However, as seen in \Cref{fig:argmax-method}, all non-Screened (or Screened with DIA) methods have performance roughly sorted by instability and variability, in line with our theory.

\begin{figure}[tbh]
\centering
\begin{subfigure}[b]{0.45\textwidth}
    \centering
\includegraphics[width=\textwidth]{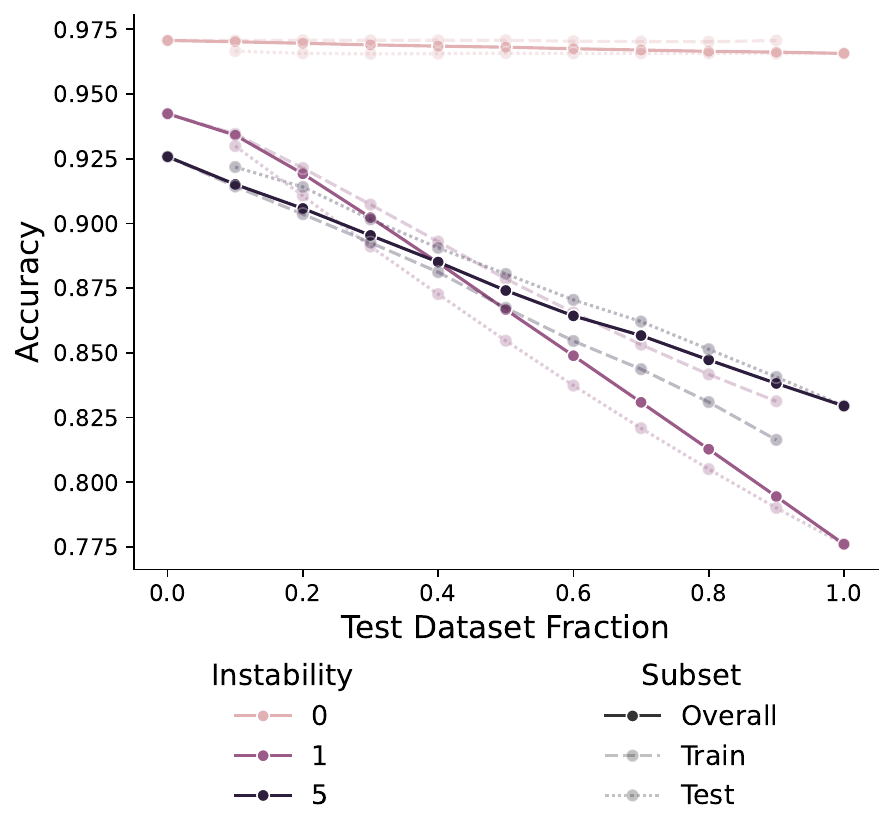}
\caption{Replicating \Cref{fig:instability} with independent discretization.}
\label{fig:argmax-instability}
\end{subfigure}
\hfill
\begin{subfigure}[b]{0.45\textwidth}
    \centering
\includegraphics[width=\textwidth]{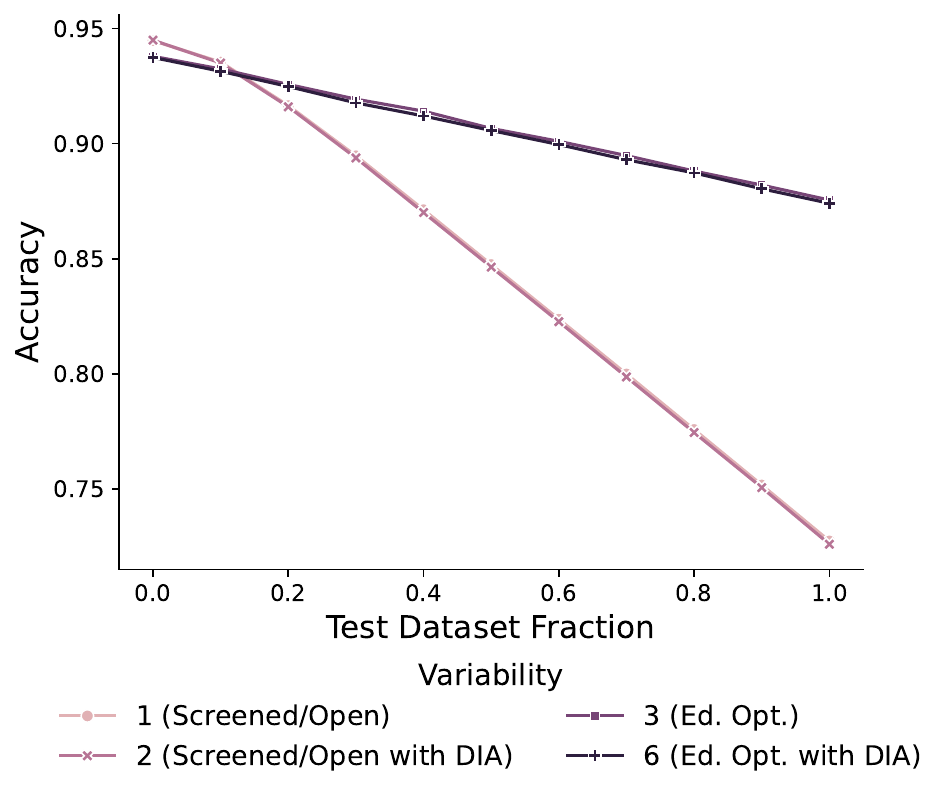}
\caption{Replicating \Cref{fig:variability} with independent discretization.}
\label{fig:argmax-variability}
\end{subfigure}
\hfill
\begin{subfigure}[b]{0.6\textwidth}
    \centering
\includegraphics[width=\textwidth]{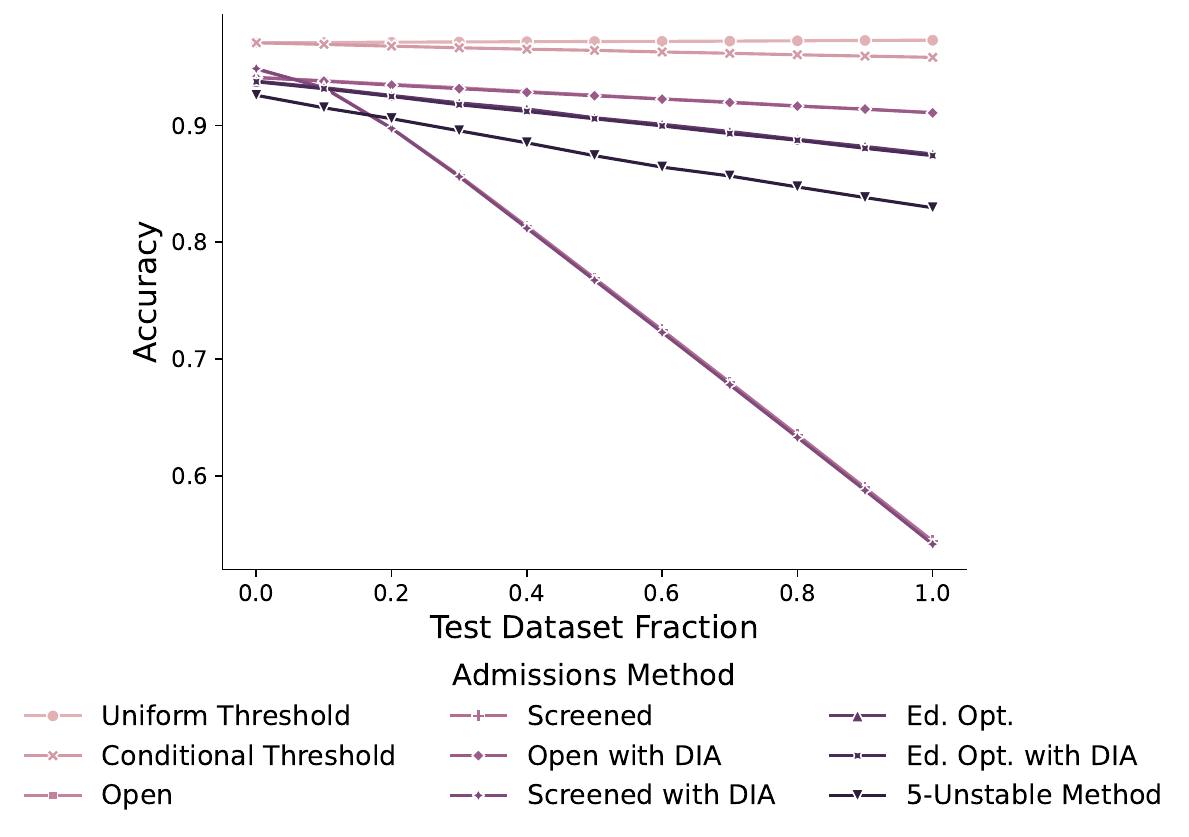}
\caption{Replicating \Cref{fig:method} with independent discretization.}
\label{fig:argmax-method}
\end{subfigure}
\caption{Replication of empirical performance figures using independent discretization.}
\label{fig:argmax}
\end{figure}

\subsection{Results with Different Model Classes}
\label{sec:models-appendix}

In \Cref{fig:xgb}, we present validation results using a different model class. Recall that our main text results use a simple, $L_2$-regularized logistic regression. To demonstrate that the effects of distribution shift we replicate our results using a more expressive model class. \Cref{fig:xgb} shows our replication results using extreme gradient boosting, as implemented in the package ``xgboost'', with engineered monotonic constraints for individual features where applicable (e.g., a higher Screened grade tier is strictly better). Note that these constraints are \textit{weakly} monotonic; predicted scores can either increase \textit{or remain constant} with monotonic features.

Note that, despite its simplicity, logistic regression has a particularly important advantage in our setting: its weights induce a strictly monotonic structure to features. While this is often a limitation in more complex supervised learning settings with higher-dimensional features, it proves most compatible with our probability ranking approach. A student with a tiebreaker number only marginally better than that of another student still ought to be ranked ahead of the other, whereas model classes that use structures such as decision trees or halfspaces often lack such highly fine-grained and strict monotonicity. As a result, logistic regression produces results robust to the distribution shift in Screened grade tiers, as seen in the main text, while other methods become brittle.

\begin{figure*}[tbh]
\centering
\begin{subfigure}[b]{0.45\textwidth}
    \centering
\includegraphics[width=\textwidth]{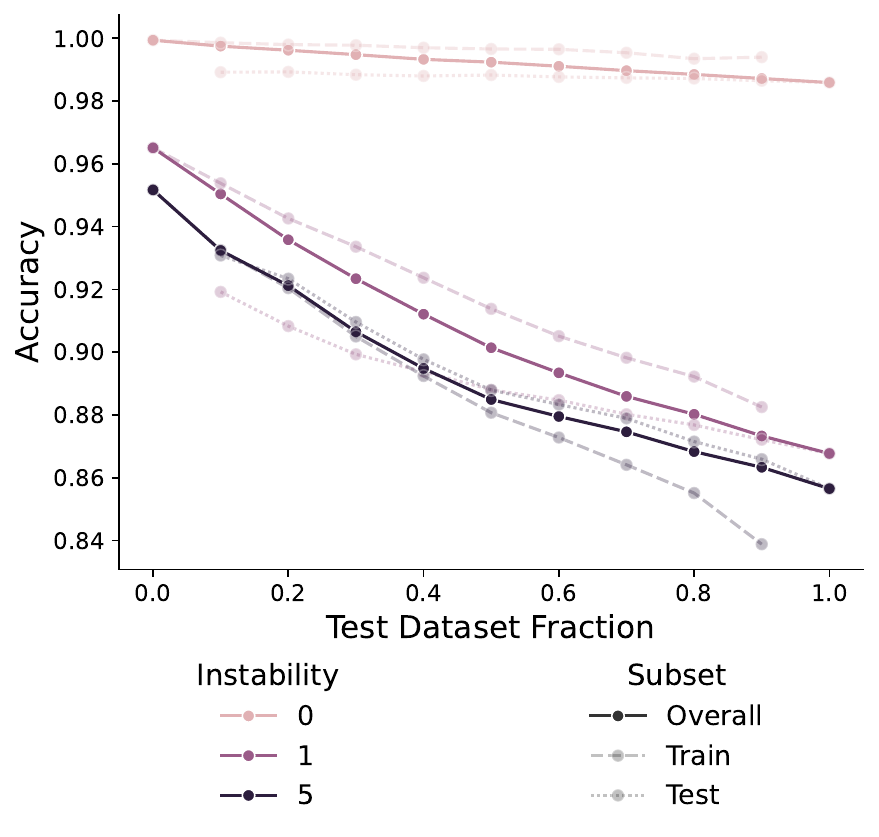}
\caption{Replicating \Cref{fig:instability} with extreme gradient boosting models.}
\label{fig:xgb-instability}
\end{subfigure}
\hfill
\begin{subfigure}[b]{0.45\textwidth}
    \centering
\includegraphics[width=\textwidth]{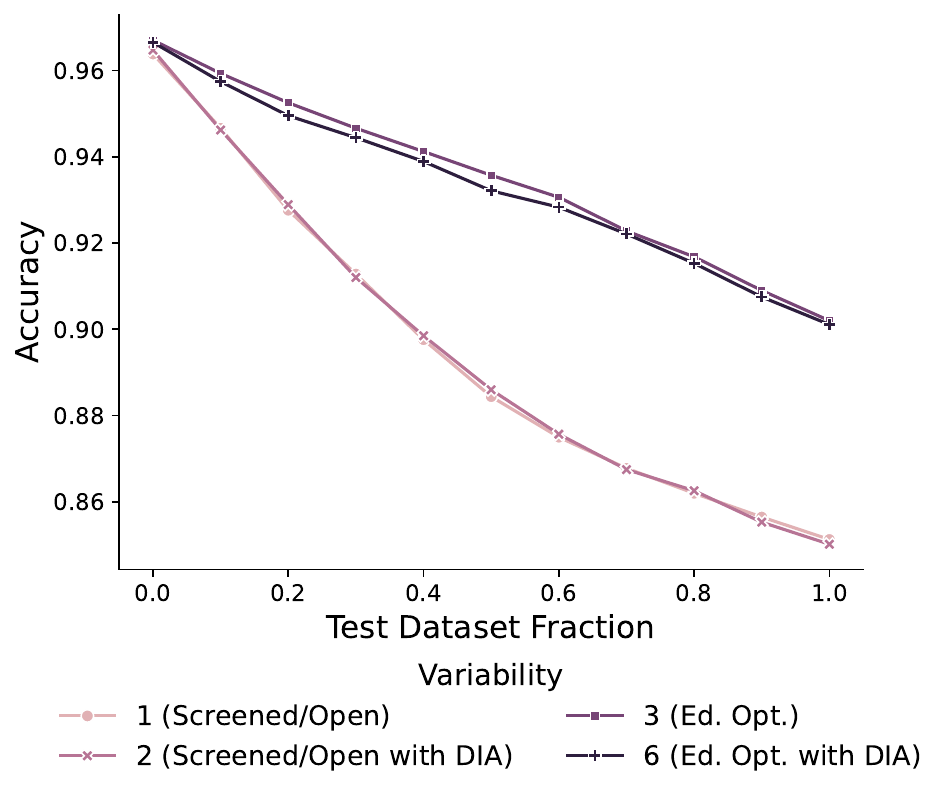}
\caption{Replicating \Cref{fig:variability} with extreme gradient boosting models.}
\label{fig:xgb-variability}
\end{subfigure}
\hfill
\begin{subfigure}[b]{0.6\textwidth}
    \centering
\includegraphics[width=\textwidth]{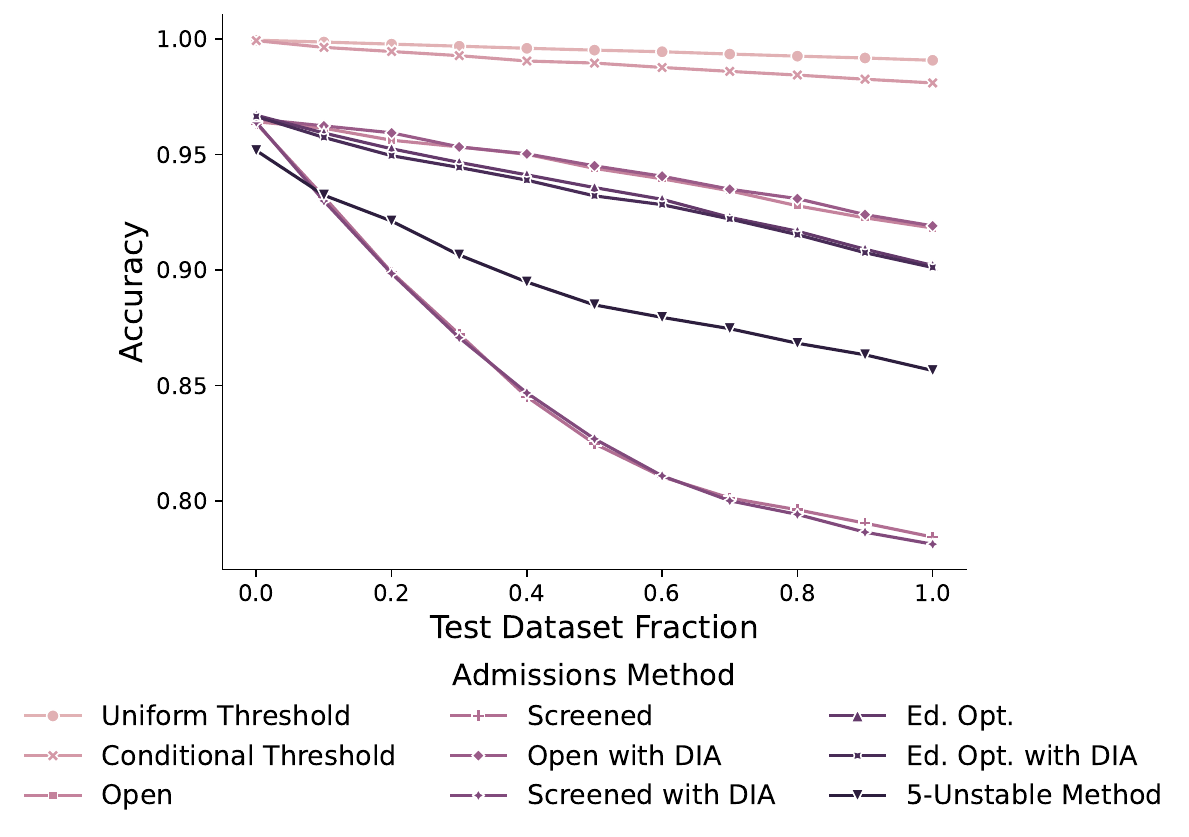}
\caption{Replicating \Cref{fig:method} with extreme gradient boosting models.}
\label{fig:xgb-method}
\end{subfigure}
\caption{Replication of performance figures with extreme gradient boosting models.}
\label{fig:xgb}
\end{figure*}

\subsection{Results with Flipped Training and Testing Years}
\label{sec:rev-appendix}

In this section, we present validation results when using the 2022-2023 admissions year as the training distribution and using the 2021-2022 admissions year as the test distribution. Due to the distribution shift between years, and asymmetries in difficulty learning, it is possible that models trained on one year may generalize to the other year better than in the other direction. We show in \Cref{fig:rev} that all results qualitatively replicate when reversing the training and testing years.

\begin{figure*}[tbh]
\centering
\begin{subfigure}[b]{0.45\textwidth}
    \centering
\includegraphics[width=\textwidth]{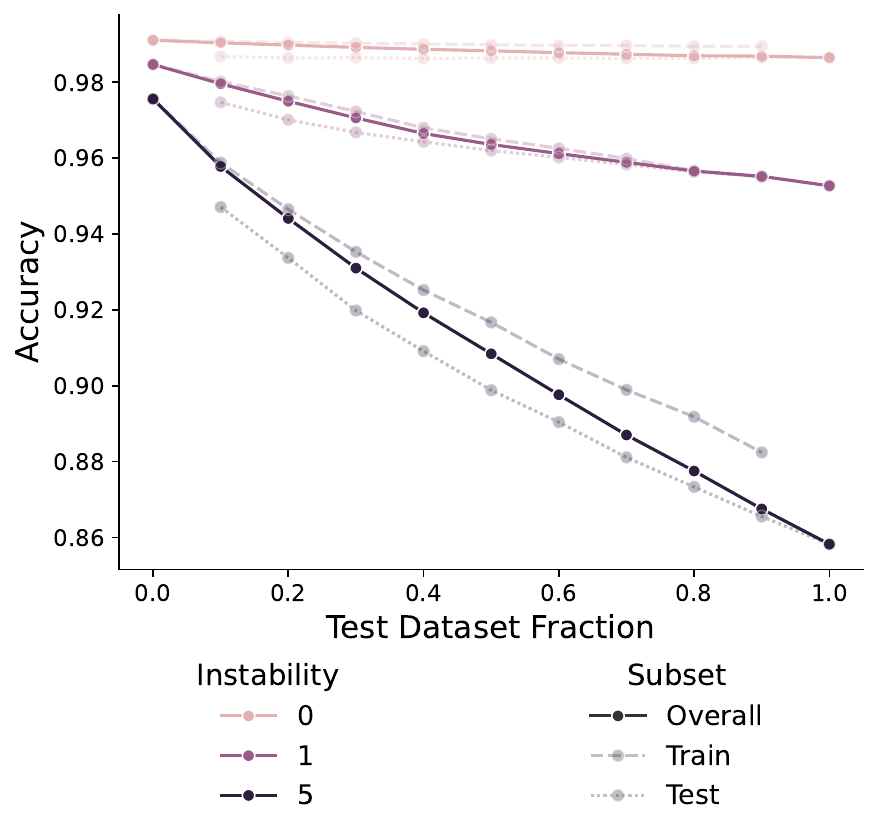}
\caption{Replicating \Cref{fig:instability} with reversed train and test years.}
\label{fig:rev-instability}
\end{subfigure}
\hfill
\begin{subfigure}[b]{0.45\textwidth}
    \centering
\includegraphics[width=\textwidth]{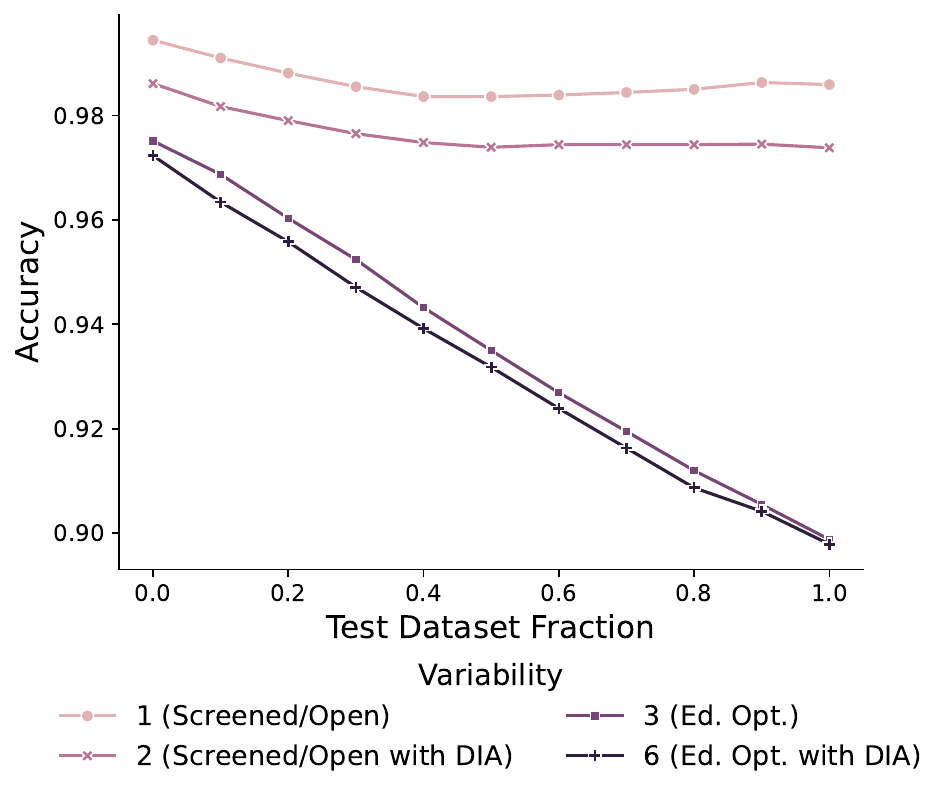}
\caption{Replicating \Cref{fig:variability} with reversed train and test years.}
\label{fig:rev-variability}
\end{subfigure}
\hfill
\begin{subfigure}[b]{0.6\textwidth}
    \centering
\includegraphics[width=\textwidth]{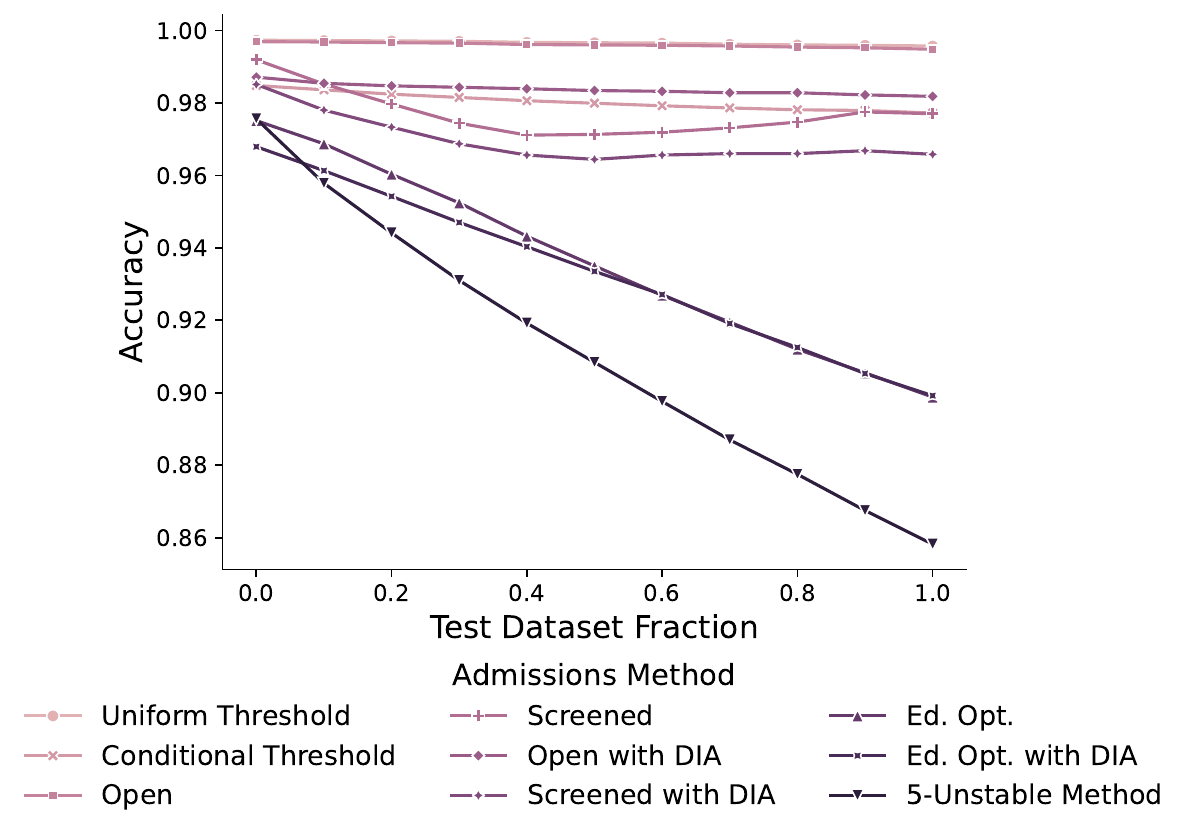}
\caption{Replicating \Cref{fig:method} with reversed train and test years.}
\label{fig:rev-method}
\end{subfigure}
\caption{Replication of performance figures with reversed train and test years.}
\label{fig:rev}
\end{figure*}

\subsection{Results with Fixed Dataset Sizes and Capacity}
\label{sec:fixed-appendix}

In this section, we present validation results when fixing the size and capacity of every interpolated dataset to be exactly the same. We show that all results qualitatively replicate in \Cref{fig:fixed}.

\begin{figure*}[tbh]
\centering
\begin{subfigure}[b]{0.45\textwidth}
    \centering
\includegraphics[width=\textwidth]{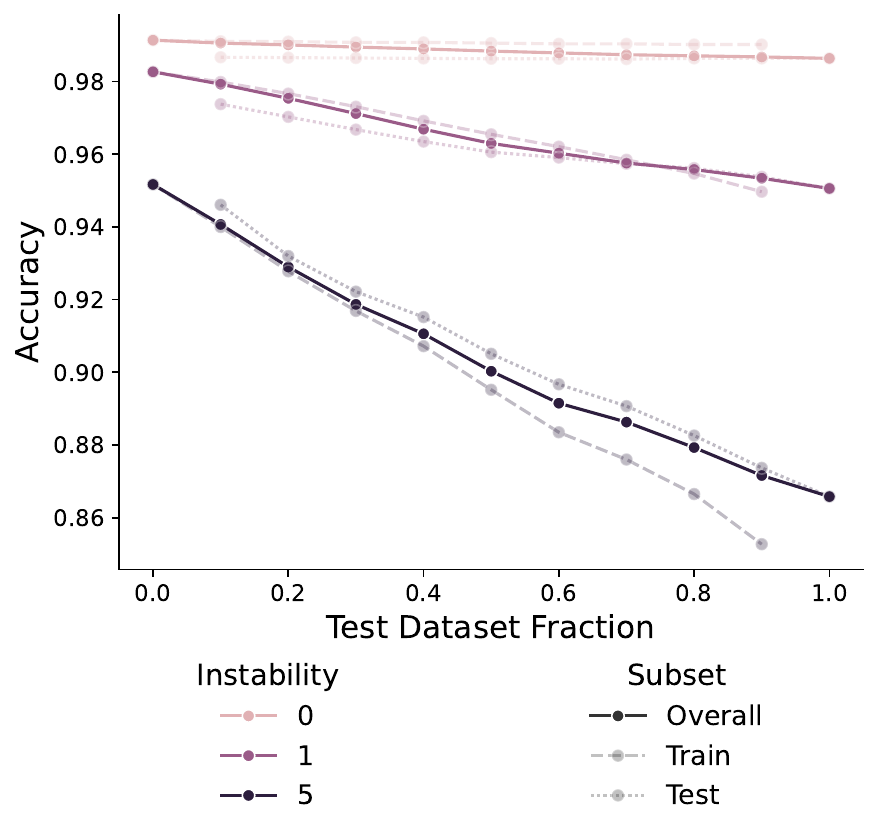}
\caption{Replicating \Cref{fig:instability} with fixed dataset and capacity sizes.}
\label{fig:fixed-instability}
\end{subfigure}
\hfill
\begin{subfigure}[b]{0.45\textwidth}
    \centering
\includegraphics[width=\textwidth]{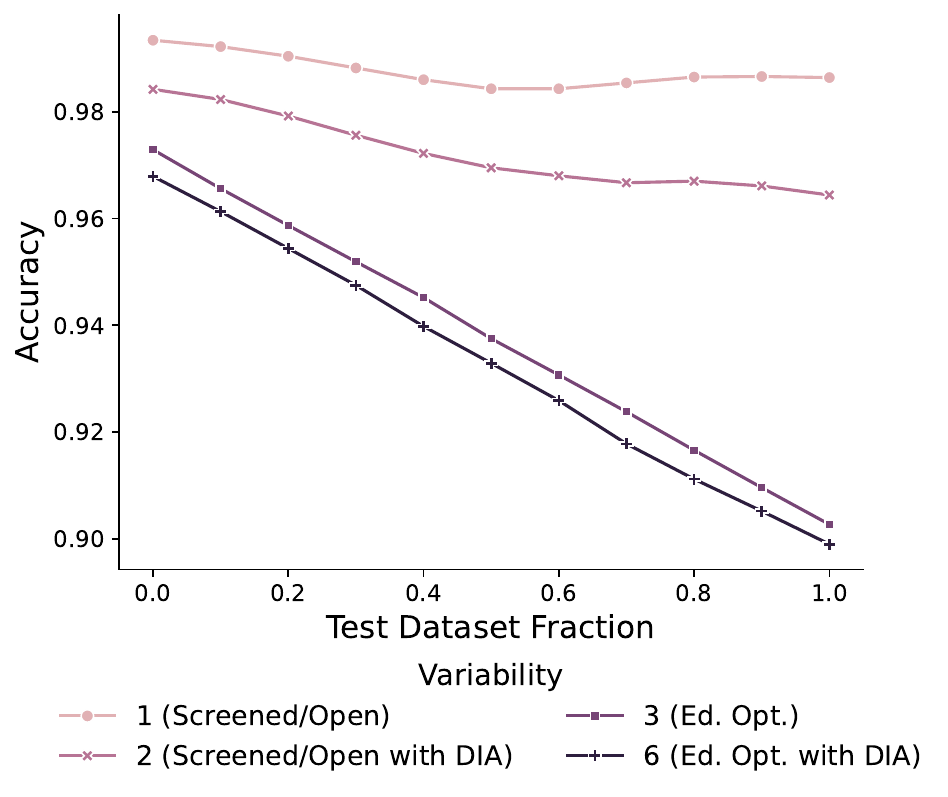}
\caption{Replicating \Cref{fig:variability} with dataset and capacity sizes.}
\label{fig:fixed-variability}
\end{subfigure}
\hfill
\begin{subfigure}[b]{0.6\textwidth}
    \centering
\includegraphics[width=\textwidth]{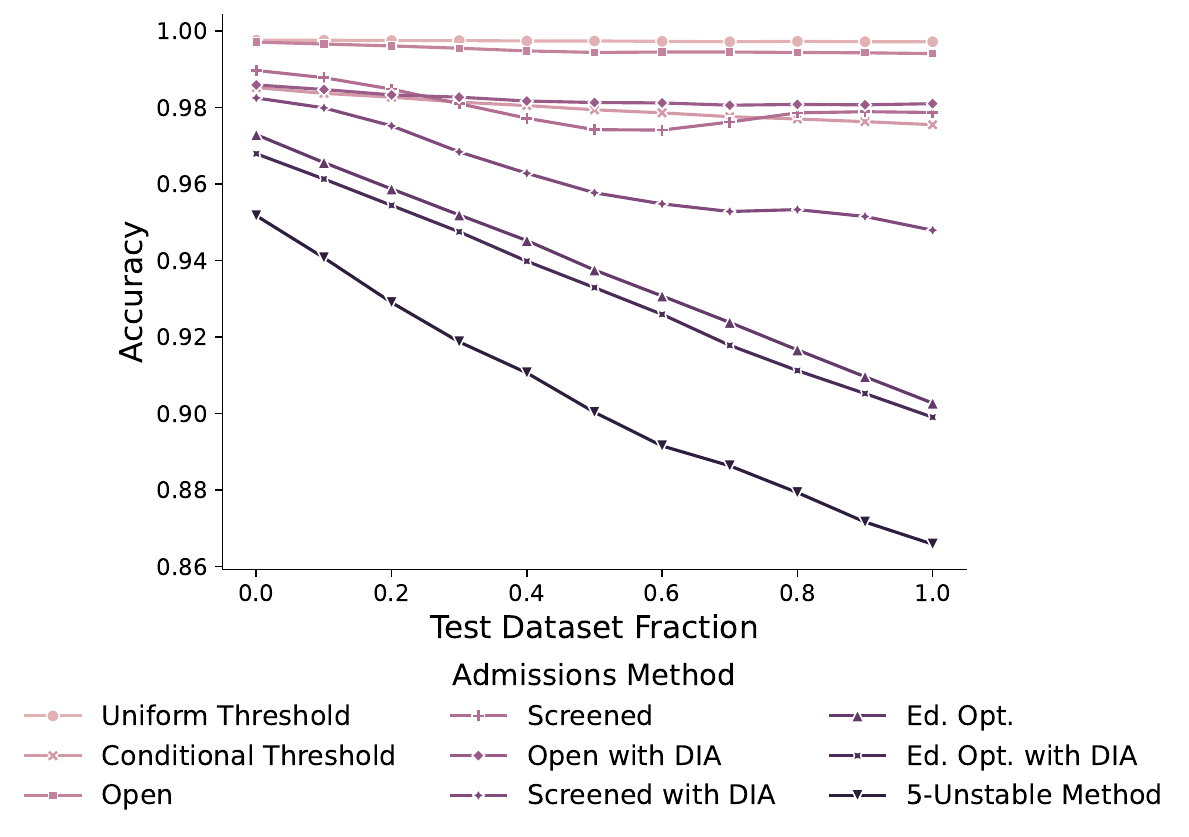}
\caption{Replicating \Cref{fig:method} with dataset and capacity sizes.}
\label{fig:fixed-method}
\end{subfigure}
\caption{Replication of performance figures with dataset and capacity sizes.}
\label{fig:fixed}
\end{figure*}

\subsection{Results without Borough or Continuing Student Priority}
\label{sec:nopriority-appendix}

Our theoretical and empirical analyses focus on the impact of instability and variability. However, the real admissions process is more complex, and our simulator also captures a variety of preferences that do not affect the instability or variability of an admissions choice function --- namely, borough and continuing student priorities. For this reason, we replicate our results using a simplified form of our simulator that ignores these priority preferences. \Cref{fig:nopriority} presents the results of our experiment, showing that more unstable and variable functions become harder to learn, while simpler admissions methods become easier.

\begin{figure*}[tbh]
\centering
\begin{subfigure}[b]{0.45\textwidth}
    \centering
\includegraphics[width=\textwidth]{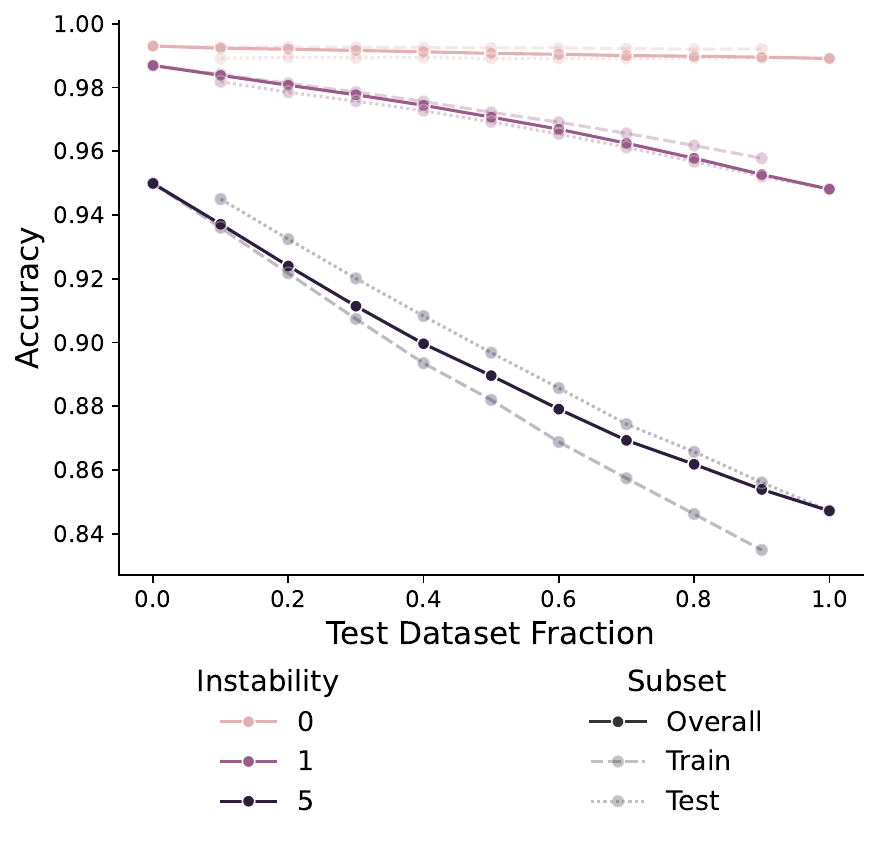}
\caption{Replicating \Cref{fig:instability} with borough and continuing student priority excluded.}
\label{fig:nopriority-instability}
\end{subfigure}
\hfill
\begin{subfigure}[b]{0.45\textwidth}
    \centering
\includegraphics[width=\textwidth]{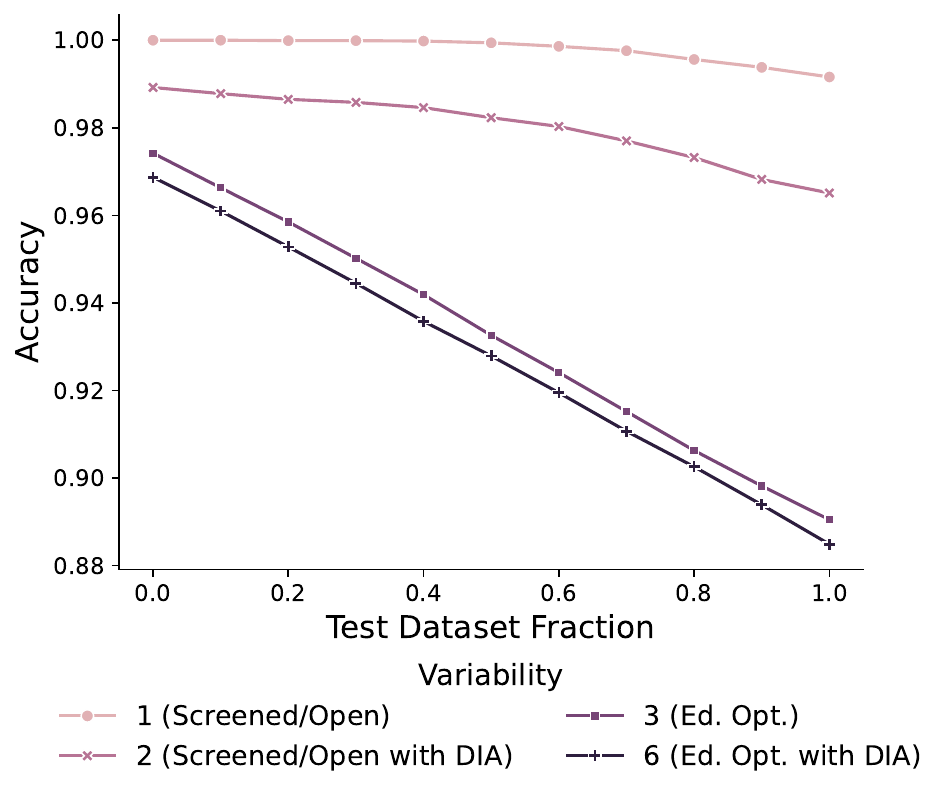}
\caption{Replicating \Cref{fig:variability} with borough and continuing student priority excluded.}
\label{fig:nopriority-variability}
\end{subfigure}
\hfill
\begin{subfigure}[b]{0.6\textwidth}
    \centering
\includegraphics[width=\textwidth]{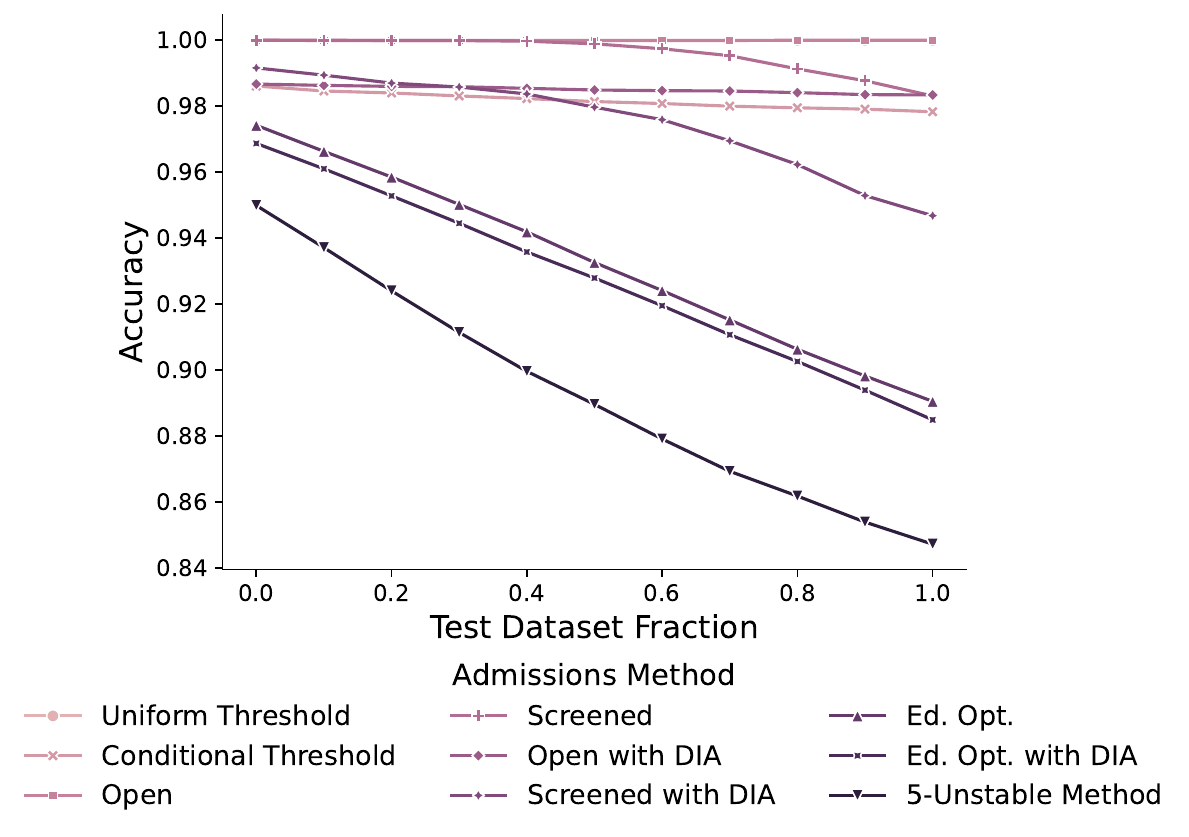}
\caption{Replicating \Cref{fig:method} with borough and continuing student priority excluded.}
\label{fig:nopriority-method}
\end{subfigure}
\caption{Replication of performance figures with borough and continuing student priority excluded.}
\label{fig:nopriority}
\end{figure*}

\subsection{Publicly Available Synthetic Dataset}\label{sec:synth-appendix}

While the original data in our experiments cannot be shared, we are able to provide a synthetic data generator based on aggregated applicant distributions. This generator is included with our replication code at \url{https://github.com/evan-dong/admissions-prediction}.

Our data generator jointly generates three core features most correlated and critical to admission: grade tiers, Ed. Opt. categories, and DIA-qualification. These distributions are conditioned on the combination of admissions method (Open, Screened, and Ed. Opt.) and DIA participation (with or without DIA participation); for every such combination, we aggregate applicant pools across all programs of this type and add noise to create the base distribution. For schools, we draw on the 2021 NYC high school program directory, which is publicly available via the NYC open data portal. \Cref{fig:synth}shows that our results broadly replicate in this synthetic setting. 

\begin{figure*}[tbh]
\centering
\begin{subfigure}[b]{0.45\textwidth}
    \centering
\includegraphics[width=\textwidth]{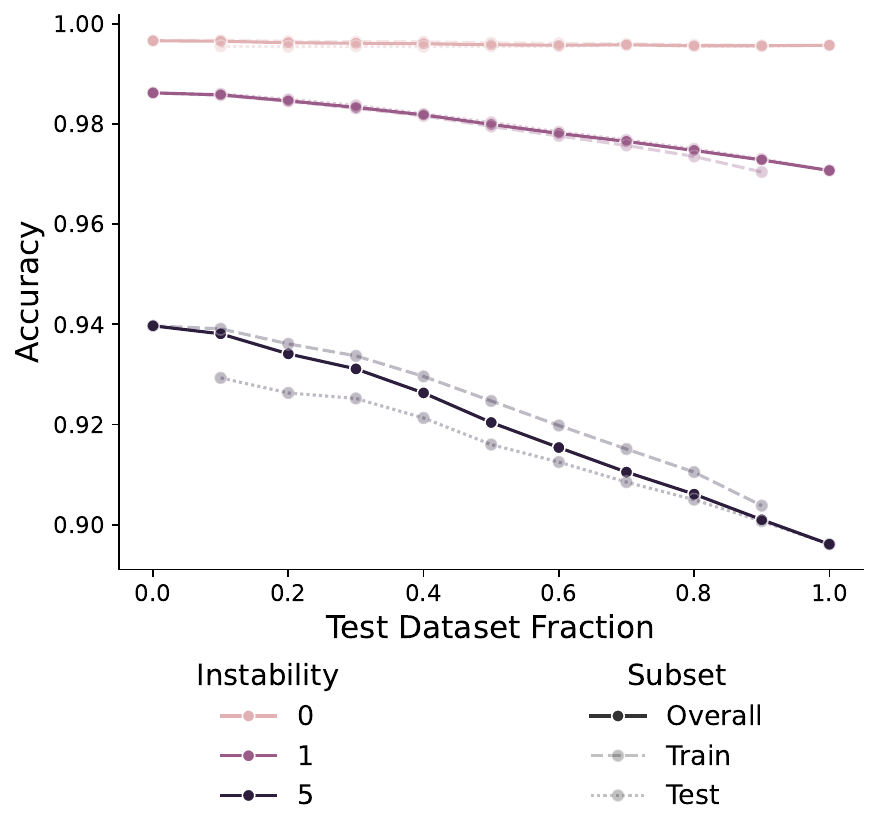}
\caption{Replicating \Cref{fig:instability} with synthetically generated data.}
\label{fig:synth-instability}
\end{subfigure}
\hfill
\begin{subfigure}[b]{0.45\textwidth}
    \centering
\includegraphics[width=\textwidth]{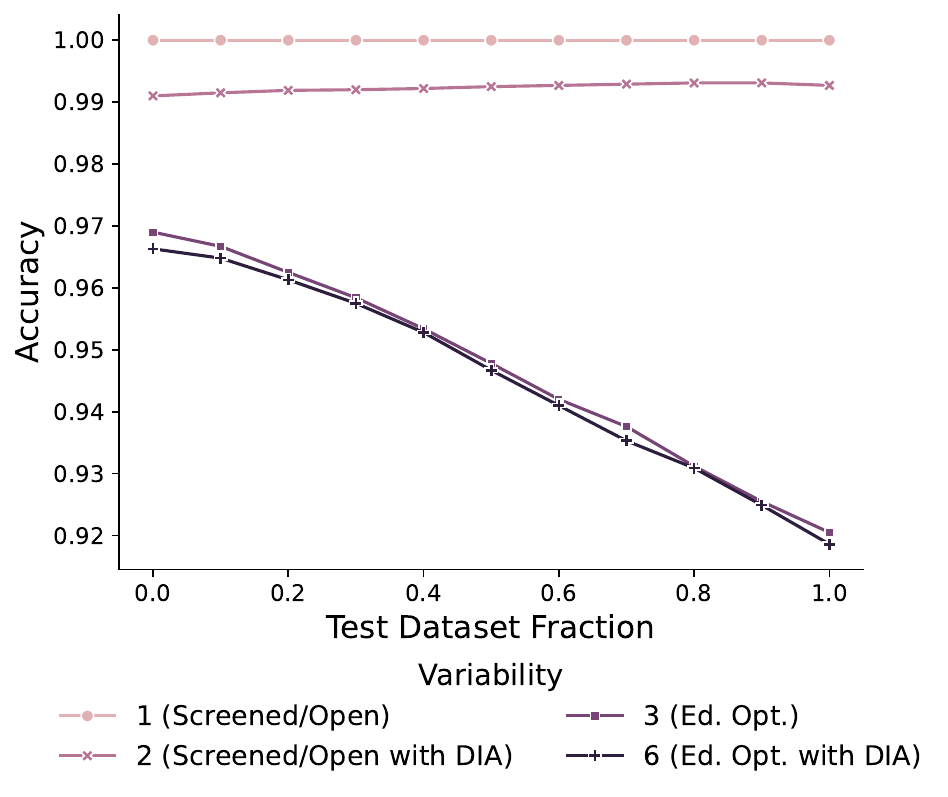}
\caption{Replicating \Cref{fig:variability} with synthetically generated data.}
\label{fig:synth-variability}
\end{subfigure}
\hfill
\begin{subfigure}[b]{0.6\textwidth}
    \centering
\includegraphics[width=\textwidth]{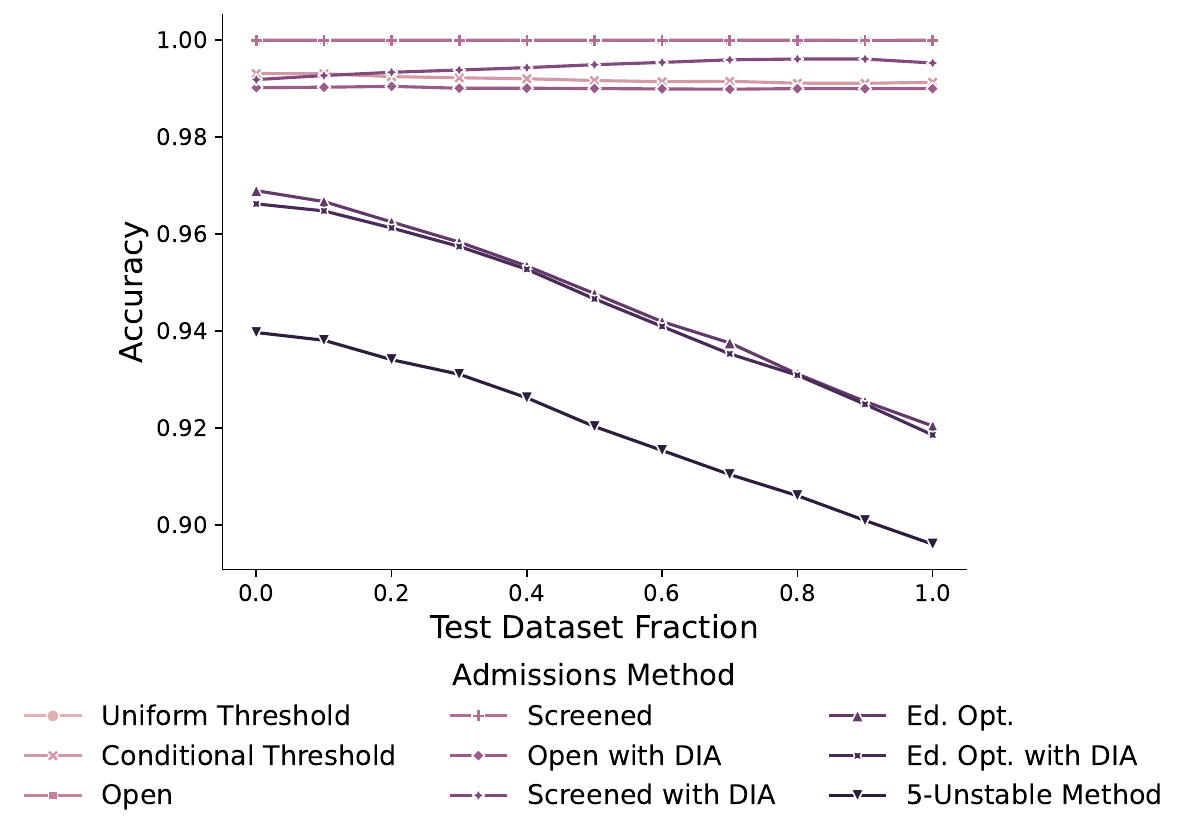}
\caption{Replicating \Cref{fig:method} with synthetically generated data.}
\label{fig:synth-method}
\end{subfigure}
\caption{Replication of performance figures with synthetically generated data.}
\label{fig:synth}
\end{figure*}

\subsection{Results with In-Distribution Data}
\label{sec:id-appendix}

Using the synthetic data generator introduced in \Cref{sec:synth-appendix}, we are able to generate multiple samples of in-distribution data. \Cref{fig:id} presents the results of our experiment when the training and testing data are drawn from the same distribution. Note that higher-variability and higher-instability choice functions still experience substantial performance decay.

\begin{figure*}[tbh]
\centering
\begin{subfigure}[b]{0.45\textwidth}
    \centering
\includegraphics[width=\textwidth]{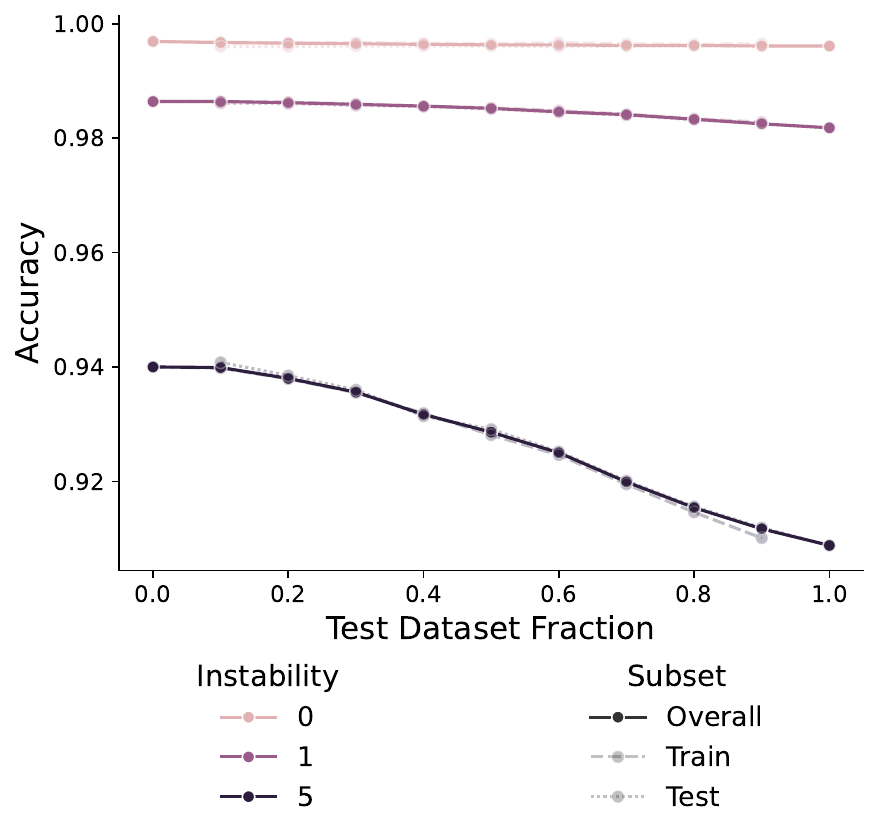}
\caption{Replicating \Cref{fig:instability} tested on in-distribution synthetically generated data.}
\label{fig:id-instability}
\end{subfigure}
\hfill
\begin{subfigure}[b]{0.45\textwidth}
    \centering
\includegraphics[width=\textwidth]{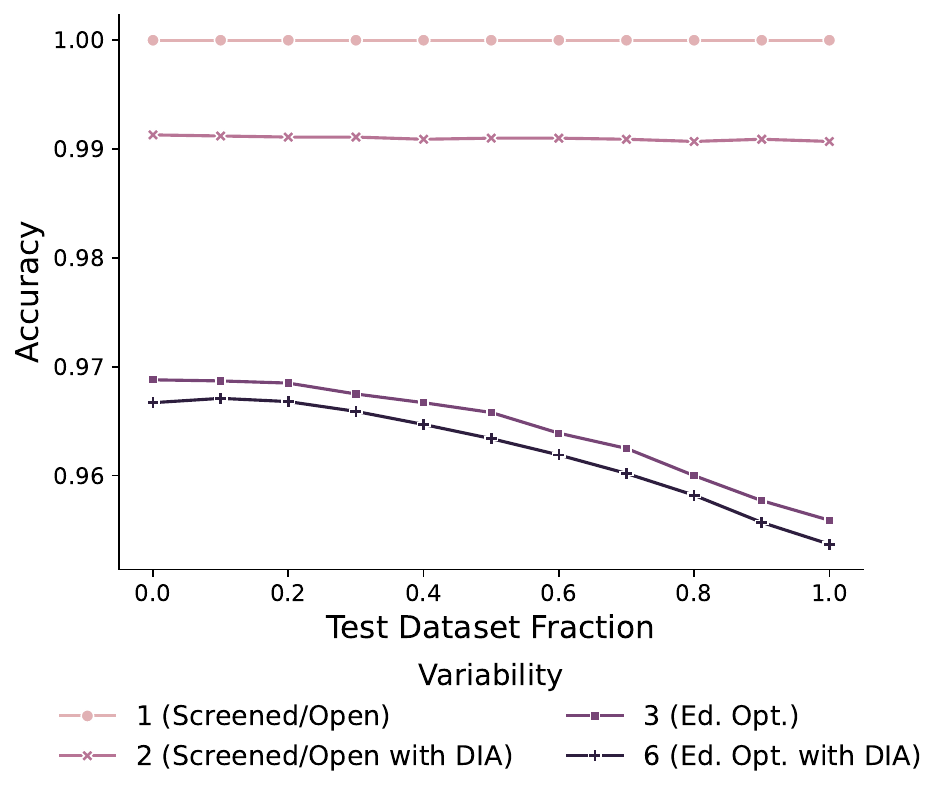}
\caption{Replicating \Cref{fig:variability} tested on in-distribution synthetically generated data.}
\label{fig:id-variability}
\end{subfigure}
\hfill
\begin{subfigure}[b]{0.6\textwidth}
    \centering
\includegraphics[width=\textwidth]{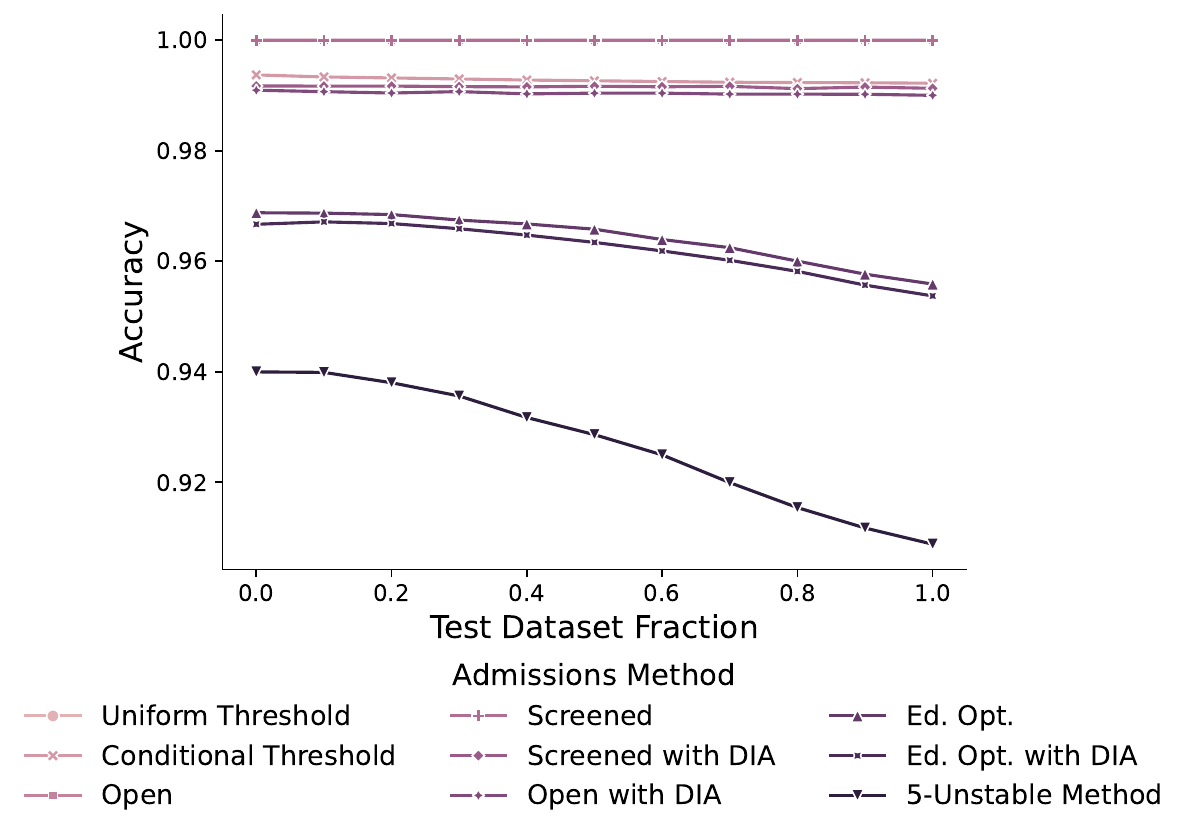}
\caption{Replicating \Cref{fig:method} tested on in-distribution synthetically generated data.}
\label{fig:id-method}
\end{subfigure}
\caption{Replication of performance figures tested on in-distribution synthetically generated data.}
\label{fig:id}
\end{figure*}

\end{document}